\tikzset{every picture/.style={remember picture}}
\DeclareMathOperator*{\argmin}{arg\,min}
\DeclareMathOperator*{\sgn}{sgn}
\DeclareMathOperator*{\avg}{avg}
\DeclareMathOperator*{\HD}{HD}
\DeclareMathOperator*{\OFG}{\pi_{\mathrm{o}}}
\DeclareMathOperator*{\set}{{\tilde{\pi}}}
\definecolor{myskyblue}{RGB}{39,156,191}
\definecolor{myred}{RGB}{245, 57, 61}
\definecolor{mygreen}{RGB}{21, 165, 112}
\definecolor{myblued}{RGB}{0,114,189}
\definecolor{mypurple}{RGB}{126,47,142}
\definecolor{myyellow}{RGB}{243,165,93}
\definecolor{tablepipe}{RGB}{138,222,144}
\definecolor{tableread}{RGB}{240,240,240}
\definecolor{tablewrite}{RGB}{195,195,195}
\definecolor{tablew_r}{RGB}{150,150,150}
\def\subparagraph{} 
\newcommand{\mytextsf}[1]{{\small\textsf{#1}}}
\newcommand{\andreas}[1]{\textcolor{green}{ANDREAS: #1}}
\newcommand{\yuqing}[1]{\textcolor{purple}{YUQING: #1}}
\renewcommand{\yuqing}[1]{}
\renewcommand{\andreas}[1]{}
\newcommand{\pf}[1]{\textcolor{blue}{#1}}
\renewcommand{\pf}[1]{{#1}}
\definecolor{bostonunired}{HTML}{CC0000}    
\definecolor{fireenginered}{HTML}{CF202A}   
\definecolor{vermillionred}{HTML}{E34234}   
\definecolor{titlered}{RGB}{212,0,0}
\definecolor{reddeep}{RGB}{178,24,43}
\definecolor{redlight}{RGB}{252,78,42}
\definecolor{redlight2}{RGB}{255,114,111}
\definecolor{redbright}{RGB}{255,24,43}
\definecolor{perfectgreen}{HTML}{4FBF26}    
\definecolor{maygreen}{HTML}{4E9B47}        
\definecolor{mattelime}{HTML}{75AD50}       
\definecolor{indigorainbow}{HTML}{1D3F6E}   
\definecolor{royalazure}{HTML}{1866E1}      
\definecolor{azure}{HTML}{008AFF}           
\definecolor{blueberry}{HTML}{4F86F7}       
\definecolor{fadednavy}{HTML}{242F78}       
\definecolor{navy}{HTML}{000080}            
\definecolor{goldwebgolden}{HTML}{FFD700}   
\definecolor{radioactive}{HTML}{FAE500}     
\definecolor{gold}{HTML}{FFD700}            
\definecolor{vividgamboge}{HTML}{FF9900} 
\definecolor{shinygray}{HTML}{C7C6C6}
\definecolor{slategrey}{HTML}{708090}   
\definecolor{neutralgray}{HTML}{828382} 
\definecolor{mattecharcoal}{HTML}{3B4248} 
\definecolor{graydark}{HTML}{6B6E70} 
\definecolor{charcoal}{HTML}{36454F} 
\definecolor{charcoalShade1}{HTML}{2B373F}
\definecolor{charcoalShade2}{HTML}{263037}
\definecolor{charcoalShade3}{HTML}{20292F}
\definecolor{sunset1}{HTML}{F3E79B}
\definecolor{sunset2}{HTML}{FAC484}
\definecolor{sunset3}{HTML}{F8A07E}
\definecolor{sunset4}{HTML}{EB7F86}
\definecolor{sunset5}{HTML}{CE6693}
\definecolor{sunset6}{HTML}{A059A0}
\definecolor{sunset7}{HTML}{5C53A5}
\definecolor{bluyi1}{HTML}{F7FEAE}
\definecolor{bluyi2}{HTML}{B7E6A5}
\definecolor{bluyi3}{HTML}{7CCBA2}
\definecolor{bluyi4}{HTML}{46AEA0}
\definecolor{bluyi5}{HTML}{089099}
\definecolor{bluyi6}{HTML}{00718B}
\definecolor{bluyi7}{HTML}{045275}
\definecolor{geyser1}{HTML}{008080}
\definecolor{geyser2}{HTML}{70A494}
\definecolor{geyser3}{HTML}{B4C8A8}
\definecolor{geyser4}{HTML}{F6EDBD}
\definecolor{geyser5}{HTML}{EDBB8A}
\definecolor{geyser6}{HTML}{DE8A5A}
\definecolor{geyser7}{HTML}{CA562C}
\definecolor{temps1}{HTML}{009392}
\definecolor{temps2}{HTML}{39B185}
\definecolor{temps3}{HTML}{9CCb86}
\definecolor{temps4}{HTML}{E9E29C}
\definecolor{temps5}{HTML}{EEB479}
\definecolor{temps6}{HTML}{E88471}
\definecolor{temps7}{HTML}{CF597E}
\definecolor{earth1}{HTML}{A16928}
\definecolor{earth2}{HTML}{BD925A}
\definecolor{earth3}{HTML}{D6BD8D}
\definecolor{earth4}{HTML}{EDEAC2}
\definecolor{earth5}{HTML}{B5C8B8}
\definecolor{earth6}{HTML}{79A7AC}
\definecolor{earth7}{HTML}{2887A1}
\definecolor{bold1}{HTML}{7F3C8D}
\definecolor{bold2}{HTML}{11A579}
\definecolor{bold3}{HTML}{3969AC}
\definecolor{bold4}{HTML}{F2B701}
\definecolor{bold5}{HTML}{E73F74}
\definecolor{bold6}{HTML}{80BA5A}
\definecolor{bold7}{HTML}{E68310}
\definecolor{bold8}{HTML}{008695}
\definecolor{bold9}{HTML}{CF1C90}
\definecolor{bold10}{HTML}{F97B72}
\definecolor{bold11}{HTML}{4B4B8F}
\definecolor{bold12}{HTML}{A5AA99}
\definecolor{pastel1}{HTML}{66C5CC}
\definecolor{pastel2}{HTML}{F6CF71}
\definecolor{pastel3}{HTML}{F89C74}
\definecolor{pastel4}{HTML}{DCB0F2}
\definecolor{pastel5}{HTML}{87C55F}
\definecolor{pastel6}{HTML}{9EB9F3}
\definecolor{pastel7}{HTML}{FE88B1}
\definecolor{pastel8}{HTML}{C9DB74}
\definecolor{pastel9}{HTML}{8BE0A4}
\definecolor{pastel10}{HTML}{B497E7}
\definecolor{pastel11}{HTML}{D3B484}
\definecolor{pastel12}{HTML}{B3B3B3}
\definecolor{prism1}{HTML}{5F4690}
\definecolor{prism2}{HTML}{1D6996}
\definecolor{prism3}{HTML}{38A6A5}
\definecolor{prism4}{HTML}{0F8554}
\definecolor{prism5}{HTML}{73AF48}
\definecolor{prism6}{HTML}{EDAD08}
\definecolor{prism7}{HTML}{E17C05}
\definecolor{prism8}{HTML}{CC503E}
\definecolor{prism9}{HTML}{94346E}
\definecolor{prism10}{HTML}{6F4070}
\definecolor{prism11}{HTML}{994E95}
\definecolor{prism12}{HTML}{666666}
\definecolor{vivid1}{HTML}{E58606}
\definecolor{vivid2}{HTML}{5D69B1}
\definecolor{vivid3}{HTML}{52BCA3}
\definecolor{vivid4}{HTML}{99C945}
\definecolor{vivid5}{HTML}{CC61B0}
\definecolor{vivid6}{HTML}{24796C}
\definecolor{vivid7}{HTML}{DAA51B}
\definecolor{vivid8}{HTML}{2F8AC4}
\definecolor{vivid9}{HTML}{764E9F}
\definecolor{vivid10}{HTML}{ED645A}
\definecolor{vivid11}{HTML}{CC3A8E}
\definecolor{vivid12}{HTML}{A5AA99}
\definecolor{cBrewerPaired1}{HTML}{A6CEE3}
\definecolor{cBrewerPaired2}{HTML}{1F78B4}
\definecolor{cBrewerPaired3}{HTML}{B2DF8A}
\definecolor{cBrewerPaired4}{HTML}{33A02C}
\definecolor{cBrewerPaired5}{HTML}{FB9A99}
\definecolor{cBrewerPaired6}{HTML}{E31A1C}
\definecolor{cBrewerPaired7}{HTML}{FDBF6F}
\definecolor{cBrewerPaired8}{HTML}{FF7F00}
\definecolor{cBrewerPaired9}{HTML}{CAB2D6}
\definecolor{cBrewerPaired10}{HTML}{6A3D9A}
\definecolor{cBrewerPaired11}{HTML}{FFFF99}
\definecolor{cBrewerPaired12}{HTML}{B15928}
\definecolor{cBrewerQualPrint1}{HTML}{E41A1C}
\definecolor{cBrewerQualPrint2}{HTML}{377EB8}
\definecolor{cBrewerQualPrint3}{HTML}{4DAF4A}
\definecolor{cBrewerQualPrint4}{HTML}{984EA3}
\definecolor{cBrewerQualPrint5}{HTML}{FF7F00}
\definecolor{cBrewerQualPrint6}{HTML}{FFFF33}
\definecolor{cBrewerQualPrint7}{HTML}{A65628}
\definecolor{cBrewerQualPrint8}{HTML}{F781BF}
\definecolor{cBrewerQualPrint9}{HTML}{999999}
\definecolor{thsViolet1}{HTML}{E4C7F1} 
\definecolor{thsViolet2}{HTML}{9F82CE} 
\definecolor{thsViolet3}{HTML}{4B4B8F} 
\definecolor{thsPink1}{HTML}{DCB0F2} 
\definecolor{thsPink2}{HTML}{FE88B1} 
\definecolor{thsPink3}{HTML}{CC3A8E} 
\definecolor{thsBlue1}{HTML}{4F86F7} 
\definecolor{thsBlue2}{HTML}{0F52BA} 
\definecolor{thsBlue3}{HTML}{332288} 
\definecolor{thsCyan1}{HTML}{88CCEE} 
\definecolor{thsCyan2}{HTML}{66C5CC} 
\definecolor{thsCyan3}{HTML}{367588} 
\definecolor{thsGreen1}{HTML}{4FBF26} 
\definecolor{thsGreen2}{HTML}{11A579} 
\definecolor{thsGreen3}{HTML}{0F8554} 
\definecolor{thsYellow1}{HTML}{FFFF66} 
\definecolor{thsYellow2}{HTML}{FAE500} 
\definecolor{thsYellow2}{HTML}{FFFF00}
\definecolor{thsOrange1}{HTML}{ECDA9A} 
\definecolor{thsOrange2}{HTML}{F7945D} 
\definecolor{thsOrange3}{HTML}{E58606} 
\definecolor{thsRed1}{HTML}{CC503E} 
\definecolor{thsRed2}{HTML}{CC0000} 
\definecolor{thsRed3}{HTML}{670E10} 
\definecolor{thsGray1}{HTML}{B1B6B7} 
\definecolor{thsGray2}{HTML}{828382} 
\definecolor{thsGray3}{HTML}{36454F} 
\begin{document}
	\title{High-Throughput Flexible Belief Propagation\\List Decoder for Polar Codes}
\def\figureautorefname{Fig.}
\newtheorem{Thm}{Theorem}
\newtheorem{Pro}{Proof}
\newtheorem{Lem}{Lemma}
\newtheorem{Cor}{Corollary}
\newtheorem{Def}{Definition}
\newtheorem{Exam}{Example}
\newtheorem{Alg}{Algorithm}
\newtheorem{Prob}{Problem}
\newtheorem{Rem}{Remark}
\newcounter{mytempeqncnt}
\renewcommand\thesubsection{\thesection.\Alph{subsection}}
\newcommand\mybox[2][]{\tikz[overlay,region/.style={draw=black}]\node[region, fill=blue!20, inner sep=2pt, anchor=text, rectangle, rounded corners=0mm,#1] {#2};\phantom{#2}}

\newcommand{\blue}{\textcolor[rgb]{0,0,1}}
\newcommand{\red}{\textcolor[rgb]{1.00,0.00,0.00}}

\author{
	Yuqing Ren,
	Yifei Shen,
	Leyu Zhang,
	Andreas Toftegaard Kristensen,
  	Alexios Balatsoukas-Stimming,~\IEEEmembership{Member,~IEEE,}
  	Andreas Burg,~\IEEEmembership{Senior~Member,~IEEE},
	Chuan Zhang,~\IEEEmembership{Senior~Member,~IEEE}\looseness=-1
	\thanks{Y. Ren, Y. Shen, A. T. Kristensen, and A. Burg are with the Telecommunications Circuits Laboratory (TCL),
		\'{E}cole Polytechnique F\'{e}d\'{e}rale de Lausanne (EPFL), Lausanne 1015, Switzerland (email: \{yuqing.ren, yifei.shen, andreas.kristensen, andreas.burg\}@epfl.ch).
		\emph{Corresponding author: Andreas~Burg}.}
	\thanks{Y. Shen, L, Zhang and C. Zhang are with the LEADS of Southeast University,
		the National Mobile Communications Research Laboratory,
		and the Purple Mountain Laboratories, Nanjing 210096, China (email: chzhang@seu.edu.cn).}
  	\thanks{A. Balatsoukas-Stimming is with the Department of Electrical Engineering, Eindhoven University of Technology, 5600 MB Eindhoven, The Netherlands (email: a.k.balatsoukas.stimming@tue.nl).}
}

\markboth{}{Y. Ren \MakeLowercase{\textit{et al.}}: High-Throughput Flexible Belief Propagation List Decoders for Polar Codes}
\maketitle

\begin{abstract}
	\pf{Owing to its high parallelism, belief propagation (BP) decoding is highly amenable to high-throughput implementations and thus represents a promising solution for meeting the ultra-high peak data rate of future communication systems.}
	However, for polar codes, the error-correcting performance of BP decoding is far inferior to that of \pf{the} widely used CRC-aided successive cancellation list (SCL) decoding algorithm.
	To close the performance gap to SCL, BP list (BPL) decoding expands the exploration of candidate codewords through multiple permuted factor graphs (PFGs).
	From an implementation perspective, designing a unified and flexible hardware architecture \pf{for} BPL decoding that supports \pf{various PFGs and code configurations presents a big challenge}.
	In this paper, we propose the first hardware implementation of a BPL decoder for polar codes and overcome the implementation challenge by applying a hardware-friendly algorithm that generates flexible permutations on-the-fly.
	First, we derive the \pf{graph selection gain} and provide a sequential generation (SG) algorithm to obtain a near-optimal PFG set.
	We further prove that any permutation can be decomposed into a combination of multiple fixed routings, and we design a low-complexity permutation network to satisfy the decoding schedule.
	Our BPL decoder not only has a low decoding latency by executing the decoding and permutation generation in parallel, but also supports an arbitrary list size without any area overhead.
	Experimental results show that, \pf{for length-$\bm{1024}$ polar codes with a code rate of one-half, our BPL decoder with $32$ PFGs has a similar error-correcting performance to SCL with a list size of $4$} and achieves a throughput of $\bm{25.63}$ Gbps and an area efficiency of $\bm{29.46}$ Gbps/mm$\bm{^2}$ at SNR~$\bm{=4.0}$~dB, \pf{which is $1.82\times$ and $4.33\times$ faster than the state-of-the-art BP flip and SCL decoders,~respectively}.

\end{abstract}
\begin{IEEEkeywords}
	polar codes, high-throughput, belief propagation list (BPL) decoding, permuted factor graph, permutation, automorphism ensemble, hardware implementation.
\end{IEEEkeywords}

\section{Introduction}\label{sec:intro}
\IEEEPARstart{P}{olar} codes, proposed by Ar{\i}kan in~\cite{Arikan2008Channel}, have become an integral part of 5G new radio (NR), where they were ratified as the standard codes for the control channels of 5G enhanced mobile broadband (eMBB) scenarios~\cite{5Gembb}.
Along with the invention of polar codes, Ar{\i}kan introduced successive cancellation (SC) decoding and belief propagation (BP) decoding.
Following the evolution of communication scenarios, both SC and BP decoding led the development of polar decoding algorithms and implementations, which were extended into a series of advanced polar decoders such as SC list (SCL)~\cite{tal2015list,niu2012crc,hashemi2016TCAS1,FastSSCL2017,hanif2018fast,shen2022,ren2022sequence}, BP list (BPL)~\cite{Ahmed2018,elkelesh2018belief,A2019Com,Ren2019ASICON,Ahmed2020CRCBPL,Ren2020,Bai2020ISIT,Nghia2018,feng2021novel}, and BP flip (BPF)~\cite{yu2019belief,Shen2020Tcas,shen2020improved,Ji2020TCAS1} decoders.

While the original SC decoding algorithm can achieve channel capacity at infinite code lengths, it shows poor error-correcting performance with practical finite code lengths.
To improve the error-correcting performance of SC decoding, SCL decoding was proposed in~\cite{tal2015list} to keep a list of up to $\mathbb{L}$ candidate codewords.
Additionally, when concatenated with cyclic redundancy check (CRC) codes~\cite{niu2012crc}, polar codes with SCL decoding outperform low-density parity-check (LDPC) and Turbo codes in terms of the error-correcting performance~\cite{Stimming2017WCNC}.
To satisfy the low-latency and high throughput requirements of eMBB scenarios, node-based fast SCL decoders~\cite{hashemi2016TCAS1,FastSSCL2017,hanif2018fast,shen2022,ren2022sequence} focus on exploiting special constituent codes~\cite{alamdar2011simplified,sarkis2014fast,hanif2017fast,condo2018generalized,zheng2021threshold}, which help to avoid traversing the lower stages of the decoding tree to provide a significant reduction in decoding latency compared to conventional bit-wise SCL decoders.
The state-of-the-art (SOA) node-based SCL decoder~\cite{ren2022sequence} with a list size ($L=8$) achieves a throughput of more than $2.94$~Gbps, which fits the reliability, latency, and throughput requirements of eMBB scenarios.
However, when considering the ultra-high peak data rate requirements of future communication systems~\cite{you2021towards}, SC-based decoders become impractical due to the serial processing inherent in these algorithms~\cite{hashemi2016TCAS1,FastSSCL2017,hanif2018fast,shen2022,ren2022sequence}.

In contrast to SC-based decoding, BP decoding is an inherently parallel algorithm.
BP decoding can \pf{thus} be implemented easily in a multi-stage factor graph in pursuit of a much higher throughput~\cite{Youn14belief}.
Additionally, BP decoding has the potential to realize iterative detection and decoding to achieve better system performance than separate detection and decoding~\cite{Jing2017WCSP,Amin2020TWC}, which further raises the interest in BP decoding for academia and industry.
Though the error-correcting performance of BP decoding improves as the iteration number increases, it is still far behind the SCL performance.
BPF decoding~\cite{yu2019belief,Shen2020Tcas,shen2020improved,Ji2020TCAS1} and BPL decoding~\cite{Ahmed2018,elkelesh2018belief,A2019Com,Ren2019ASICON,Ahmed2020CRCBPL,Ren2020,Bai2020ISIT,Nghia2018,feng2021novel} are two advanced BP algorithms  that can approach the performance of SCL by expanding the exploration of candidate codewords.
BPF decoding guesses the positions of error-prone bits and sequentially corrects them in additional decoding attempts.
Unfortunately, online identification of error-prone bits~\cite{Shen2020Tcas,shen2020improved,Ji2020TCAS1} through sorting and post-processing of channel messages increases the hardware complexity and degrades the maximum operating frequency~\cite{Ji2020TCAS1}.
Alternatively, BPL decoding proposed in~\cite{elkelesh2018belief} tries to decode on multiple permuted factor graphs (PFGs), where the number of possible PFGs is $n!$ $(n=\log_2{N})$ for length-$N$ polar codes.
Decoding schedules of BPL can be divided into parallel~\cite{elkelesh2018belief,A2019Com,Ren2019ASICON,Ahmed2020CRCBPL} and serial schedules~\cite{Nghia2018}, respectively.
In parallel BPL decoding, $\mathbb{L}$ independent BP decoders operate in parallel (each BP decoder works on a unique PFG) and the optimal codeword with the minimum Euclidean distance to the received signals is selected from the $\mathbb{L}$ identified candidate codewords.
However, parallel BPL decoding has very poor hardware utilization, especially for large list sizes.
To avoid the high hardware consumption caused by the parallel architecture, the authors of~\cite{Nghia2018} proposed a serial BPL decoding schedule, in which shuffling the input LLRs can be substituted for permutations of the factor graph stages.
This hardware-friendly decoding strategy allows BPL decoding to reuse a single BP decoder at the cost of merely shuffling the input LLRs into a specific order for each PFG.

To improve the error-correcting performance of BPL decoding, numerous researchers have explored methods of optimizing the PFG selection, including empirical methods~\cite{Hussami2009,elkelesh2018belief,Nghia2018} and analytical methods~\cite{Ren2020,Ahmed2020CRCBPL,Bai2020ISIT}.
It is noteworthy that the authors of~\cite{Ahmed2020CRCBPL} first derived the permutation gain for parallel BPL decoding, which provides the inspiration for analytically solving the optimal PFG selection.
In view of hardware implementations, many works of BP decoders have been presented in \cite{Pamuk2012An,Youn14belief,Bo2014Early,Yang2016ISCAS,Sun2016ISCAS,Abbas17High,chen2019}.
Compared to the classical single-column BP architectures~\cite{Youn14belief}, the SOA double-column bidirectional-propagation architecture~\cite{chen2019} instantiates two processing element (PE) arrays and propagates the left-to-right and right-to-left messages simultaneously to improve the throughput.
Moreover, the most challenging task for the BPL decoder is the implementation of flexible permutations since the PFG selection algorithms~\cite{Ren2020,Ahmed2020CRCBPL,Bai2020ISIT,geiselhart2022polar} are generally dynamic, corresponding to varying code configurations or channel environments.
Even if based on area-efficient serial decoding, the BPL decoder still needs to support the generation of flexible permutations by shuffling the input LLRs into a specific order for each PFG.
A straightforward method is to utilize the Bene\v{s} network~\cite{benevs1964optimal,Daesum2010TVLSI}, which is an optimal non-blocking network that can achieve any arbitrary permutation.
However, the design space of permutations is $n!$ instead of $N!$ for length-$N$ polar codes, and the control signals of the Bene\v{s} network are difficult to generate on-the-fly for each PFG.
It is not efficient to adopt the Bene\v{s} network in the BPL decoder.
In summary, there are thus two critical problems for the BPL~decoder:
\begin{itemize}
	\setlength{\itemsep}{0pt}
	\setlength{\parsep}{0pt}
	\setlength{\parskip}{0pt}
	\item How to select the optimal PFG set from $n!$ PFGs for length-$N$ polar codes?
	\item How to efficiently implement flexible permutations for BPL decoding in hardware?
\end{itemize}
It is further noteworthy that BPL decoding is a particular case of a generalized automorphism ensemble (AE) decoding, in which we can deploy the SC, SCL, or BP decoding on multiple PFGs to achieve ML performance of polar or Reed-Muller (RM) codes~\cite{Geise2021TCOM,bioglio2022group}.
Hence, the solutions to these two problems are significant for both BPL and for generalized AE decoding.

\subsection*{Contributions:}
In this paper, we present the first BPL implementation, which solves the aforementioned two problems, i.e., the use of near-optimal PFG sets and the generation of flexible permutations.
Our contributions comprise the following:
\begin{itemize}
	\setlength{\itemsep}{0pt}
	\setlength{\parsep}{0pt}
	\setlength{\parskip}{0pt}
	\item
	We derive the block error probability of serial BPL decoding and present \pf{a criterion for determining} the best PFG set.
	Then, we propose a sequential generation (SG) algorithm \pf{that can efficiently} obtain a near-optimal PFG set.
	Simulations show that our BPL decoder with $\mathbb{L}=32$ achieves similar error-correcting performance to SCL with~$\mathbb{L}=4$.

	\item
	We propose a hardware-friendly algorithm using low-complexity matrix decomposition to generate flexible permutation routings for all PFGs.
	To this end, we provide a mathematical model for permutations and \pf{demonstrate} that the permutation routing of each PFG can be decomposed into a combination of $n-1$ fixed sub-routings.
	This decomposition process can be done~online.

	\item
	We present the first hardware architecture of a BPL decoder, \pf{based on the double-column bidirectional-propagation scheme~\cite{chen2019}, that incorporates the aforementioned flexible permutation generator.}
	To improve the throughput of the BPL decoder, we adopt a decoupled strategy that enables BP decoding and permutation generation to be executed simultaneously.
	It is noteworthy that our decoder can increase the list size arbitrarily without any additional area overhead.
	Synthesis results show that, for $\mathbb{L}=32$, our decoder can achieve a throughput of $25.63$ Gbps with an area efficiency of $29.46$ Gbps/mm$^2$ at SNR~$=4$~dB, which outperforms the SOA BP and BPF decoders~\cite{Abbas17High, chen2019, Su2022JSSCL, Shen2020Tcas, shen2020improved}.
\end{itemize}

The remainder of this paper is organized as follows.
Section~\ref{sec:preli} reviews the background of polar codes, BP decoding, and BPL decoding.
Section~\ref{sec:FER_perform} analyses the permutation gain for serial BPL decoding and presents a graph selection algorithm for a near-optimal PFG set.
In Section~\ref{sec:Theorem_drivation}, a hardware-friendly algorithm for any permutation generation is proposed.
Section~\ref{sec:bpl_flexible_fg_gen} presents our BPL decoder architecture with several advanced techniques.
Section~\ref{sec:implementation} provides our implementation results and compares them with the SOA polar decoders.
Finally, Section~\ref{sec:conclusion} concludes this paper.

\section{Preliminaries}\label{sec:preli}
\emph{Notation}: Throughout this paper, we use the following symbol definitions.
Boldface lowercase letters $\bm{u}$ denote vectors, where $u_{i}$ means the $i$-th element of $\bm{u}$ and $\bm{u}_{i}^{j}$ denotes the sub-vector $[u_{i}\;u_{i+1}\;\hdots\;u_{j}],i\leq j$.
If $i>j$, $\bm{u}_{i}^{j}=\varnothing$.
Boldface uppercase letters $\mathbf{B}$ denote matrices, where $B_{ij}$ and $\bm{b}_{j}$ denote the element at the $i$-th row and $j$-th column of $\mathbf{B}$ and the $j$-th column of $\mathbf{B}$, respectively.
In terms of the factor graph for polar codes with length $N=2^n$, we use $\OFG$, $[m_{0}\;m_{1}\;m_{2}\;\hdots\;m_{n-1}]$, and $[0\;1\;2\;\hdots\;n-1]$ to represent the original factor graph (OFG), its stages, and its stage order, respectively.
Similarly, we use $\pi$, $[m_{\pi^{0}}\;m_{\pi^{1}}\;m_{\pi^{2}}\;\hdots\;m_{\pi^{n-1}}]$, and $[\pi^{0}\;\pi^{1}\;\pi^{2}\;\hdots\;\pi^{n-1}]$ to denote any other PFG, its stages, and its stage order, respectively.
If $\mathcal{L}$ is \pf{a set of $\mathbb{L}$ PFG candidates}, $\{\set_{0}\;\set_{1}\;\hdots\;\set_{\mathbb{L}-1}\}$, $|\mathcal{L}|=\mathbb{L}$ means its cardinality.
Note that all indices related to decoding start from $0$.
\pf{The hard decision function is defined as $\HD(x)=1$ if $x<0$ and $\HD(x)=0$ if $x\geq0$.}
We adopt the following parameters for polar codes, $N$ is the code length, $K$ is the number of \pf{message bits}, $R=K/N$ the code rate, $P$ the number of CRC bits, $K'=K+P$ the number of \pf{information bits} with the CRC bits attached.
The frozen and unfrozen bit set indices are denoted as $\mathcal{F}$ and $\mathcal{A}$, respectively, and we refer to a code as an $(N,K)$ polar code.
\pf{As in this work we only consider polar codes that are concatenated with CRC codes, we use the term \emph{SCL decoding} to refer to CRC-aided SCL decoding for brevity.}

\subsection{Construction and Encoding of Polar Codes}

Given an input bit sequence $\bm{u}$, the encoded vector $\bm{x}$ is generated by ${\bm{x}}={\bm{u}\cdot{\mathit{\mathbf{G}_{N}}}}$, where ${\mathit{\mathbf{G}_{N}}}=
{\mathbf{F}^{\otimes n}}$
denotes the Kronecker power of the kernel~$\mathbf{F}=\left[\begin{smallmatrix} 1 & 0 \\ 1 & 1\end{smallmatrix}\right]$.
Based on the principle of channel polarization~\cite{Arikan2008Channel}, the $N$ bits in $\bm{u}$ correspond to $N$ coordinated bit channels with different reliabilities, where the $K'$ most reliable bit channels transmit unfrozen bits with CRC attached and the remaining $N-K'$ bit channels transmit frozen bits, typically set to a value of 0.
Note that the metric used to determine the bit channel reliability has an impact on $\mathcal{A}$ and influences the performance of polar codes.
For 5G NR~\cite{NR5G}, a universal reliability sequence is applied to formulate $\mathcal{A}$ with $K'$ bits for uplink (UL) and downlink (DL) channels.
Besides, a novel polar code construction framework tailored to a given decoding algorithm based on a genetic algorithm (GenAlg) was introduced in~\cite{Ahmed2019Tcom}, where populations of unfrozen sets evolve based on the error-correcting performance of~a~given~decoder.

\begin{figure}[t]
	\centering
	\includegraphics[width = 0.85\linewidth]{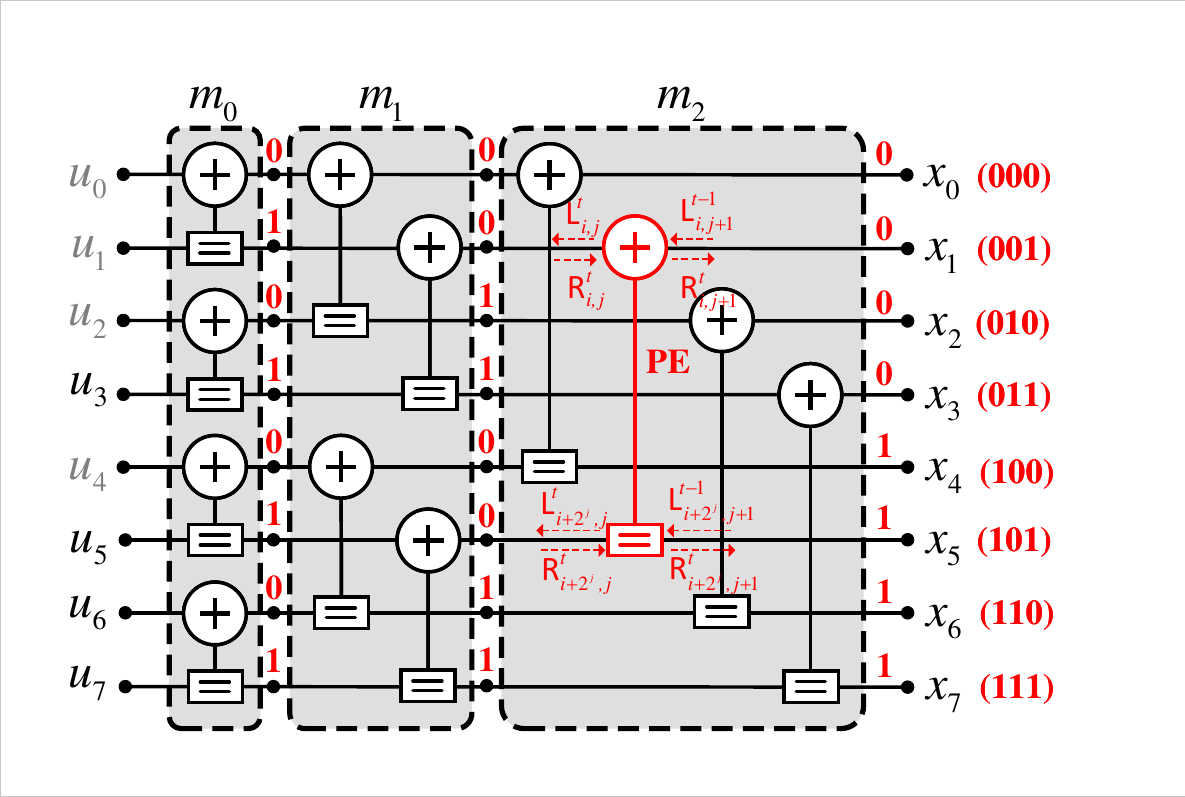}
	\caption{The OFG for length-$8$ polar codes, where one PE is marked in red and $\mathcal{F}=\{0,1,2,4\}$ is marked in grey.}
	\label{fig:contr_8}
\end{figure}

\subsection{BP Decoding of Polar Codes on the Factor Graph}
The BP algorithm is a classical iterative algorithm to calculate the marginal probability by the sum-product (SP) equations on a factor graph~\cite{Kschichang2001FG}.
Motivated by BP decoding for RM codes, Ar{\i}kan first proposed BP decoding for polar codes on the generator matrix-based factor graph~\cite{arikan2010ISBC}.
The OFG structure with three stages [$m_0\;m_1\;m_2$] is shown in Fig.~\ref{fig:contr_8}.
Namely, an $(N,K)$ polar code is represented as an $n$-stage factor graph, and each stage has $N/2$ PEs.
Two types of LLR messages (left-to-right $\mathsf{R}$ and right-to-left $\mathsf{L}$) are propagated over PEs on the factor graph.
\pf{At the $t$-th iteration,} for $j=0,\hdots,n$, \pf{$\mathsf{R}_j^t$} and \pf{$\mathsf{L}_j^t$} can be denoted as the $j$-th column of $\mathsf{R}$- and $\mathsf{L}$-messages, respectively, where \pf{$\mathsf{R}_{i,j}^t$} and \pf{$\mathsf{L}_{i,j}^t$} denote the messages at the $i$-th bit index of the $j$-th column, respectively.
Each PE propagates $\mathsf{R}$- and $\mathsf{L}$-messages as follows~\cite{xu2015xj}.
\begin{equation}\label{eq:bp-itera}
	\pf{\left\{
	\begin{aligned}
		&\mathsf{L}_{i,j}^{t}=g({\mathsf{L}_{i,j+1}^{t-1}},{{\mathsf{L}_{i+2^j,j+1}^{t-1}}+{\mathsf{R}_{i+2^j,j}^{t}}},\beta_{\mathsf{L}}),\\
		&\mathsf{L}_{i+2^j,j}^{t}=g({\mathsf{L}_{i,j+1}^{t-1}},{\mathsf{R}_{i,j}^{t}},\beta_{\mathsf{L}})}+{\mathsf{L}_{i+2^j,j+1}^{t-1},\\
		&\mathsf{R}_{i,j+1}^{t}=g({\mathsf{R}_{i,j}^{t}},{\mathsf{L}_{i+2^j,j+1}^{t-1}+\mathsf{R}_{i+2^j,j}^{t}},\beta_{\mathsf{R}}),\\
		&\mathsf{R}_{i+2^j,j+1}^{t}=g({\mathsf{R}_{i,j}^{t}},{\mathsf{L}_{i,j+1}^{t-1}},\beta_{\mathsf{R}})+\mathsf{R}_{i+2^j,j}^{t}.\\
	\end{aligned}
	\right.}
\end{equation}
where we adopt the offset-MS (OMS) equation~\cite{Shen2020Tcas,dl2020Xu} to approximate the SP equation for all iterative BP decoders
It can be implemented easily in hardware to approach the performance of the SP, where $g(a,b,\beta)=\sgn(a)\cdot\sgn(b)\cdot\max(\min(|a|,|b|)-\beta,0)$ and $[\beta_{\mathsf{R}}\;\beta_{\mathsf{L}}]=[0.25\;0]$.
At the beginning of BP decoding, \pf{$\mathsf{R}_0^{0}$} is initialized as \emph{a-priori} $+\infty$ or $0$ according to the bit channel allocation of $\mathcal{F}$ and \pf{$\mathsf{L}_{n}^{0}$} is initialized as \emph{a-posterior} LLR values from the received signals $\bm{y}$, i.e., $\ln\frac{\mathrm{Pr}(y_i|x_i=0)}{\mathrm{Pr}(y_i|x_i=1)}, 0\leq i\leq N-1$.
$\mathsf{R}$- and $\mathsf{L}$-messages of other stages on the factor graph are initialized as $0$.
When the maximum number of iteration $I_{\max}$ is reached, the HD results $\bm{\hat{u}}$ are estimated based on the decision LLRs \pf{($\mathsf{R}_0^{I_{\max}-1}+\mathsf{L}_0^{I_{\max}-1}$)}.
\pf{In the following, we omit the iteration index $t=0$ of the initial LLRs $\mathsf{R}_0^{0}$ and $\mathsf{L}_{n}^{0}$ for brevity.}

\begin{figure}[t]
	\centering
	\includegraphics[width = \linewidth]{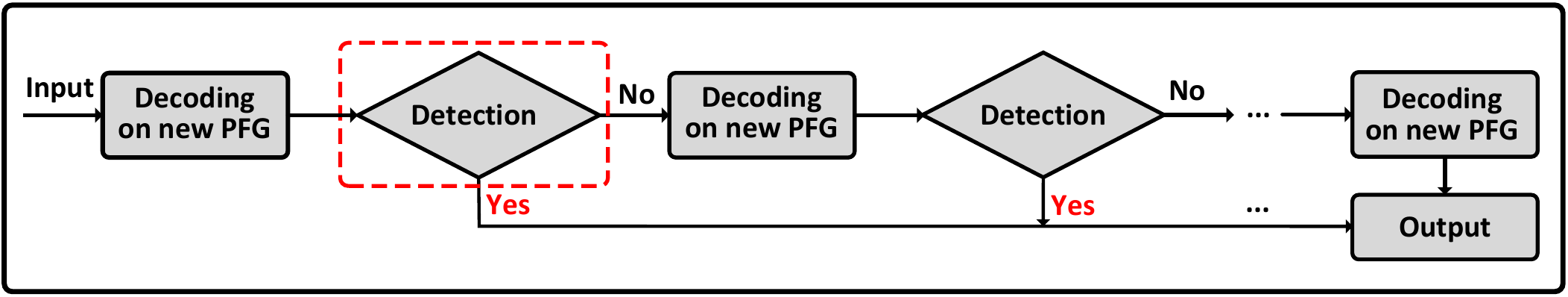}
	\caption{Overall framework for serial BPL decoding with the detection.}
	\label{fig:BPL_framework}
\end{figure}

\subsection{BPL Decoding of Polar Codes}\label{sec:bpl_intro}
BPL decoding~\cite{elkelesh2018belief} is an efficient algorithm to enhance the error-correcting performance of BP decoding, which executes multiple BP decoding procedures on multiple PFGs either in parallel~\cite{elkelesh2018belief,A2019Com,Ren2019ASICON,Ahmed2020CRCBPL} or serially~\cite{Nghia2018}.
Parallel BPL decoding instantiates a set $\mathcal{L}$ of $\mathbb{L}$ independent BP decoders (each BP decoder works on a unique PFG), which leads to a poor hardware utilization since only one result is finally retained.
Alternatively, serial BPL decoding can reuse a single BP decoder, which is illustrated in Fig.~\ref{fig:BPL_framework}.
If a BP decoding attempt \pf{fails to pass} the detection within $I_{\max}$ \pf{iterations}, serial BPL decoding activates the decoding on next PFG.
Note that due to the detection in Fig.~\ref{fig:BPL_framework}, \pf{the miss and error-detection events for} each PFG are introduced, \pf{which are denoted as $M$ and $D$, respectively}.\footnote{\pf{$M$ represents a wrongly estimated information sequence which passes the detection, and $D$ represents the event that fails to pass the detection.
We generally use CRC detection as the detection strategy in serial BPL decoding.}}
In the following, all BPL decoders are used in the serial structure unless stated otherwise.
\pf{With regard to the PFG selection,} previous works have found that the OFG always yields the best error-correcting performance~\cite{Ahmed2020CRCBPL} and the PFGs which fix more left stages and only permute the right-most side of the graph can result in better error-correcting performance~\cite{Nghia2018}.
This can be done as follows
\begin{equation}\label{eq:Sec3_k}
	[m_{0}\;m_{1}\;\hdots\;m_{p-1}\;m_{\pi^{p}}\;\hdots\;m_{\pi^{n-1}}],\;0\leq p\leq n,
\end{equation}
where $p$ denotes the number of fixed left stages.
\pf{It is noticeable that} the size of the design space of PFGs is \pf{reduced from $n!$ to} $(n-p)!$, which facilitates our derivations in Section~\ref{sec:FER_perform}.

\begin{figure*}[t]
	\centering
	\includegraphics[width = 0.9\linewidth]{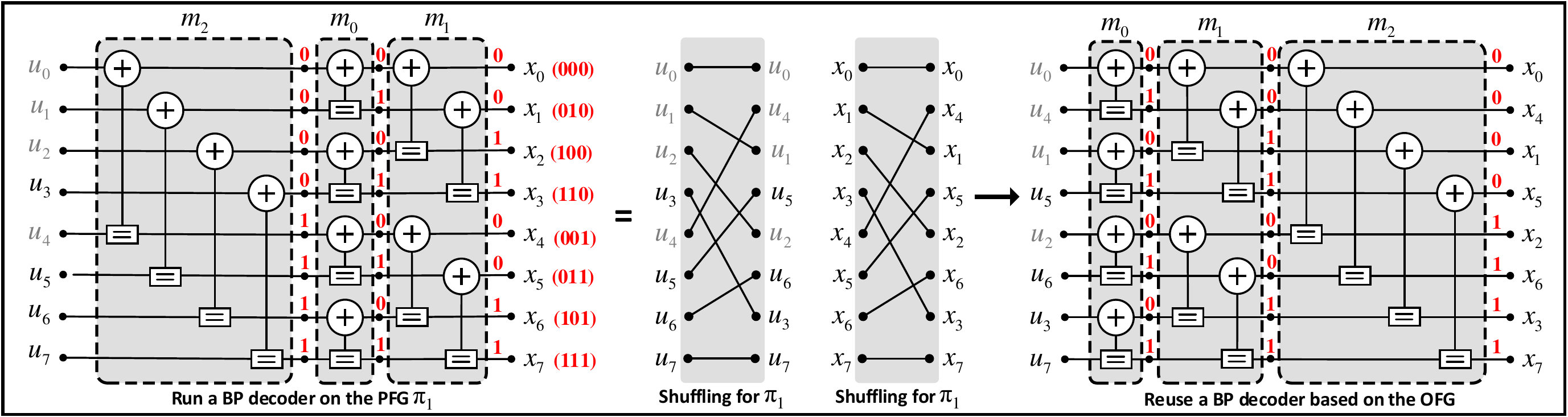}
	\caption{Permutation of a PFG $\pi_1=[m_{2}\;m_{0}\;m_{1}]$ based on the input shuffling.}
	\label{fig:permutations}
\end{figure*}

\subsection{Permutation by Shuffling the Input LLRs}\label{sec:Sec2_D}
\pf{In~\cite{Nghia2018}, it is demonstrated that} a one-to-one mapping exists between permutation of the factor graph stages and shuffling of \pf{the bit indices} in the codeword.
If we consider a node $\bigoplus$ as a `0' and a node $\boxed{=}$ as a `1' on the factor graph, we can derive the `0/1' sequences that describe the factor graph from the binary expansion of the bit-index $i$ for $u_{i}$ and $x_{i}$, as shown in the OFG of Fig.~\ref{fig:contr_8}.
Subsequently, Fig.~\ref{fig:permutations} depicts the permutation of a PFG $\pi_{1}=[m_{2}\;m_{0}\;m_{1}]$ based on the input shuffling.
Note that by permuting the factor graph stages, the positions of the nodes $\bigoplus$ and $\boxed{=}$ have been changed, i.e., the binary expansions of the bit indices for the input LLRs also have been changed.
We can instantiate a routing based on these binary expansions to shuffle the input LLRs, which replaces permutations of the factor graph stages.\footnote{\pf{Note that in a BP decoder, we need to shuffle the input LLRs $\mathsf{R}_0$ and $\mathsf{L}_n$. However, for brevity, we demonstrate how the shuffling works by shuffling the corresponding $\bm{u}$ and $\bm{x}$ in Fig.~\ref{fig:permutations}.}}
After the above input shuffling, we use a single BP decoder based on the OFG to decode on different PFGs.
Due to the varying optimal $\mathcal{L}$ for different code configurations, it is significant for a BPL decoder to support generating the flexible input shuffling, which is discussed in Section~\ref{sec:Theorem_drivation}.

\section{Proposed Graph Selection Algorithm Based on the \pf{Graph Selection Gain}}\label{sec:FER_perform}

In this section, we derive the block error probability of serial BPL decoding and further propose the SG algorithm to obtain a near-optimal PFG set.
Numerical results show that BPL decoding with the SG algorithm and $\mathbb{L}=8$ can yield a similar performance as SCL with $\mathbb{L}=2$ under different code rates of 5G NR polar codes.
Subsequently, to fully show the potential of BP decoding, we employ the construction using the GenAlg~\cite{Ahmed2019Tcom} to allocate bit channels for BP decoding and apply the proposed near-optimal set.
Simulations for this case illustrate that BPL decoding with $\mathbb{L}=32$ achieves the error-correcting performance of SCL decoding with $\mathbb{L}=4$.

\subsection{Permutation Gain Analysis for BPL Decoding}\label{sec:sec3_A}

Due to \pf{the introduced detection module} in Fig.~\ref{fig:BPL_framework}, the individual block error probability $\mathrm{Pr}(E_l)$ for the $l$-th PFG $\set_l$ in $\mathcal{L}$ combines \pf{the miss rate $\mathrm{Pr}(M_l)$ and the error-detection rate $\mathrm{Pr}(D_l)$, which can be expressed as follows}
\begin{equation}\label{eq:Block_err_pfg}
	\pf{\mathrm{Pr}(E_l)=\mathrm{Pr}(M_l)+\mathrm{Pr}(D_l).}
\end{equation}

Considering the serial decoding schedule, we modify the derivation for the block error probability of BPL decoding ${\mathrm{Pr}}({E}_{{\mathrm{BPL}}(\mathcal{L})})$ in~\cite{Ahmed2020CRCBPL}, as illustrated in \eqref{eq:Pr_BPL}
\begin{equation}\label{eq:Pr_BPL}
	\pf{
	\begin{aligned}
		&{\mathrm{Pr}}({E}_{{\mathrm{BPL}}(\mathcal{L})})=\\
		&\underbrace{\sum_{l=1}^{\mathbb{L}-1}{\mathrm{Pr}}\left({M_{l}},\bigcap_{k=0}^{l-1}{D}_{k}\right)+{\mathrm{Pr}}({M}_{0})}_{\mathrm{miss \;probability}}+\underbrace{{\mathrm{Pr}}\left(\bigcap_{l=0}^{\mathbb{L}-1}{D}_{l}\right)}_{\mathrm{list\;error\;probability}},\\
	\end{aligned}
	}
\end{equation}
where the list error probability ${\mathrm{Pr}}\left(\bigcap_{l=0}^{\mathbb{L}-1}{D}_{l}\right)$ reveals the improvement in the error-correcting performance from list decoding.
From~\eqref{eq:Pr_BPL} we also see that the probability of this term ${\mathrm{Pr}}\left(\bigcap_{l=0}^{\mathbb{L}-1}{D}_{l}\right)$ decreases as the number of PFG candidates $\mathbb{L}$ increases.
If $\mathbb{L}=1$, \pf{${\mathrm{Pr}}({E}_{{\mathrm{BPL}}(\mathcal{L})})={\mathrm{Pr}}({M}_{0})+{\mathrm{Pr}}({D}_{0})$.}
In serial BPL decoding, the detection module performs the CRC detection~\cite{Ren2015Asicon} rather than the minimum Euclidean distance used in parallel BPL decoding~\cite{elkelesh2018belief}.
Note that the CRC detection makes the output codeword not necessarily maximum-likelihood decodable and it introduces the miss~probability as~discussed~in~\eqref{eq:Pr_BPL}.

Most PFG selection algorithms in~\cite{Nghia2018, Ren2020, Bai2020ISIT, geiselhart2022polar} are only concerned with \pf{developing} an efficient metric to \pf{identify} PFGs with the minimum ${\mathrm{Pr}}({D}_{l}),\set_l\in\mathcal{L}$.
However, as indicated in~\eqref{eq:Pr_BPL}, the list error probability is determined by the joint block error probability of the selected PFGs, \pf{rather than the individual ${\mathrm{Pr}}({D}_{l})$ of each chosen PFG}.
Therefore, the task of finding the optimal $\mathcal{L}^{\star}$ that yields the lowest block error probability ${\mathrm{Pr}}({E}_{{\mathrm{BPL}}(\mathcal{L})})$ has transformed into an optimization problem to minimize~\eqref{eq:Pr_BPL}, as shown in~\eqref{eq:Pro_argminS}
\begin{equation}\label{eq:Pro_argminS}
	\mathcal{L}^{\star}=\underset{\mathcal{L}}{\argmin\;}{\mathrm{Pr}}({E}_{{\mathrm{BPL}}(\mathcal{L})}),\;\mathrm{s.t.}\;|\mathcal{L}|=\mathbb{L}.
\end{equation}

\begin{figure*}[t]
	\centering
	\input{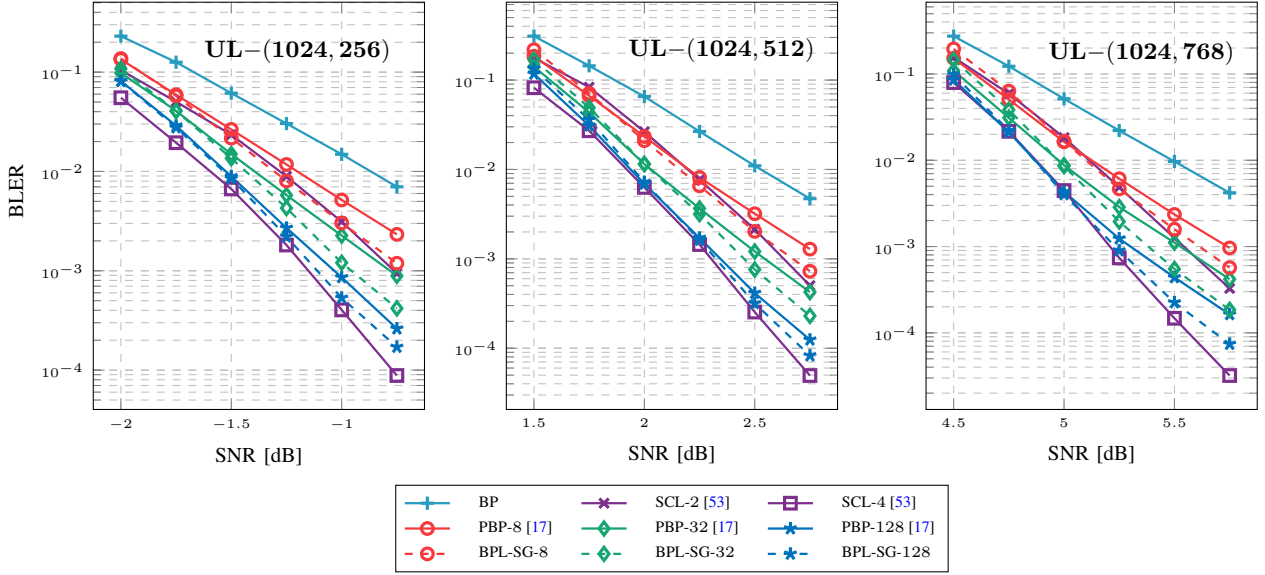}
	\caption{BLER performance of BP, PBP, the proposed BPL-SG, and SCL decoding for 5G UL polar codes with $R=\{\frac{1}{4},\frac{1}{2},\frac{3}{4}\}$, respectively. All iterative decoders use CRC for detection and $I_{\max}=50$.}
	\label{fig:FER_1}
\end{figure*}

It is noteworthy that \pf{${\mathrm{Pr}}({M}_{l})$ of~\eqref{eq:Pr_BPL} is a constant decided only by the CRC polynomial} and is independent of the PFG index $l$.
\pf{Therefore, we refer to them as ${\mathrm{Pr}}_{M}$ in the following discussion.}
To \pf{streamline} the optimization problem in~\eqref{eq:Pro_argminS}, as mentioned in Section~\ref{sec:bpl_intro}, we empirically default $\set_0=\OFG$ since the OFG always yields the lowest block error probability.
We then reformulate~\eqref{eq:Pr_BPL} based on probabilities that are conditioned on ${\mathrm{Pr}}({D}_{0})$ to reflect the PFG selection gain in~\eqref{eq:Pro_BPL_condition}
\begin{equation}\label{eq:Pro_BPL_condition}
	\pf{
	\begin{aligned}
		&{\mathrm{Pr}}({E}_{{\mathrm{BPL}}(\mathcal{L})})={\mathrm{Pr}}_{M}+{\mathrm{Pr}}({D}_{0})\cdot\\
		&\left({\mathrm{Pr}}_{M}\cdot\sum_{l=1}^{\mathbb{L}-1}{\mathrm{Pr}}\left(\left.\bigcap_{k=1}^{l-1}{D}_{k}\right\vert{D}_{0}\right)+\underbrace{{\mathrm{Pr}}\left(\bigcap_{l=1}^{\mathbb{L}-1}{D}_{l}\Bigg|{D}_{0}\right)}_{\mathrm{PFG\;selection\;gain}}\right).\\
	\end{aligned}
	}
\end{equation}

Subsequently, we use $\mathcal{L'}=\mathcal{L}\backslash\OFG$ to denote the remaining PFG set \pf{and further simplify~\eqref{eq:Pr_arg} into~\eqref{eq:Pro_argminS} based on~\eqref{eq:Pro_BPL_condition}}

\begin{equation}\label{eq:Pr_arg}
	\begin{aligned}
		&\mathcal{L}^{\star}
		\approx{\underset{\mathcal{L'}}{\argmin\;}}{\mathrm{Pr}}({E}_{{\mathrm{BPL}}(\mathcal{L})}|D_0)\approx\\
		&\underset{\mathcal{L'}}{\argmin\;}\left(\underbrace{{\mathrm{Pr}}_{M}\cdot\sum_{\pf{l=1}}^{\mathbb{L}-1}{\mathrm{Pr}}\left(\bigcap_{k=1}^{l-1}{D}_{k}\Bigg|{D}_{0}\right)}_{\mathrm{miss\;probability:\;Pr_{\emph{M}(\mathcal{L'})}}}
		+\underbrace{{\mathrm{Pr}}\left(\bigcap_{l=1}^{\mathbb{L}-1}{D}_{l}\Bigg|{D}_{0}\right)}_{\mathrm{PFG\;gain:\;Pr_{PFG(\mathcal{L'})}}}\right),\\
	\end{aligned}
\end{equation}
where $\mathrm{Pr_{\emph{M}(\mathcal{L'})}}$ is a constant $\mathrm{Pr}_{M}$ \pf{if} $\mathbb{L}=2$.
In order to obtain the optimal $\mathcal{L'}$ that minimizes~\eqref{eq:Pr_arg}, we factorize $\mathrm{Pr}_{M(\mathcal{L'})}$ and $\mathrm{Pr_{PFG(\mathcal{L'})}}$ as shown in~\eqref{eq:Pr_M}, where $\mathrm{Pr_{\emph{M}(\mathcal{L'})}}$ and $\mathrm{Pr_{PFG(\mathcal{L'})}}$ both have a similar sequential structure.
\pf{In the case of $\mathbb{L}=2$, ~\eqref{eq:Pr_arg} is simplified to $\mathrm{Pr}(D_{1}|D_{0})+\mathrm{Pr}_{M}$}, and we find $\set_1$ by minimizing $\mathrm{Pr}(D_{1}|D_{0})$.
\pf{As} $\mathbb{L}$ increases from $2$ to $3$ and we reserve the selected $\{\set_0\;\set_1\}$, we can consecutively obtain $\set_2$ by minimizing $\mathrm{Pr}(D_{2}|D_{0},D_{1})$.
Note that when $\mathbb{L}=3$, $\{\set_0\;\set_1\;\set_2\}$ selected by the above method is only an approximation of~\eqref{eq:Pr_arg}, which implies that we sequentially obtain $\set_1$ and $\set_2$ following a greedy algorithm.

\begin{equation}\label{eq:Pr_M}
	\left\{
	\begin{aligned}
		&\begin{aligned}
			&{\mathrm{Pr}_{M(\mathcal{L'})}}={\mathrm{Pr}_{M}}\cdot\\
			&\left(\pf{1+}{\mathrm{Pr}}(D_{1}|{D}_{0})\cdot\left(\hdots\left(1+{\mathrm{Pr}}\left(D_{\mathbb{L}-2}\pf{\Bigg|}\bigcap_{l=0}^{\mathbb{L}-3}D_{l}\right)\right)\right)\right),\\
		\end{aligned}\\
		&\begin{aligned}
			{\mathrm{Pr_{PFG(\mathcal{L'})}}}=&{\mathrm{Pr}(D_{1}|D_{0})}\hdots{\mathrm{Pr}}\left(D_{\mathbb{L}-1}\Bigg|\bigcap_{l=0}^{\mathbb{L}-2}D_{l}\right),\\
		\end{aligned}\\
	\end{aligned}
	\right.
\end{equation}

In conclusion, \pf{we employ a} greedy algorithm applied to the sequential structure of~\eqref{eq:Pr_M} when selecting the $l$-th PFG for $\mathcal{L}$.
Specifically, we always choose the $\set_l$ that can minimize $\mathrm{Pr}(D_{l}|D_{0},D_{1},\hdots,D_{l-1})$, based on the previous $l-1$ selected PFGs, $l\in[1,\mathbb{L})$.
This method decomposes the issue of minimizing the joint block error probability into $\mathbb{L}-1$ consecutive minimization problems of conditional probabilities, which provides a near-optimal result solution for~\eqref{eq:Pr_arg}.

\subsection{Sequential Generation (SG) Algorithm for Graph Selection}\label{sec:mc_sec3}
In this section, we propose a graph selection algorithm called SG, \pf{which can} sequentially perform $\argmin\mathrm{Pr}(D_{l}|D_{0},D_{1},\hdots,D_{l-1})$ to obtain $\set_l$ for $\mathcal{L}$, $l\in[1,\mathbb{L})$.
In addition, \pf{by incorporating the conditional probability} ${\mathrm{Pr}}({D}_{0})$ from~\eqref{eq:Pr_BPL} to~\eqref{eq:Pr_arg}, we observe that ${\mathrm{Pr}}({E}_{{\mathrm{BPL}}(\mathcal{L})}|D_0)\gg{\mathrm{Pr}}({E}_{{\mathrm{BPL}}(\mathcal{L})})$.
Thus, we can use a relatively small dataset to determine the remaining near-optimal $\set_l$ \pf{using} Monte-Carlo~simulations.
\begin{algorithm}[t]
	\caption{\texttt{SG Process}}\label{alg:BPL_MC}
	\KwIn{$\mathcal{D}$, $\mathcal{P}$, $\mathbb{L}$}
	\KwOut{$\mathcal{L}$}
	\tcp{$\!\!$default $\OFG$ is the $0$-th element of $\mathcal{L}$}
	\upshape $\mathcal{L}(0)\leftarrow\OFG$\;
	\For{\upshape $i=0$ to $|\mathcal{P}|-1$}
	{
		\tcp{$\!\!$record the failed frames as 1s}
		$\mathbf{T}(i,:)\leftarrow$\mytextsf{BPEvaluate($\mathcal{P}(i),\mathcal{D}$)}\;
	}
	\tcp{$\!\!$generate the $\mathcal{L}$}
	\For{\upshape $l=1$ to $\mathbb{L}-1$}
	{
		\tcp{$\!\!$find the minimum $\mathrm{Pr}(D_{l}|D_{0},\hdots,D_{l-1})$}
		$i^{\star}\leftarrow\mytextsf{selectBestList}(\mathbf{T})$\;
		$\mathcal{L}(l)\leftarrow\mathcal{P}(i^{\star})$\;
		\tcp{$\!\!$delete the corrected errors}
		$\{\mathbf{T},\mathcal{D}\}\leftarrow$\mytextsf{updateDataset}($\mathbf{T}$, $\mathcal{D}$, $i^{\star}$)\;
	}
\end{algorithm}

To realize the SG algorithm, we first generate a dataset $\mathcal{D}$ \pf{containing} $|\mathcal{D}|$ received vectors $\bm{y}$ that fail to pass the CRC detection under BP decoding on $\OFG$.
Let $\mathcal{P}$ denote the search space of PFGs.
As mentioned in Section~\ref{sec:bpl_intro}, the authors of~\cite{Nghia2018} found that the PFGs which fix more left stages and only permute the right-most side of the graph tend to have a better error-correcting performance.
Hence, we \pf{adjust} $p$ in~\eqref{eq:Sec3_k} to reasonably decrease the design space of $\mathcal{P}$ \pf{to} $(n-p)!$ PFGs and simplify computational complexity.
Subsequently, for each PFG candidate, we evaluate its block error rate (BLER) performance in $\mathsf{BPEvaluate()}$ for all received words in $\mathcal{D}$.
The current frame is recorded as `$1$' in $\mathbf{T}$ if \pf{decoding fails}, and $\mathbf{T}$ is a dynamic matrix with the initial dimension of $(|\mathcal{P}|\times|\mathcal{D}|)$.
In Algorithm~\ref{alg:BPL_MC}, the first element of $\mathcal{L}$ is set to~$\OFG$.
\pf{For} the $i$-th PFG in $\mathcal{P}$ and the $j$-th received codeword in $\mathcal{D}$, the success/failure of the CRC detection is marked as `$0/1$' in $\mathbf{T}(i,j)$ depending on if decoding succeeded or not.
In order to further populate $\mathcal{L}$, we use the function $\mytextsf{selectBestList()}$ to return the index~$i^{\star}$ that corresponds to the minimum weight row in $\mathbf{T}$, which is equivalent to minimizing $\mathrm{Pr}(D_{l}|D_{0},\hdots,D_{l-1})$.
After storing $\mathcal{P}(i^{\star})$ into $\mathcal{L}(l)$, the function $\mytextsf{updateDataset()}$ dynamically updates $\mathbf{T}$ and $\mathcal{D}$ by deleting the columns corresponding to the $0$s in $\mathbf{T}(i^{\star},:)$ and corresponding samples in $\mathcal{D}$ to only keep the erroneous cases.
The above operations (lines $4- 7$ in Algorithm~\ref{alg:BPL_MC}) are repeated for $\mathbb{L}-1$-times until filling $\mathcal{L}$.

\subsection{Numerical Results of the Proposed Algorithm}
Fig.~\ref{fig:FER_1} compares the BLER performance of BP decoding, permuted BP (PBP) decoding~\cite{Nghia2018}, SCL decoding~\cite{Stimming_2015}, and the proposed BPL decoding with the SG algorithm (BPL-SG) for 5G UL polar codes with $N=1024$ and $R\in \{\frac{1}{4}\;\frac{1}{2}\;\frac{3}{4}\}$, where all iterative decoders have the same $I_{\max} = 50$. 
In all following captions, the tailored ``-$\mathbb{L}$'' denotes the employed list size.
\pf{For each case (i.e., SNR$=-1$ dB~for $R=\frac{1}{4}$, SNR$=2.5$ dB for $R=\frac{1}{2}$, and SNR$=5.5$ dB for $R=\frac{3}{4}$),} we generate a relatively small dataset $\mathcal{D}$ with $10^{6}$ samples that fail to pass the CRC detection under BP decoding on $\OFG$.
However, as length-$1024$ polar codes contain $10!$ PFGs, it is impractical to simulate all of them.
In order to make a reasonable trade-off between computational complexity and performance, we set $p=4$ in~\eqref{eq:Sec3_k} and fix the left stages as $[m_{0}\;m_{1}\;m_{2}\;m_{3}]$ to reduce the cardinality of $\mathcal{P}$, which now contains only $720$ PFG candidates.
Numerical results from simulating over AWGN channels with binary phase-shift keying (BPSK) modulation show that BPL-SG-$32$ provides a $0.1$ dB improvement in comparison with PBP-$32$ at BLER~$=10^{-3}$, \pf{and this improvement increases to $0.2$ dB for $R=\frac{1}{4}$.}
When $\mathbb{L}$ increases to $128$, BPL-SG further approaches the performance of SCL-$4$.
\pf{To show the advantage of a serial schedule in terms of computational complexity, we compare the average number of iterations $I_{\mathrm{avg}}$ for two schedules that both use the SG algorithm in Fig.~\ref{fig:sec3_twoschedule}.
Note that there is no termination within each single BP decoder to make a fair comparison.
For $\mathbb{L}=128$, a serial architecture can reduce around $99.2\%$ of the iterations, demonstrating the superiority of the serial schedule for hardware implementation.
}

\begin{figure}[t]
	\centering
	\usepgfplotslibrary{groupplots}

\begin{tikzpicture}[spy using outlines]
\usepgflibrary{decorations.pathmorphing}
\usepgflibrary[decorations.pathmorphing]
\usetikzlibrary{decorations.pathmorphing}
\usetikzlibrary[decorations.pathmorphing]
\definecolor{myblued}{RGB}{0,114,189}
\definecolor{myred}{RGB}{217,83,25}
\definecolor{myredd}{RGB}{248,74,173}
\definecolor{myyellow}{RGB}{237,137,32}
\definecolor{mypurple}{RGB}{126,47,142}
\definecolor{myblues}{RGB}{77,190,238}
\definecolor{mygreen}{RGB}{21, 165, 112}
\pgfplotsset{
    label style = {font=\fontsize{9pt}{7.2}\selectfont},
    tick label style = {font=\fontsize{6pt}{7.0}\selectfont}
  }
\usetikzlibrary{
    matrix,
}
\begin{axis}[
scale = 1,
xmin=1.5,xmax=3.125,
ymin=0,ymax=1000,
scale only axis,
ylabel={$I_{\mathrm{avg}}$}, ylabel style={yshift=-1em},
xlabel={SNR [dB]}, xlabel style={yshift=0.5em},
xtick={1.75, 2.0, 2.25, 2.5, 2.75, 3.0},
xticklabels={$1.75$, $2.0$, $2.25$, $2.5$, $2.75$, $3.0$},
ytick={50, 100, 300, 600, 760, 910},
yticklabels={$50$, $100$,$300$,$600$,$1600$,$6400$},
ymajorgrids=true,
xmajorgrids=true,
grid style=dashed,
separate axis lines,
y axis line style= { draw opacity=0 },
xshift=-1\columnwidth,
width=7.2cm, height=4cm,
legend style={
	nodes={scale=1, transform shape},
    legend columns=3,
    at={(0.5,-0.33)},
    anchor={center},
    cells={anchor=west},
    column sep= 0mm,
    row sep= -0.25mm,
    font=\fontsize{6.9pt}{7.2}\selectfont,
},
legend columns=3,
]
 
\addplot[
color=myblued,
mark=o,
very thick,
mark size=2.5,
line width=0.4mm,
]
table {
1.625	400
1.75	400
1.875	400
2	400
2.125	400
2.25	400
2.375	400
2.5	400
2.625	400
2.75	400
2.875	400
3	400

};
\addlegendentry{Paral. BPL-SG-8}

\addplot[
color=myblued,
mark=square,
very thick,
mark size=2.5,
line width=0.4mm,
]
table {
	1.625	760
	1.75    760
	1.875   760
	2       760
	2.125	760
	2.25	760
	2.375	760
	2.5     760
	2.625	760
	2.75	760
	2.875	760
	3	    760
	
};
\addlegendentry{Paral. BPL-SG-32}

\addplot[
color=myblued,
mark=diamond,
very thick,
mark size=3,
line width=0.4mm,
]
table {
	1.625	910
	1.75    910
	1.875	910
	2	    910
	2.125	910
	2.25	910
	2.375	910
	2.5	    910
	2.625	910
	2.75	910
	2.875	910
	3	    910
	
};
\addlegendentry{Paral. BPL-SG-128}

\addplot[
color=myred,
mark=o,
very thick,
mark size=2.5,
line width=0.4mm,
]
table {
	1.625	104.5
	1.75	85.95
	1.875	69.7625
	2	61.61666667
	2.125	57.23611111
	2.25	54.355
	2.375	52.39
	2.5	51.635
	2.625	50.9275
	2.75	50.665
	2.875	50.425
	3	50.29
};
\addlegendentry{Serial BPL-SG-8}

\addplot[
color=myyellow,
mark=square,
line width=0.4mm,
very thick,
mark size=2.5,
]
table {
	1.625	230.45
	1.75	160.85
	1.875	106
	2	77.62777778
	2.125	65.85
	2.25	58.755
	2.375	54.5525
	2.5	53.195
	2.625	51.7925
	2.75	51.3925
	2.875	50.825
	3	50.46
};
\addlegendentry{Serial BPL-SG-32}

\addplot[
color=mygreen,
mark=diamond,
very thick,
mark size=3,
line width=0.4mm,
]
table {
	1.625	583.25
	1.75	366.2
	1.875	195.2375
	2	111.5
	2.125	82.32777778
	2.25	65.84
	2.375	58.7725
	2.5	55.7925
	2.625	53.59
	2.75	52.24
	2.875	51.635
	3	50.46
};
\addlegendentry{Serial BPL-SG-128}

\path[-] (rel axis cs:0,0)     coordinate(leftstart1)
          --(rel axis cs:0,0.65)coordinate(interruptleftA)
         (rel axis cs:0,0.725)  coordinate(leftstop1);

\path[-] (rel axis cs:0,0.725)     coordinate(leftstart2)
          --(rel axis cs:0,0.80)coordinate(interruptleftB1)
         (rel axis cs:0,0.8875)  coordinate(interruptleftB2)
         --(rel axis cs:0,1)   coordinate(leftstop2);

\path[-] (rel axis cs:1,0)     coordinate(rightstart)
         --(rel axis cs:1,1)   coordinate(rightstop);
\end{axis}


\draw(leftstart1)-- (interruptleftA) decorate[decoration={zigzag,segment length = 1.5mm}]{--(leftstop1)};
\draw(leftstart2)-- (interruptleftB1) decorate[decoration={zigzag,segment length = 1.5mm}]{--(interruptleftB2)} -- (leftstop2);
\draw[line width=1pt](rightstart) -- (rightstop);

\end{tikzpicture}
	\caption{\pf{Average number of iterations for parallel/serial architectures for UL-$(1024,512)$ polar codes with $I_{\max}=50$, where the parallel architecture refers to~\cite{elkelesh2018belief}.}}
	\label{fig:sec3_twoschedule}
\end{figure}
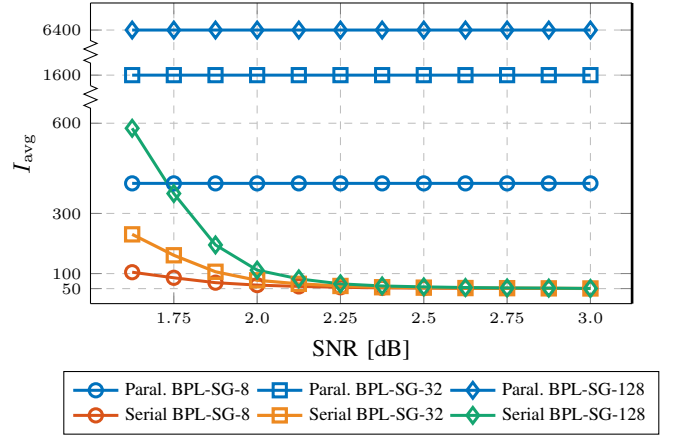

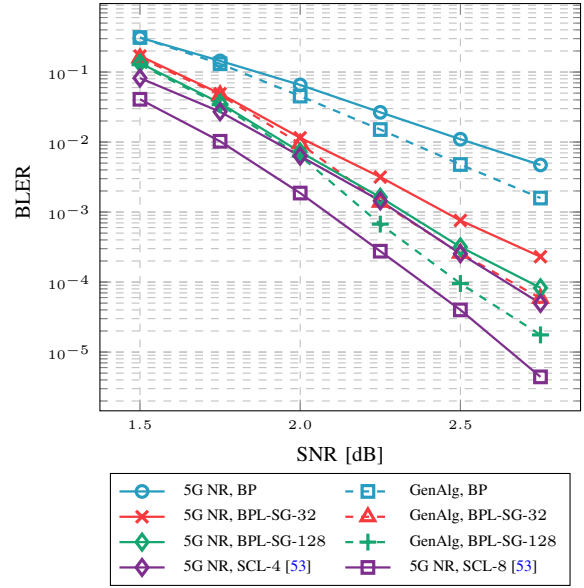
\begin{figure}[t]
	\centering
	\pgfplotsset{compat=1.16}
\begin{tikzpicture}
\definecolor{myskyblue}{RGB}{39,156,191}
\definecolor{myred}{RGB}{245, 57, 61}
\definecolor{mygreen}{RGB}{21, 165, 112}
\definecolor{myblued}{RGB}{0,114,189}
\definecolor{mypurple}{RGB}{126,47,142}
\definecolor{myyellow}{RGB}{243,165,93}
  \pgfplotsset{
    label style = {font=\fontsize{9pt}{7.2}\selectfont}, 
    tick label style = {font=\fontsize{9pt}{7.2}\selectfont} 
  }

\begin{axis}[
	scale = 1,
    ymode=log,
    xlabel={SNR [\text{dB}]},
    xlabel style={yshift=0.0cm}, 
    ylabel={BLER},
    ylabel style={xshift=0.2cm},
    grid=both,
    ymajorgrids=true,
    xmajorgrids=true,
    grid style=dashed,
    xmin = 1.375,
    xmax = 2.875,
    xtick={1.5,2.0,2.5,3.0},
    xticklabels={$1.5$,$2.0$,$2.5$,$3.0$},
    xticklabel style = {font=\tiny},
    yticklabel style = {font=\tiny},
    xlabel style = {font = \footnotesize},
    ylabel style = {font = \footnotesize},
    width=0.9\columnwidth, height=7cm,
    mark size=3.2,
    legend style={
      anchor={center},
      cells={anchor=west},
      column sep= 1.7mm, 
      font=\fontsize{6pt}{8}\selectfont, 
    },
    legend columns=2, 
    legend style={at={(0.5,-0.43)}, anchor=south},
]

\addplot[
    color=myskyblue,
    mark=o,
    mark options={scale=1,solid,very thick},
    very thick,
    line width=0.3mm,
    mark size=2.2,
]
table {
1.50000000000000	0.310000000000000
1.75000000000000	0.144333333333333
2	0.0654285714285714
2.25000000000000	0.0265853658536585
2.50000000000000	0.0109777777777778
2.75000000000000	0.00470879801734820
};
\addlegendentry{\text{5G NR, BP}}

\addplot[
    color=myskyblue,
    mark=square,
    mark options={scale=1,solid,very thick},
    very thick,
    dashed,
    line width=0.3mm,
    mark size=2.2,
]
table {
1.5      0.311550000000000
1.75     0.131250000000000
2.0      0.0453500000000000
2.25     0.0150888900000000
2.5      0.00477678570000000
2.75     0.00158459020000000
};
\addlegendentry{\text{GenAlg, BP}}

\addplot[
    color=myred,
    mark=x,
    mark options={scale=1,solid,very thick},
    very thick,
    line width=0.3mm,
    mark size=3.2,
]
table {
1.50000000000000	0.169500000000000
1.75000000000000	0.0491666666666667
2	0.0115000000000000
2.25000000000000	0.00316260162601626
2.50000000000000	0.000758730158730159
2.75000000000000	0.000230070218917803
};
\addlegendentry{\text{5G NR, BPL-SG-$32$}}

\addplot[
    color=myred,
    mark=triangle,
    mark options={scale=1,solid,very thick},
    very thick,
    dashed,
    line width=0.3mm,
    mark size=3,
]
table {
1.5      0.159550000000000
1.75     0.0471000000000000
2.0      0.00950000000000000
2.25     0.00138000000000000
2.5      0.000258333333333333
2.75     5.85714285714286e-05
};
\addlegendentry{\text{GenAlg, BPL-SG-$32$}}

\addplot[
color=mygreen,
mark=diamond,
mark options={scale=1,solid,very thick},
very thick,
line width=0.3mm,
mark size=3,
]
table {
1.50000000000000	0.134500000000000
1.75000000000000	0.0360000000000000
2	0.00721428571428571
2.25000000000000	0.00162601626016260
2.50000000000000	0.000317460317460317
2.75000000000000	8.26104915324246e-05
};
\addlegendentry{\text{5G NR, BPL-SG-$128$}}

\addplot[
    color=mygreen,
    mark=+,
    mark options={scale=1,solid,very thick},
    very thick,
    dashed,
    line width=0.3mm,
    mark size=3.2,
]
table {
1.5      0.127350000000000
1.75     0.0342000000000000
2.0      0.00635000000000000
2.25     0.000672222222222222
2.5      9.52380952380952e-05
2.75     1.75900000000000e-05
};
\addlegendentry{\text{GenAlg, BPL-SG-$128$}}

\addplot[
    color=mypurple,
    mark=diamond,
    mark options={scale=1,solid,very thick},
    very thick,
    line width=0.3mm,
    mark size=3,
]
table {
1.5      0.0820120300000000
1.75     0.0270880400000000
2.0      0.00629379400000000
2.25     0.00144530300000000
2.5      0.000253858900000000
2.75     4.95000000000000e-05
};
\addlegendentry{\text{5G NR, SCL-$4$ \cite{Stimming_2015}}}

\addplot[
    color=mypurple,
    mark=square,
    mark options={scale=1,solid,very thick},
    very thick,
    line width=0.3mm,
    mark size=2.2,
]
table {
1.5      0.0408163300000000
1.75     0.0102441500000000
2.0      0.00186773900000000
2.25     0.000275819500000000
2.5      4.00000000000000e-05
2.75     4.42000000000000e-06
};
\addlegendentry{\text{5G NR, SCL-$8$ \cite{Stimming_2015}}}

\end{axis}
\end{tikzpicture}%
	\caption{BLER performance of BP, the proposed BPL-SG, and SCL decoding for $(1024,512)$ polar codes with GenAlg and 5G NR constructions. All iterative decoders based on OMS decoding use CRC for detection, $I_{\max}=50$, and CRC-$11$ is specified by 5G NR.}
	\label{fig:FER_2}
\end{figure}
\begin{equation}\label{eq:CRC-11}
	\pf{g_{\mathrm{CRC-11}}(x)=x^{11}+x^{10}+x^{9}+x^{5}+1.}
\end{equation}

In addition to the 5G NR code construction that is unfriendly to BP decoding, we also consider the GenAlg construction~\cite{Ahmed2019Tcom} to fully show the potential of BP decoding.
Fig.~\ref{fig:FER_2} illustrates the BLER performance of BP, SCL, and BPL-SG decoding for $(1024,512)$ polar codes with the GenAlg and 5G NR constructions.
\pf{The CRC-$11$ polynomial from the 5G~NR standard, as shown in~\eqref{eq:CRC-11}, is adopted for both constructions.}
BPL-SG-$32$ under the GenAlg construction approaches the performance of SCL-$4$, and BPL-SG-$128$ under the GenAlg construction surpasses SCL-$4$ by $0.2$ dB at BLER~$=10^{-4}$.

\section{Proposed Algorithm of Formula-Based Permutation Generation}\label{sec:Theorem_drivation}
As mentioned in Section~\ref{sec:Sec2_D}, permutations of the factor graph stages can also be substituted efficiently by only shuffling the input LLRs to avoid instantiating multiple BP decoders with different factor graph architectures \pf{and} greatly facilitates the hardware implementation.
However, since a variety of shuffling patterns need to be generated to realize a single shuffling set $\mathcal{L}$ and since different code configurations require different shuffling sets, the implementation of the corresponding flexible LLR routing is \pf{challenging}.
In this section, we propose a hardware-friendly algorithm to generate these flexible routings \pf{(i.e., permutations)}.
We prove that the routing of any PFG can be decomposed into a combination of $n-1$ fixed sub-routings, as shown in Fig.~\ref{fig:FGP_process}.
The complicated hardware routing issue can \pf{thus} be optimized as a matrix decomposition.
\begin{figure}[t]
	\centering
	\includegraphics[width = 0.95\linewidth]{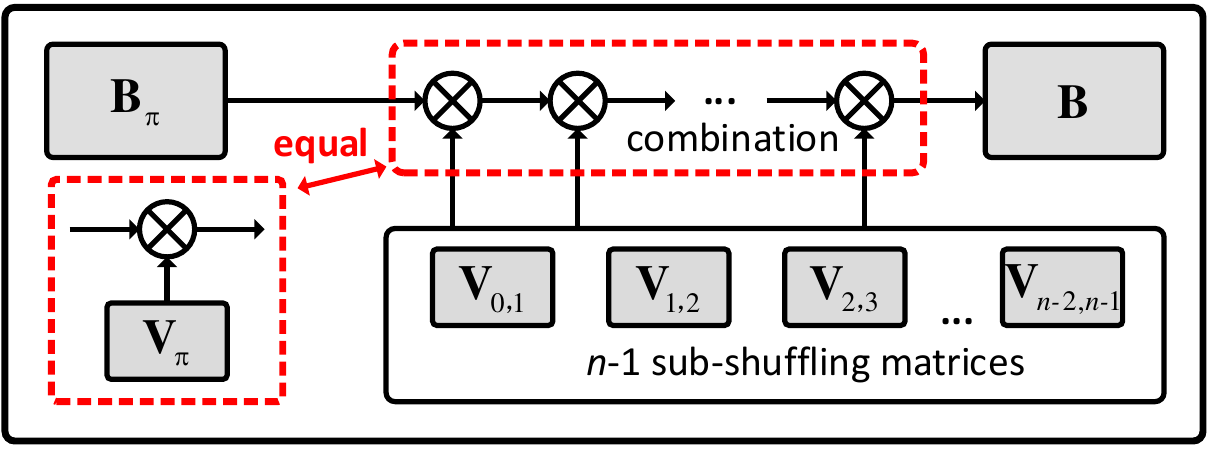}
	\caption{Model for permutations and the generation process of the shuffling matrix $\mathbf{V}_{\pi}$ for any PFG $\pi$.}
	\label{fig:FGP_process}
\end{figure}

\subsection{Mathematical Model for Permutations}
First, we create a model for permutations to solve the hardware routing problems with a matrix decomposition.
For length-$N$ polar codes, if the nodes $\bigoplus$ and $\boxed{=}$ in PFGs are represented by `$0$' and `$1$' respectively, any PFG can be mapped into a unique $N\times n$ binary matrix, in which the `$0/1$' sequence in each row corresponds to the binary expansion of the bit-index for the input LLRs.
For example, we use ${\mathbf{B}}=[\bm{b}_{n-1}\;\bm{b}_{n-2}\;\hdots\;\bm{b}_{1}\;\bm{b}_{0}]$ to denote the OFG of Fig.~\ref{fig:contr_8} (contrary to its stage order $[0\;1\;2\;\hdots\;n-1]$), and
$\bm{b}_{i}$ is a length-$N$ binary column vector expanded as
\begin{equation}\label{eq:m_i}
	\bm{b}_{i}=\biggl[\;\overbrace{\underbrace{\mathbf{0}}_{1\times2^{i}}\;\underbrace{\mathbf{1}}_{1\times2^{i}}\;\mathbf{0}\;\mathbf{1}\;\dots\;\mathbf{0}\;\mathbf{1}
	}^{2^{n-i-1}\;\mathrm{pairs}\;\mathrm{of}\;[{\mathbf{0}\;\mathbf{1}}]}\;\biggr]^\mathsf{T},\;i\in[0,n),
\end{equation}
where each $\mathbf{0}$ or $\mathbf{1}$ is an all-$0$ or all-$1$ row vector of length-$2^{i}$.
For any PFG $\pi$, the corresponding binary matrix ${\mathbf{B}}_{\pi}$ can be written as ${\mathbf{B}}_{\pi}=[\bm{b}_{\pi^{n-1}}\;\bm{b}_{\pi^{n-2}}\;\hdots\;\bm{b}_{\pi^{1}}\;\bm{b}_{\pi^{0}}]$, and~\eqref{eq:Sec4_B} shows the examples of $\OFG$ and $\pi_1$ of Fig.~\ref{fig:contr_8} and Fig.~\ref{fig:permutations}.
\begin{equation}\label{eq:Sec4_B}
	\mathbf{B}=\left[
	\begin{array}{c c c}
		0\;0\;0\\
		0\;0\;1\\
		0\;1\;0\\
		0\;1\;1\\
		1\;0\;0\\
		1\;0\;1\\
		1\;1\;0\\
		1\;1\;1\\
	\end{array}
	\right],
	\;\;\;\;\;
	\mathbf{B}_{\pi_1}=\left[
	\begin{array}{c c c}
		0\;0\;0\\
		0\;1\;0\\
		1\;0\;0\\
		1\;1\;0\\
		0\;0\;1\\
		0\;1\;1\\
		1\;0\;1\\
		1\;1\;1\\
	\end{array}
	\right].
\end{equation}

\begin{figure}[t]
	\centering
	\includegraphics[width=0.85\linewidth]{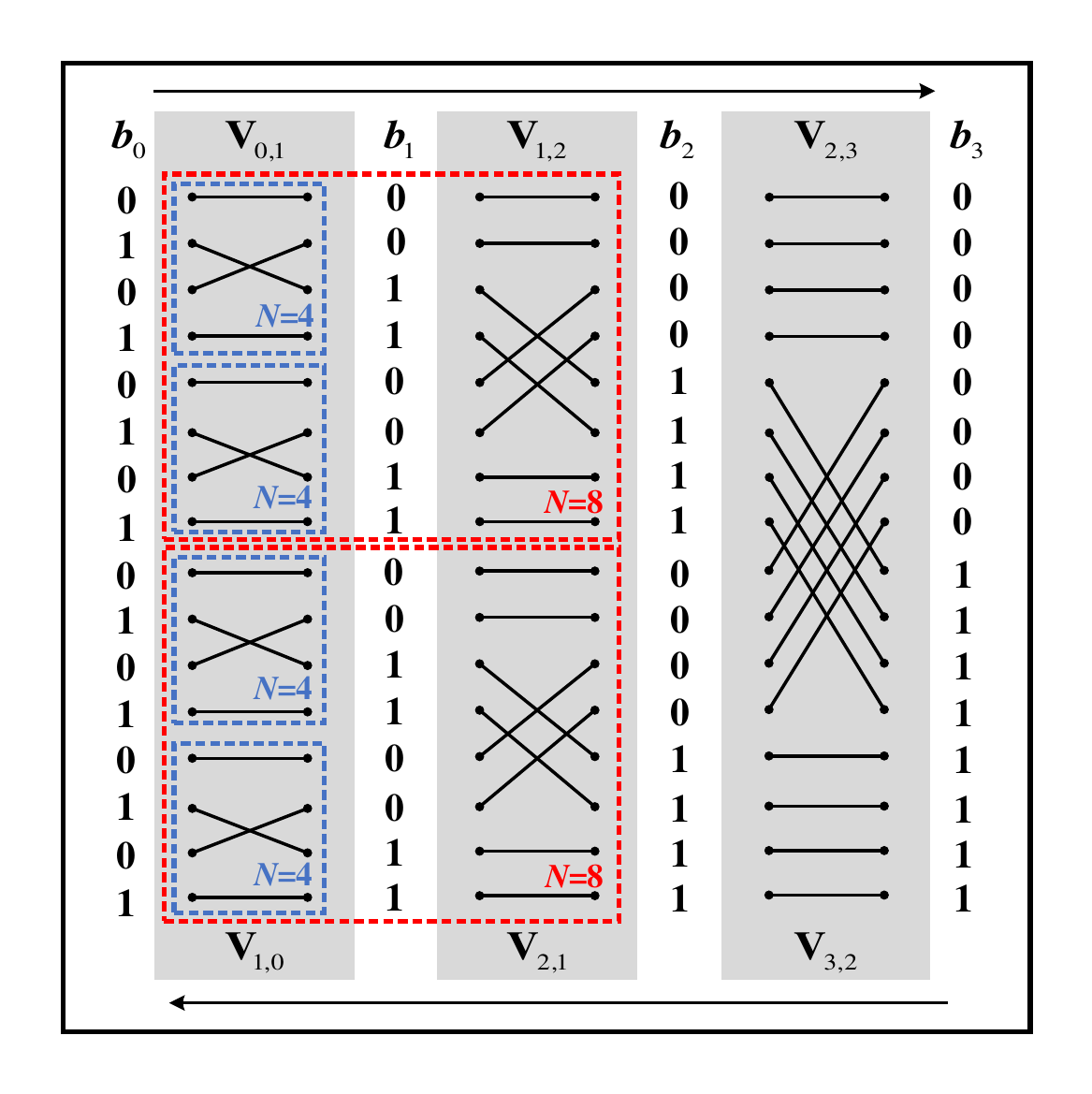}
	\caption{Sub-shuffling matrices and binary column vectors for length-$16$ codes.}\label{fig:sub_shuffling}
\end{figure}

Consequently, the model for permutations based on the input shuffling can be derived as follows
\begin{equation}\label{eq:bit_index}
	{\mathbf{B}}_{\pi}^\mathsf{T}\cdot{\mathbf{V}}_{\pi}={\mathbf{B}}^\mathsf{T}\rightarrow
	\left\{\begin{aligned}
		&\bm{b}_{\pi^{n-1}}^\mathsf{T}&&\cdot{\mathbf{V}}_{\pi} = \bm{b}_{n-1}^\mathsf{T},\\
		&\vdots\\
		&\bm{b}_{\pi^{1}}^\mathsf{T}&&\cdot{\mathbf{V}}_{\pi} = \bm{b}_{1}^\mathsf{T},\\
		&\bm{b}_{\pi^{0}}^\mathsf{T}&&\cdot{\mathbf{V}}_{\pi} = \bm{b}_{0}^\mathsf{T},\\
	\end{aligned}
	\right.
\end{equation}
where $\mathbf{V}_{\pi}$ is the shuffling matrix to represent the targeted routing in hardware.
Namely, as shown in Fig.~\ref{fig:FGP_process}, $\mathbf{B}_{\pi}$ can be multiplied by the corresponding $\mathbf{V}_{\pi}$ to obtain $\mathbf{B}$.
Hence, the above problem has been modelled as how to use a unified mathematical formula to express ${\mathbf{V}}_{\pi}$.

\subsection{Decomposition and Properties of ${\mathbf{V}}_{\pi}$}\label{sec:Formular_resp}
\begin{Thm}\label{thm:them_1}
	For any PFG $\pi$, the shuffling matrix ${\mathbf{V}}_{\pi}$ which satisfies ${\mathbf{B}}_{\pi}^\mathsf{T}\cdot{\mathbf{V}}_{\pi}={\mathbf{B}}^\mathsf{T}$ can be decomposed into a combination of $n-1$ fixed sub-shuffling matrices ${\mathbf{V}}_{i-1,i},i\in[1,n)$ for length-$N$ polar codes.
\end{Thm}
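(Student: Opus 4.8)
The plan is to read the shuffling matrix $\mathbf{V}_\pi$ as the $N\times N$ permutation matrix that realizes the reordering of bit-indices induced by permuting the binary digits according to $\pi$, and then to reduce the decomposition to the elementary group-theoretic fact that every permutation of $n$ objects is a product of adjacent transpositions. Concretely, I would show that the $n-1$ adjacent-stage swaps are exactly the $n-1$ fixed sub-shuffling matrices $\mathbf{V}_{i-1,i}$, and that sorting $\pi$ back to the identity with neighbouring swaps realizes $\mathbf{V}_\pi$ as their product.

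First I would establish existence and uniqueness of $\mathbf{V}_\pi$ as a permutation matrix. Since each $\bm{b}_i$ in \eqref{eq:m_i} is the indicator of the $i$-th binary digit, the $N$ rows of $\mathbf{B}$ run through all strings of $\{0,1\}^n$ exactly once; as $\mathbf{B}_\pi$ is a column permutation of $\mathbf{B}$, it has the same set of rows. Hence there is a unique permutation $\sigma_\pi$ of $\{0,\ldots,N-1\}$ sending row $r$ of $\mathbf{B}_\pi$ to row $\sigma_\pi(r)$ of $\mathbf{B}$ (equivalently $\mathbf{V}_\pi^\mathsf{T}\mathbf{B}_\pi=\mathbf{B}$), and $\mathbf{V}_\pi$ is its permutation matrix. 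Reading off row $r$ of $\mathbf{B}_\pi$ then shows that $\mathbf{V}_\pi$ maps the index with expansion $(c_{n-1},\ldots,c_0)$ to the index whose digits are permuted by $\pi$, so $\mathbf{V}_\pi$ is a genuine digit-permutation matrix.

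Next I would prove the decisive homomorphism property: if a stage permutation factorizes as a composition $\pi=\tau_1\tau_2\cdots\tau_m$, then $\mathbf{V}_\pi$ factorizes as the corresponding product of $\mathbf{V}_{\tau_1},\ldots,\mathbf{V}_{\tau_m}$, subject to the ordering convention fixed in the previous step. This follows by composing the digit-permutation maps. Crucially, the sub-shuffling matrix that swaps the two stages occupying adjacent factor-graph positions $i-1$ and $i$ is always the \emph{same} matrix $\mathbf{V}_{i-1,i}$ — the one transposing digits $i-1$ and $i$ of every index — because the factor-graph position is fixed and the swap does not depend on which original stages currently sit there. This is precisely what makes the $n-1$ building blocks $\mathbf{V}_{i-1,i}$, $i\in[1,n)$, genuinely fixed, matching the sub-shuffling matrices of Fig.~\ref{fig:sub_shuffling}.

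Finally I would invoke that the adjacent transpositions $(i-1,i)$, $i\in[1,n)$, generate the symmetric group, obtaining such a factorization explicitly by sorting the sequence $[\pi^{n-1}\;\cdots\;\pi^0]$ back to $[n-1\;\cdots\;0]$ with a bubble/insertion sort, each step being one neighbouring swap and thus one factor $\mathbf{V}_{i-1,i}$. Translating the sorting sequence through the homomorphism then expresses $\mathbf{V}_\pi$ as a product of the $n-1$ fixed sub-shuffling matrices, which is the claim. The main obstacle I anticipate is not the group theory but the careful bookkeeping of conventions in the middle step: pinning down whether the digit-permutation map is a homomorphism or an anti-homomorphism, and hence the exact left/right order in which the $\mathbf{V}_{i-1,i}$ must be multiplied, so that the product genuinely equals $\mathbf{V}_\pi$ and not $\mathbf{V}_{\pi^{-1}}$ or a reversed product. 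I would settle this once on the length-$8$ instance of \eqref{eq:Sec4_B} and then state the resulting general ordering rule.
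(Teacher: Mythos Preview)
Your proposal is correct and arrives at the same decomposition as the paper, but the route is more abstract. The paper never names the map $\pi\mapsto\mathbf{V}_\pi$ as a group homomorphism; instead it builds five explicit lemmas that track, column by column, what each $\mathbf{V}_{i-1,i}$ does to the vectors $\bm{b}_k$ (Lemmas~\ref{lem:lem_1}--\ref{lem:lem_3}), verifies that multiplying $\mathbf{B}_\pi$ by $\mathbf{V}_{i-1,i}$ swaps two of its columns (Lemma~\ref{lem:lem_4}), and then runs a right-to-left selection-style sort that fixes one column at a time while certifying that already-matched columns are untouched (Lemma~\ref{lem:lem_5}), culminating in the explicit formula $\mathbf{V}_\pi=\prod_{i=0}^{n-1}\mathbf{V}_{\pi_{s_i}^i,\,i}$. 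Your argument compresses all of this into ``digit-permutation is a homomorphism $S_n\to S_N$, and adjacent transpositions generate $S_n$,'' which is cleaner and settles the ordering issue you flag by construction. What the paper's more laborious route buys is a concrete sorting schedule (right-to-left, one position per step) that feeds directly into the hardware algorithm of Section~\ref{sec:MF_V} and the latency count of~\eqref{eq:Sec5_latency_pi}; your bubble/insertion sort would work equally well mathematically but would yield a different (and potentially longer) factor sequence, so if you carry your argument into the implementation you should adopt the paper's right-to-left schedule rather than a generic bubble sort.
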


First, we provide the explicit expression for $n-1$ sub-shuffling matrices ${\mathbf{V}}_{i-1,i},i\in[1,n)$ for length-$N$ polar codes

\begin{equation}\label{eq:Vi-1i}
	\begin{aligned}
		&\bm{b}^\mathsf{T}\cdot{\mathbf{V}}_{i-1,i}\\
		&=[\overbrace{\underbrace{\bm{b}_{0}^{2^{i-1}-1}\;\bm{b}_{2^{i}}^{2^{i}+2^{i-1}-1}\;\bm{b}_{2^{i-1}}^{2^{i}-1}\;\bm{b}_{2^{i}+2^{i-1}}^{2^{i+1}-1}}_{2^{i+1}}\;\dots\;}^{2^{n}/2^{i+1}=2^{n-i-1}\;\mathrm{groups}}],
	\end{aligned}
\end{equation}
where ${\mathbf{V}}_{i-1,i}$ divides the input $\bm{b}$ into $2^{n-i-1}$ groups of equal length $2^{i+1}$ and shuffles based on 4 sub-vectors within each group.
For an intuitive understanding of Theorem~\ref{thm:them_1}, we use Fig.~\ref{fig:sub_shuffling} to illustrate a straightforward example of length-$16$ polar codes, which comprises $3$ sub-shuffling matrices $\{{\mathbf{V}}_{0,1}\;{\mathbf{V}}_{1,2}\;{\mathbf{V}}_{2,3}\}$ and $4$ binary column vectors $\{\bm{b}_{0}\;\bm{b}_{1}\;\bm{b}_{2}\;\bm{b}_{3}\}$ corresponding to~\eqref{eq:m_i} and~\eqref{eq:Vi-1i}.
Each sub-shuffling matrix describes a unique sub-routing in hardware.
Note that, due to the recursive construction of polar codes, $n-1$ sub-routings for length-$N$ polar codes can be decomposed into two independent copies of $n-2$ sub-routings for length-$\frac{N}{2}$ polar codes, as shown in Fig.~\ref{fig:sub_shuffling}.
Before the proof of Theorem~\ref{thm:them_1}, it is useful to introduce additional lemmas.
\begin{Lem}\label{lem:lem_1}
	$\forall$ $i\in[1,n)$, ${\mathbf{V}}_{i-1,i}$ is an involutory matrix, i.e., ${\mathbf{V}}_{i-1,i}={\mathbf{V}}_{i,i-1}$, ${\mathbf{V}}_{i-1,i}\cdot{\mathbf{V}}_{i,i-1}={\mathbf{I}}_{N}$.
\end{Lem}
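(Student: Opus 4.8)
The plan is to show that the permutation underlying $\mathbf{V}_{i-1,i}$ is its own inverse, from which both asserted identities follow immediately. Since $\mathbf{V}_{i-1,i}$ is defined in~\eqref{eq:Vi-1i} purely as a rearrangement of the entries of an arbitrary input vector $\bm{b}$, it is a permutation matrix; call the associated permutation $\sigma$. The matrix $\mathbf{V}_{i,i-1}$ denotes the reverse routing, i.e., the permutation matrix of $\sigma^{-1}$. Thus it suffices to prove $\sigma=\sigma^{-1}$, equivalently $\sigma^2=\mathrm{id}$, which yields $\mathbf{V}_{i-1,i}=\mathbf{V}_{i,i-1}$ and $\mathbf{V}_{i-1,i}\cdot\mathbf{V}_{i,i-1}=\mathbf{I}_{N}$ simultaneously.

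First I would read off $\sigma$ explicitly from~\eqref{eq:Vi-1i}. The formula partitions the $N$ indices into $2^{n-i-1}$ consecutive groups of length $2^{i+1}$, and the action is identical (up to the group offset) within each group, so $\sigma$ is a disjoint product of $2^{n-i-1}$ copies of a single local permutation $\tau$ acting on $\{0,\dots,2^{i+1}-1\}$. Within one group I would label the four consecutive sub-blocks of length $2^{i-1}$ as $A=[0,2^{i-1})$, $B=[2^{i-1},2^{i})$, $C=[2^{i},2^{i}+2^{i-1})$, and $D=[2^{i}+2^{i-1},2^{i+1})$. Matching the output order written in~\eqref{eq:Vi-1i}, namely $A\,C\,B\,D$, shows that $\tau$ fixes $A$ and $D$ pointwise and exchanges the two middle sub-blocks via $\tau(j)=j+2^{i-1}$ on $B$ and $\tau(j)=j-2^{i-1}$ on $C$.

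The decisive step is then the elementary observation that a transposition of two blocks is involutory: as a permutation of the four block-positions, $\tau$ is the transposition $(2\ 3)$, so applying it twice sends $B\to C\to B$ and $C\to B\to C$ while $A$ and $D$ are untouched, giving $\tau^2=\mathrm{id}$. Because $\sigma$ is a disjoint product of such $\tau$'s on non-overlapping groups, $\sigma^2=\mathrm{id}$ follows, and the lemma is proved. I would also note as a byproduct that $\mathbf{V}_{i-1,i}$ is then a symmetric permutation matrix, $\mathbf{V}_{i-1,i}=\mathbf{V}_{i-1,i}^{\mathsf{T}}$.

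I expect the only real difficulty to be notational rather than mathematical: correctly extracting the block structure of $\tau$ from the compact sub-vector notation of~\eqref{eq:Vi-1i}, and fixing the matrix convention (left versus right multiplication, row versus column action) so that identifying $\mathbf{V}_{i,i-1}$ with the inverse routing is justified. Reassuringly, the transposition $(2\ 3)$ is its own inverse in either convention, so once the ``swap the two middle quarters'' description is pinned down the involution property is convention-independent and immediate.
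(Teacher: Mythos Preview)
Your proposal is correct and follows exactly the route the paper has in mind: the paper's entire argument is the single line ``Based on the features of~\eqref{eq:Vi-1i}, the proof is straightforward,'' and your write-up simply unpacks that sentence by reading off the ``swap the two middle quarters within each group of $2^{i+1}$'' structure from~\eqref{eq:Vi-1i} and observing that a block transposition is involutory. Your closing remark about the notational ambiguity of $\mathbf{V}_{i,i-1}$ is also apt, since the paper only defines $\mathbf{V}_{i,j}$ for general $i,j$ later, in the proof of Lemma~\ref{lem:lem_2}.
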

Based on the features of~\eqref{eq:Vi-1i}, the proof is straightforward.
Then, we can further derive~\eqref{eq:lem1} as implied in Fig.~\ref{fig:sub_shuffling}
\begin{equation}\label{eq:lem1}
	\left\{
	\begin{aligned}
		&\bm{b}_{i-1}^\mathsf{T}\cdot{\mathbf{V}}_{i-1,i}&&=\bm{b}_{i}^\mathsf{T},\\
		&\bm{b}_{i}^\mathsf{T}\cdot{\mathbf{V}}_{i,i-1}&&=\bm{b}_{i-1}^\mathsf{T},\\
	\end{aligned}
	\right.
	\;\forall\;i\in[1,n).
\end{equation}

\begin{Lem}\label{lem:lem_2}
	\begin{equation}\label{eq:lem_2}
		\bm{b}_{i}^\mathsf{T}\cdot{\mathbf{V}}_{i,j}=\bm{b}_{j}^\mathsf{T},\;\forall\;i,j\in[0,n).
	\end{equation}
\end{Lem}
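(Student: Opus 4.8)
The plan is to prove Lemma~\ref{lem:lem_2} by induction on the distance $|i-j|$, reading the composite symbol ${\mathbf{V}}_{i,j}$ as the ordered product of the adjacent sub-shuffling matrices that connect column index $i$ to column index $j$. Concretely, for $j>i$ I would take ${\mathbf{V}}_{i,j}={\mathbf{V}}_{i,i+1}\cdot{\mathbf{V}}_{i+1,i+2}\cdots{\mathbf{V}}_{j-1,j}$, and for $j<i$ the reversed product, each factor being one of the $n-1$ fixed matrices of~\eqref{eq:Vi-1i}. The base cases are immediate: when $i=j$ the product is empty and equals ${\mathbf{I}}_{N}$, so $\bm{b}_{i}^\mathsf{T}\cdot{\mathbf{I}}_{N}=\bm{b}_{i}^\mathsf{T}$; and when $|i-j|=1$ the claim is exactly the pair of identities in~\eqref{eq:lem1}, which has already been read off from the explicit form~\eqref{eq:Vi-1i}.

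For the inductive step I would assume $\bm{b}_{i}^\mathsf{T}\cdot{\mathbf{V}}_{i,j-1}=\bm{b}_{j-1}^\mathsf{T}$ and prove the statement at distance one larger. Splitting off the last factor and using associativity of matrix multiplication gives
\begin{equation*}
	\bm{b}_{i}^\mathsf{T}\cdot{\mathbf{V}}_{i,j}
	=\left(\bm{b}_{i}^\mathsf{T}\cdot{\mathbf{V}}_{i,j-1}\right)\cdot{\mathbf{V}}_{j-1,j}
	=\bm{b}_{j-1}^\mathsf{T}\cdot{\mathbf{V}}_{j-1,j}
	=\bm{b}_{j}^\mathsf{T},
\end{equation*}
where the second equality is the induction hypothesis and the third is the adjacent case~\eqref{eq:lem1}. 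The point is that the product telescopes: each intermediate row vector is advanced one index at a time, $\bm{b}_{i}^\mathsf{T}\mapsto\bm{b}_{i+1}^\mathsf{T}\mapsto\cdots\mapsto\bm{b}_{j}^\mathsf{T}$, so only the single column being tracked ever appears in the argument and the action on the remaining columns is irrelevant to the statement.

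The opposite direction $j<i$ I would handle with Lemma~\ref{lem:lem_1}: since each adjacent factor is self-inverse and direction-symmetric, i.e. ${\mathbf{V}}_{i-1,i}={\mathbf{V}}_{i,i-1}$ and ${\mathbf{V}}_{i-1,i}\cdot{\mathbf{V}}_{i,i-1}={\mathbf{I}}_{N}$, the reversed product advances $\bm{b}_{i}^\mathsf{T}$ downward through $\bm{b}_{i-1}^\mathsf{T},\dots,\bm{b}_{j}^\mathsf{T}$ by the same telescoping argument, now invoking the second line of~\eqref{eq:lem1} at each step. The hard part will not be the induction, which is routine, but pinning down the correct interpretation of ${\mathbf{V}}_{i,j}$ for non-adjacent indices and verifying that the ordered product is consistent in both directions; once Lemma~\ref{lem:lem_1} guarantees that adjacent factors are involutory and symmetric, the forward and backward telescopes are mirror images and the identity holds for all $i,j\in[0,n)$.
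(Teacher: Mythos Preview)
Your proof is correct and follows essentially the same approach as the paper: define ${\mathbf{V}}_{i,j}$ as the ordered product of adjacent sub-shuffling matrices (with the convention ${\mathbf{V}}_{i,i}={\mathbf{I}}_{N}$), then telescope using~\eqref{eq:lem1} to advance $\bm{b}_{i}^\mathsf{T}$ one index at a time to $\bm{b}_{j}^\mathsf{T}$, with the case $i>j$ handled symmetrically. The only cosmetic difference is that the paper writes the telescoping chain directly rather than framing it as an explicit induction.
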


\begin{proof}
	We define ${\mathbf{V}}_{i,j}={\mathbf{V}}_{i,i+1}\cdot{\mathbf{V}}_{i+1,i+2}\dots{\mathbf{V}}_{j-1,j}$, $\forall\;i,j\in[0,n),\;i<j$ and let ${\mathbf{V}}_{i,j}={\mathbf{I}}_{N}$,  $\forall\;i,j\in[0,n),\;i=j$.
	Based on~\eqref{eq:lem1}, it can be verified that $\bm{b}_{i}^\mathsf{T}\cdot{\mathbf{V}}_{i,j}=\bm{b}_{i}^\mathsf{T}\cdot{\mathbf{V}}_{i,i+1}\cdot{\mathbf{V}}_{i+1,j}=\bm{b}_{i+1}^\mathsf{T}\cdot{\mathbf{V}}_{i+1,j}=\dots=\bm{b}_{j}^\mathsf{T}$.
	For $i\geq j$, a similar proof can be formulated.
\end{proof}

\begin{Lem}\label{lem:lem_3}
	$\forall\;i,j,k\in[0,n),i\neq j,k\neq i$,~\eqref{eq:lemma3} holds.
	\begin{equation}\label{eq:lemma3}
		\!\!\!\bm{b}_{k}^\mathsf{T}\cdot{\mathbf{V}}_{i,j}=\left\{\begin{aligned}
			&\bm{b}_{k}^\mathsf{T},&{\mathrm{if}}\;k\notin[\min(i,j),\max(i,j)],\\
			&\bm{b}_{k+\sgn(i-j)}^\mathsf{T},&{\mathrm{if}}\;k\in[\min(i,j),\max(i,j)].\\
		\end{aligned}
		\right.
	\end{equation}
\end{Lem}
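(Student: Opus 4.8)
The plan is to show that $\mathbf{V}_{i,j}$ permutes the family $\{\bm{b}_0^\mathsf{T},\ldots,\bm{b}_{n-1}^\mathsf{T}\}$ exactly as a cyclic shift of the index block $[\min(i,j),\max(i,j)]$, and then read off the two cases of~\eqref{eq:lemma3} directly from that shift. First I would pin down the action of a \emph{single} sub-shuffle $\mathbf{V}_{m-1,m}$ on every $\bm{b}_k$. Equation~\eqref{eq:lem1} already gives $\bm{b}_{m-1}^\mathsf{T}\mathbf{V}_{m-1,m}=\bm{b}_m^\mathsf{T}$ and $\bm{b}_m^\mathsf{T}\mathbf{V}_{m-1,m}=\bm{b}_{m-1}^\mathsf{T}$, so it remains to verify $\bm{b}_k^\mathsf{T}\mathbf{V}_{m-1,m}=\bm{b}_k^\mathsf{T}$ for every $k\neq m-1,m$. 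This follows from the explicit block form~\eqref{eq:Vi-1i}: $\mathbf{V}_{m-1,m}$ only exchanges the two inner sub-blocks of length $2^{m-1}$ inside each group of length $2^{m+1}$. By~\eqref{eq:m_i}, $\bm{b}_k$ has period $2^{k+1}$, which for $k\le m-2$ divides $2^{m-1}$, so every length-$2^{m-1}$ sub-block of $\bm{b}_k$ is identical and the exchange leaves $\bm{b}_k$ unchanged; for $k\ge m+1$ the vector $\bm{b}_k$ is constant on each length-$2^{m+1}$ group and is untouched by a within-group permutation. Equivalently, $\mathbf{V}_{m-1,m}$ transposes bit-positions $m-1$ and $m$ of the row index while the $r$-th entry of $\bm{b}_k$ is bit $k$ of $r$, which is invariant unless $k\in\{m-1,m\}$. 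Hence each $\mathbf{V}_{m-1,m}$ realizes the transposition $(m-1,m)$ on the labels of the vectors $\bm{b}_k^\mathsf{T}$.

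Next I would exploit the factorization from Lemma~\ref{lem:lem_2}. For $i<j$ we have $\mathbf{V}_{i,j}=\mathbf{V}_{i,i+1}\mathbf{V}_{i+1,i+2}\cdots\mathbf{V}_{j-1,j}$, so by the single-swap action $\bm{b}_k^\mathsf{T}\mathbf{V}_{i,j}=\bm{b}_{P(k)}^\mathsf{T}$, where $P$ is the composition of the adjacent transpositions $(i,i+1),(i+1,i+2),\ldots,(j-1,j)$ applied in that order. I would then compute $P(k)$ by tracking each label: a label $k\notin[i,j]$ is never touched, so $P(k)=k$; a label $k$ with $i<k\le j$ is first displaced by the factor $(k-1,k)$ down to $k-1$ and is then fixed by all later factors, so $P(k)=k-1$; and the label $i$ is carried up step by step to $j$, giving $P(i)=j$ in agreement with Lemma~\ref{lem:lem_2}. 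Since $\sgn(i-j)=-1$ here, this is exactly the stated dichotomy: $\bm{b}_k^\mathsf{T}$ is fixed when $k\notin[i,j]$ and is sent to $\bm{b}_{k-1}^\mathsf{T}=\bm{b}_{k+\sgn(i-j)}^\mathsf{T}$ when $k\in[i,j]$ with $k\neq i$.

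For $i>j$ I would argue symmetrically. By the involutory property of Lemma~\ref{lem:lem_1}, the downward product $\mathbf{V}_{i,j}=\mathbf{V}_{i,i-1}\mathbf{V}_{i-1,i-2}\cdots\mathbf{V}_{j+1,j}$ is built from the same sub-shuffles in the reverse direction, so the induced label permutation is the composition of $(i-1,i),(i-2,i-1),\ldots,(j,j+1)$ in that order. The identical tracking argument now sends every $k$ with $j\le k<i$ up to $k+1$ and fixes every $k\notin[j,i]$. As $\sgn(i-j)=+1$ in this regime, we again obtain $\bm{b}_k^\mathsf{T}\mathbf{V}_{i,j}=\bm{b}_{k+\sgn(i-j)}^\mathsf{T}$ inside the range and $\bm{b}_k^\mathsf{T}$ outside it, which completes the case analysis.

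The one step that requires genuine care, rather than routine bookkeeping, is the ``fixes all other coordinates'' claim for a single sub-shuffle, since~\eqref{eq:lem1} records only the exchange of $\bm{b}_{m-1}$ and $\bm{b}_m$; establishing invariance of the remaining $\bm{b}_k$ needs the periodicity argument from~\eqref{eq:m_i} combined with the precise inner-block structure of~\eqref{eq:Vi-1i} (or, equivalently, the bit-transposition reading of $\mathbf{V}_{m-1,m}$). Once that base action is secured, the remainder is just keeping the composition order straight and confirming that the boundary label $k=\min(i,j)+1$ is indeed the first one displaced; everything else reduces to counting how the telescoping transpositions move each index.
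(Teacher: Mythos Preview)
Your proposal is correct and follows essentially the same route as the paper's proof: both arguments rest on the observation that each basic sub-shuffle $\mathbf{V}_{m-1,m}$ acts on the family $\{\bm{b}_k^\mathsf{T}\}$ as the adjacent transposition $(m-1,m)$, and then telescope the product $\mathbf{V}_{i,j}=\prod\mathbf{V}_{m-1,m}$ to identify the resulting index shift. The paper's version is terser---for $k\notin[\min(i,j),\max(i,j)]$ it simply cites Lemma~\ref{lem:lem_1} together with~\eqref{eq:m_i} and~\eqref{eq:Vi-1i} rather than spelling out the periodicity argument you give, and for $k\in[\min(i,j),\max(i,j)]$ it writes the same one-line telescoping you describe---so your proof is a more explicit rendering of the identical idea, with the bit-transposition reading of $\mathbf{V}_{m-1,m}$ as a pleasant conceptual bonus.
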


An intuitive example for Lemma~\ref{lem:lem_3} is visible from Fig.~\ref{fig:sub_shuffling}:
Using Lemma~\ref{lem:lem_2}, when calculating $\bm{b}_{3}^\mathsf{T}\cdot\mathbf{V}_{0,2}\overset{\ref{lem:lem_2}}{=}\bm{b}_{3}^\mathsf{T}\cdot\mathbf{V}_{0,1}\cdot\mathbf{V}_{1,2}$, we can simply permute $\bm{b}_{3}^\mathsf{T}=[0\;\hdots\;0\;1\;\hdots\;1]^\mathsf{T}$ through $\mathbf{V}_{0,1}$ and $\mathbf{V}_{1,2}$ and find the result still equals to $\bm{b}_{3}^\mathsf{T}$ since $3\notin [0,2]$. Besides, using Lemma~\ref{lem:lem_1} and~\ref{lem:lem_2}, $\bm{b}_{1}^\mathsf{T}\cdot \mathbf{V}_{0,2}\overset{\ref{lem:lem_2}}{=}\bm{b}_{1}^\mathsf{T}\cdot \mathbf{V}_{0,1}\cdot\mathbf{V}_{1,2}\overset{~\ref{lem:lem_1},~\ref{lem:lem_2}}{=}\bm{b}_{0}^\mathsf{T}\cdot\mathbf{V}_{1,2}=\bm{b}_{0}^\mathsf{T}$ since $1\in [0,2]$ and $\sgn(0-2)=-1$.
\begin{proof}
	If $k\notin[\min(i,j),\max(i,j)]$, Lemma~\ref{lem:lem_3} can be verified by Lemma~\ref{lem:lem_1}, \eqref{eq:m_i}, and \eqref{eq:Vi-1i}.
	If $k\in[\min(i,j),\max(i,j)]$, we need to distinguish two cases: for $i<j$, $\bm{b}_{k}^\mathsf{T}\cdot{\mathbf{V}}_{i,j}$ can be represented by \eqref{eq:lemm3ij1} and therefore
	\begin{equation}\label{eq:lemm3ij1}
		\begin{aligned}
			\bm{b}_{k}^\mathsf{T}\cdot{\mathbf{V}}_{i,j}=&\bm{b}_{k}^\mathsf{T}\cdot{\mathbf{V}}_{i,i+1}\dots{\mathbf{V}}_{k-1,k}\cdot{\mathbf{V}}_{k,k+1}\dots{\mathbf{V}}_{j-1,j}\\
			=&\bm{b}_{k-1}^\mathsf{T}\cdot{\mathbf{V}}_{k,k+1}\dots{\mathbf{V}}_{j-1,j}=\bm{b}_{k-1}^\mathsf{T}.\\
		\end{aligned}
	\end{equation}

	For $i>j$, the proof for $\bm{b}_{k}^\mathsf{T}\cdot{\mathbf{V}}_{i,j}=\bm{b}_{k+1}^\mathsf{T}$ is similar.
\end{proof}

\begin{Lem}\label{lem:lem_4}
	For any ${\bf{B}}_{\pi}$, after the matrix multiplication by ${\mathbf{V}}_{i-1,i},\forall\;i\in[1,n)$, ${\bf{B}}_{\pi}$ never contains two identical columns.
\end{Lem}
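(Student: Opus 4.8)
The plan is to show that right-multiplication by a single adjacent sub-shuffling matrix $\mathbf{V}_{i-1,i}$ merely relabels the columns of $\mathbf{B}_{\pi}$ according to the transposition $(i-1\;i)$, and that such a relabelling, being a bijection of the index set $\{0,1,\hdots,n-1\}$, cannot collapse two distinct columns into one. First I would record the structural invariant that makes the statement meaningful: for any PFG $\pi$, the columns of $\mathbf{B}_{\pi}=[\bm{b}_{\pi^{n-1}}\;\hdots\;\bm{b}_{\pi^{0}}]$ are exactly the $n$ basis vectors $\{\bm{b}_{0},\hdots,\bm{b}_{n-1}\}$ of~\eqref{eq:m_i}, each appearing once, since $\pi$ permutes the $n$ stage indices. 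From~\eqref{eq:m_i} these basis vectors are pairwise distinct ($\bm{b}_{a}=\bm{b}_{b}$ forces $a=b$, since their block-periods $2^{a+1}$ and $2^{b+1}$ differ), so ``no two identical columns'' is equivalent to ``no two identical column labels''.

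Next I would pin down the action of one adjacent sub-shuffling on a basis vector. Combining the involution property of Lemma~\ref{lem:lem_1} with~\eqref{eq:lem1} gives $\bm{b}_{i-1}^\mathsf{T}\cdot{\mathbf{V}}_{i-1,i}=\bm{b}_{i}^\mathsf{T}$ and $\bm{b}_{i}^\mathsf{T}\cdot{\mathbf{V}}_{i-1,i}=\bm{b}_{i-1}^\mathsf{T}$, while Lemma~\ref{lem:lem_3} specialized to the adjacent case ($k\notin[i-1,i]$) gives $\bm{b}_{k}^\mathsf{T}\cdot{\mathbf{V}}_{i-1,i}=\bm{b}_{k}^\mathsf{T}$. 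Hence, acting on column labels, ${\mathbf{V}}_{i-1,i}$ realizes exactly the transposition $\tau=(i-1\;i)$ and fixes every other label. This is the step I expect to carry the proof, and it is the only place where any real work is required; everything downstream is a counting argument.

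Finally I would conclude by the bijection argument. After the multiplication the new label sequence is $\tau(\pi^{0}),\hdots,\tau(\pi^{n-1})$; because $\{\pi^{0},\hdots,\pi^{n-1}\}=\{0,\hdots,n-1\}$ and $\tau$ is a bijection of this set, the image is again all of $\{0,\hdots,n-1\}$ with no repetition. By the distinctness of the basis vectors noted above, the transformed $\mathbf{B}_{\pi}$ therefore still consists of $n$ pairwise-distinct columns, which is the claim. The main obstacle is conceptual rather than computational: one must be careful that Lemmas~\ref{lem:lem_1}--\ref{lem:lem_3}, which are phrased for the row vectors $\bm{b}_{k}^\mathsf{T}$, correctly describe the columnwise relabelling induced on $\mathbf{B}_{\pi}$ through the transpose convention of~\eqref{eq:bit_index}; once the orientation is fixed the result is immediate. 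This invariant is precisely what is needed to drive the inductive decomposition of Theorem~\ref{thm:them_1}, since it guarantees that every intermediate product remains a legitimate PFG matrix on which the preceding lemmas may be reapplied.
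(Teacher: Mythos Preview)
Your proposal is correct and follows essentially the same route as the paper: both arguments use Lemmas~\ref{lem:lem_1} and~\ref{lem:lem_3} to show that right-multiplication by $\mathbf{V}_{i-1,i}$ acts on the columns of $\mathbf{B}_{\pi}$ as the transposition swapping $\bm{b}_{i-1}$ and $\bm{b}_{i}$ while fixing all other $\bm{b}_{k}$, hence the column set remains a permutation of the distinct basis vectors. The paper simply writes this out explicitly by locating the positions $x,y$ with $\pi^{x}=i-1$, $\pi^{y}=i$ and displaying the swap in~\eqref{eq:lemm4diff}; your added remark that the $\bm{b}_{a}$ are pairwise distinct makes explicit a fact the paper takes for granted.
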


\begin{proof}
	$\forall\;i\in[1,n)$, let $i-1=\pi^{x},i=\pi^{y}$, using Lemma~\ref{lem:lem_1} and Lemma~\ref{lem:lem_3}, it is apparent that
	\begin{equation}\label{eq:lemm4diff}
		\begin{aligned}
			&{\mathbf{B}}_{\pi}^\mathsf{T}\cdot{\mathbf{V}}_{i-1,i}\\
			=&[\bm{b}_{\pi^{n-1}}^\mathsf{T}\;\dots\;\bm{b}_{\pi^{x}}^\mathsf{T}\;\dots\;\bm{b}_{\pi^{y}}^\mathsf{T}\;\dots\;\bm{b}_{\pi^{0}}^\mathsf{T}]\cdot{\mathbf{V}}_{i-1,i}\\
			\overset{\ref{lem:lem_1},~\ref{lem:lem_3}}{=}&[\bm{b}_{\pi^{n-1}}^\mathsf{T}\;\dots\;\bm{b}_{\pi^{x}}^\mathsf{T}\cdot{\mathbf{V}}_{i-1,i}\;\dots\;\bm{b}_{\pi^{y}}^\mathsf{T}\cdot{\mathbf{V}}_{i-1,i}\;\dots\;\bm{b}_{\pi^{0}}^\mathsf{T}]\\
			=&[\bm{b}_{\pi^{n-1}}^\mathsf{T}\;\dots\;\bm{b}_{\pi^{y}}^\mathsf{T}\;\dots\;\bm{b}_{\pi^{x}}^\mathsf{T}\;\dots\;\bm{b}_{\pi^{0}}^\mathsf{T}].\\
		\end{aligned}
	\end{equation}
	Hence, for any $\pi$, the matrix multiplication by any ${\mathbf{V}}_{i-1,i},i\in[1,n)$ is equivalent to swapping two columns of ${\mathbf{B}}_{\pi}$.
\end{proof}

\begin{Lem}\label{lem:lem_5}
	For any ${\bf{B}}_{\pi}$, if executing a right-to-left column-wise transformation to realize ${\mathbf{B}}_{\pi}\rightarrow{\mathbf{B}}$, the previously matched columns are never influenced by the current shuffling matrix.
\end{Lem}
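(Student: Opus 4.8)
The plan is to formalize the right-to-left matching as an induction over the matrix positions and to show that the ``active window'' of the sub-shuffling matrix applied at each stage never reaches the already-matched positions. I index the columns of the current matrix by a position $q\in[0,n)$, with $q=0$ the rightmost column (which must become $\bm{b}_{0}$ in $\mathbf{B}$) and $q=n-1$ the leftmost, and I write $c(q)$ for the index of the column presently sitting at position $q$, so that initially $c(q)=\pi^{q}$ and the goal of the transformation $\mathbf{B}_{\pi}\rightarrow\mathbf{B}$ is $c(q)=q$ for every $q$.

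First I would set up the matching invariant. Processing positions from the right, after the first $t$ positions $0,\dots,t-1$ have been matched I claim they hold exactly $\bm{b}_{0},\dots,\bm{b}_{t-1}$, i.e.\ $c(q)=q$ for $q<t$. Granting this invariant, the indices still occupying positions $t,\dots,n-1$ are precisely $\{t,\dots,n-1\}$; in particular the index $s_{t}:=c(t)$ currently at position $t$ satisfies $s_{t}\ge t$. To match position $t$ I would apply the shuffling matrix $\mathbf{V}_{s_{t},t}$, which by Lemma~\ref{lem:lem_2} realizes $\bm{b}_{s_{t}}^{\mathsf{T}}\mathbf{V}_{s_{t},t}=\bm{b}_{t}^{\mathsf{T}}$ and is, by its construction in~\eqref{eq:Vi-1i}, a product of the fixed sub-shuffling matrices $\mathbf{V}_{i-1,i}$; the degenerate case $s_{t}=t$ gives $\mathbf{V}_{t,t}=\mathbf{I}_{N}$ and leaves everything fixed.

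The heart of the argument is then a direct invocation of Lemma~\ref{lem:lem_3} on the matched columns. Take any previously-matched column, i.e.\ any $\bm{b}_{k}$ with $k<t$. Since $s_{t}\ge t$, the active window of $\mathbf{V}_{s_{t},t}$ is $[\min(s_{t},t),\max(s_{t},t)]=[t,s_{t}]$, and $k<t$ forces $k\notin[t,s_{t}]$ as well as $k\ne s_{t}$ and $k\ne t$, so the hypotheses of Lemma~\ref{lem:lem_3} are satisfied and its first branch yields $\bm{b}_{k}^{\mathsf{T}}\mathbf{V}_{s_{t},t}=\bm{b}_{k}^{\mathsf{T}}$. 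Hence every matched column is left untouched by the shuffling matrix used at stage $t$, which is exactly the assertion of Lemma~\ref{lem:lem_5}. To close the induction I would note that the same computation keeps the invariant alive: Lemma~\ref{lem:lem_3} shifts each index $k\in[t,s_{t}-1]$ to $k+1$ and fixes those above $s_{t}$, so after stage $t$ position $t$ carries $\bm{b}_{t}$ while positions $0,\dots,t-1$ still carry $\bm{b}_{0},\dots,\bm{b}_{t-1}$; the matched block has merely grown by one.

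The hard part will be pinning down the invariant cleanly rather than the final inequality, which is immediate once the window is known. One must argue that the matched positions always hold the contiguous block of smallest indices $\{0,\dots,t-1\}$ --- this is what guarantees $s_{t}\ge t$ and hence that the active window $[t,s_{t}]$ lies strictly above every matched index. Some care is also needed to confirm that the side-condition $k\ne i$ of Lemma~\ref{lem:lem_3} is automatically met (it holds because $k<t\le s_{t}=i$) and to dispose of the trivial $s_{t}=t$ stage separately. With the invariant in hand, Lemma~\ref{lem:lem_5} feeds directly into Theorem~\ref{thm:them_1}: chaining the (at most) $n-1$ matching stages expresses $\mathbf{V}_{\pi}$ as a combination of the fixed sub-shuffling matrices $\mathbf{V}_{i-1,i},\,i\in[1,n)$.
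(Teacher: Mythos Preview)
Your proposal is correct and follows essentially the same route as the paper's proof: both set up the state after $t$ (the paper's $i$) rightmost columns are matched, observe that the column index $s_t$ (the paper's $\pi_{s_i}^{i}$) currently at position $t$ satisfies $s_t\ge t$, and then invoke the ``outside the window'' branch of Lemma~\ref{lem:lem_3} to conclude that every $\bm{b}_k$ with $k<t$ is fixed by $\mathbf{V}_{s_t,t}$. The only cosmetic difference is that the paper appeals to Lemma~\ref{lem:lem_4} to justify $s_t\ge t$, whereas you obtain it directly from your permutation invariant maintained through Lemma~\ref{lem:lem_3}; both are valid and equivalent.
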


\begin{proof}
	$\forall i\in[0,n]$, there are $i$ matched columns on the right and $n-i$ un-matched columns on the left of ${\bf{B}}_{\pi}$ compared with $\bf{B}$, as shown in \eqref{eq:lamma5_1}

	\begin{equation}\label{eq:lamma5_1}
		\begin{aligned}
			&{\mathbf{B}}_{\pi_{s_i}}=[\underbrace{\bm{b}_{\pi_{s_i}^{n-1}}\;\dots\;\bm{b}_{\pi_{s_i}^{i}}}_{\rm{un-matched}}\;\underbrace{\bm{b}_{i-1}\;\dots\;\bm{b}_{0}}_{\mathrm{matched}}]\\
			&\rightarrow{\mathbf{B}}=[\bm{b}_{n-1}\;\dots\;\bm{b}_{i}\;\bm{b}_{i-1}\;\dots\;\bm{b}_{0}],\\
		\end{aligned}
	\end{equation}
	where $\pi_{s_i}$ denotes the original $\pi$ after $i$ transformations.
	Subsequently, we multiply the shuffling matrix ${\mathbf{V}}_{\pi_{s_i}^{i},i}$ to obtain~\eqref{eq:lamma5_2}
	\begin{equation}\label{eq:lamma5_2}
		\begin{aligned}
			&[\underbrace{\bm{b}_{\pi_{s_i}^{n-1}}^\mathsf{T}\;\dots\;\bm{b}_{\pi_{s_i}^{i}}^\mathsf{T}}_{\rm{un-matched}}\;\underbrace{\bm{b}_{i-1}^\mathsf{T}\;\dots\;\bm{b}_{0}^\mathsf{T}}_{\rm{matched}}]\cdot{\bf{V}}_{\pi_{s_i}^{i},i}\\
			=&[\{\underbrace{\bm{b}_{\pi_{s_i}^{n-1}}^\mathsf{T}\;\dots\;\bm{b}_{\pi_{s_i}^{i}}^\mathsf{T}}_{\rm{un-matched}}\}\cdot{\bf{V}}_{\pi_{s_i}^{i},i}\;\{\underbrace{\bm{b}_{i-1}^\mathsf{T}\;\dots\;\bm{b}_{0}^\mathsf{T}}_{\rm{matched}}\}\cdot{\bf{V}}_{\pi_{s_i}^{i},i}]\\
			=&[\underbrace{\bm{b}_{\pi_{s_{i+1}}^{n-1}}^\mathsf{T}\;\dots\;\bm{b}_{\pi_{s_{i+1}}^{i+1}}^\mathsf{T}}_{\rm{un-matched}}\;\underbrace{\bm{b}_{i}^\mathsf{T}\;\bm{b}_{i-1}^\mathsf{T}\;\dots\;\bm{b}_{0}^\mathsf{T}}_{\rm{matched}}].\\
		\end{aligned}
	\end{equation}

	In accordance with Lemma~\ref{lem:lem_4}, the un-matched columns in $\mathbf{B}_{\pi}$ never contain any element of $[\bm{b}_{i-1}^\mathsf{T}\;\dots\;\bm{b}_{0}^\mathsf{T}]$.
	Therefore, we can derive that $\pi_{s_i}^{i}>i-1$ and $[\bm{b}_{i-1}^\mathsf{T}\;\dots\;\bm{b}_{0}]\cdot{\mathbf{V}}_{\pi_{s_i}^{i},i}\overset{\ref{lem:lem_3}}{=}[\bm{b}_{i-1}^\mathsf{T}\;\dots\;\bm{b}_{0}^\mathsf{T}]$ always holds.
	Besides, combined with Lemma~\ref{lem:lem_1}, we further obtain $\bm{b}_{\pi_{s_{i}}^{i}}^\mathsf{T}\cdot\mathbf{V}_{\pi_{s_{i}}^{i},i}=\bm{b}_{i}^\mathsf{T}$ and let $\pi_{s+1}$ denote the original $\pi$ after $i+1$ transformations.
	Hence, the proof of Lemma~\ref{lem:lem_5} has been completed.
\end{proof}

In conclusion, combined with the aforementioned Lemma~\ref{lem:lem_1}-\ref{lem:lem_5}, the proof of Theorem~\ref{thm:them_1} is provided below.

\begin{proof}[Proof of Theorem 1]
	The decomposition process of any permutation is illustrated in~\eqref{eq:thm_proof1}, which clearly presents how to generate the shuffling matrix $\mathbf{V}_{\pi}$.
	\begin{equation}\label{eq:thm_proof1}
		\begin{aligned}
			&{\mathbf{B}}_{\pi}^\mathsf{T}\cdot{\mathbf{V}}_{\pi^{0},0}={\mathbf{B}}_{\pi_{s_1}}^\mathsf{T}=[\bm{b}_{\pi_{s_1}^{n-1}}^\mathsf{T}\;\dots\;\bm{b}_{\pi_{s_1}^{1}}^\mathsf{T}\;\bm{b}_{0}^\mathsf{T}],\\
			&{\mathbf{B}}_{\pi_{s_1}}^\mathsf{T}\cdot{\mathbf{V}}_{\pi_{s_1}^{1},1}={\mathbf{B}}_{\pi_{s_2}}^\mathsf{T}=[\bm{b}_{\pi_{s_2}^{n-1}}^\mathsf{T}\;\dots\;\bm{b}_{\pi_{s_2}^{2}}^\mathsf{T}\;\bm{b}_{1}^\mathsf{T}\;\bm{b}_{0}^\mathsf{T}],\\
			&\hdots\\
			&{\mathbf{B}}_{\pi_{s_{n-1}}}^\mathsf{T}\cdot{\mathbf{V}}_{\pi_{s_{n-1}}^{n-1},n-1}={\mathbf{B}}^\mathsf{T}=[\bm{b}_{n-1}^\mathsf{T}\;\dots\;\bm{b}_{1}^\mathsf{T}\;\bm{b}_{0}^\mathsf{T}],\\
		\end{aligned}
	\end{equation}
	where we rewrite $\pi^0$ as $\pi_{s_{0}}^0$ to obtain a unified mathematical notation and further simplify the process of~\eqref{eq:thm_proof1} by~\eqref{eq:thm_proof2}

	\begin{equation}\label{eq:thm_proof2}
		\begin{aligned}
			{\mathbf{B}}_{\pi}^\mathsf{T}\cdot{\mathbf{V}}_{\pi}=&{\mathbf{B}}_{\pi}^\mathsf{T}\cdot{\mathbf{V}}_{\pi_{s_0}^{0},0}\cdot{\mathbf{V}}_{\pi_{s_1}^{1},1}\hdots{\mathbf{V}}_{\pi_{s_{n-1}}^{n-1},n-1}\\
			=&{\mathbf{B}}_{\pi}^\mathsf{T}\cdot\prod\limits_{i=0}^{n-1}{\mathbf{V}}_{\pi_{s_i}^{i},i},\\
		\end{aligned}
	\end{equation}
	in which we obtain the final expression for $\mathbf{V}_{\pi}$ as
	\begin{equation}\label{eq:thm_proof3}
		\begin{aligned}
			\mathbf{V}_{\pi}=\prod\limits_{i=0}^{n-1}{\mathbf{V}}_{\pi_{s_i}^{i},i},
		\end{aligned}
	\end{equation}
	where ${\bf{V}}_{\pi_{s_i}^{i},i}$ can be expressed as a product from $n-1$ sub-shuffling matrices ${\bf{V}}_{i-1,i},\;i\in[1,n)$, based on Lemma~\ref{lem:lem_2}.
\end{proof}

\begin{algorithm}[t]
	\caption{\texttt{Generation of Permutations by A Matrix Decomposition}}\label{alg:decomposition}
	\KwIn{$\mathsf{R}_0$, PFG $=[\pi^0\;\pi^1\;\hdots\;\pi^{n-1}]$, and OFG~$=[0\;1\;\hdots\;n-1]$}
	\KwOut{$\mathsf{R}_0$}
	\tcp{$\!\!$initialization}
	\upshape{$\bm{s}\leftarrow\{{\mathbf{0}}\}$}; \tcp{$\!\!$store $\pi_{s_0}^{0}$, $\pi_{s_1}^{1}$, $\hdots$, $\pi_{s_{n-1}}^{n-1}$}
	\tcp{$\!\!n-1$ sub-shuffling matrices}
	\upshape{${\mathbf{V}}_{\mathrm{set}}=[{\mathbf{V}}_{0,1}\;{\mathbf{V}}_{1,2}\;\dots\;{\mathbf{V}}_{n-2,n-1}]$}\;
	\tcp{$\!\!$generate ${\mathbf{V}}_{\pi}$}
	\For{\upshape $i=0$ to $n-1$}
	{
		\upshape{$s=$PFG[$i$];} \tcp{$\!\!$current column $\pi_{s_i}^{i}$}
		\upshape{$e=$OFG[$i$];} \tcp{$\!\!$aimed column $i$}
		\tcp{$\!\!$update PFG by $\mathbf{V}_{s,e}$}
		\For{\upshape $j=i$ to $n-1$}
		{
			\upshape{$\mathrm{PFG}[j]\leftarrow\mytextsf{updateStage}(\mathrm{PFG}[j],s,e)$}\;
		}
		\upshape{$\bm{s}[i]=s$;} \tcp{store the current ${\pi_{s_{i}}^{i}}$}
	}
	\tcp{$\!\!$execute ${\mathbf{V}}_{\pi}$ to permute input LLRs}
	\For{\upshape $i=0$ to $n-1$}
	{
		\upshape{$\mathsf{R}_0\leftarrow\mytextsf{subRouting}(\mathsf{R}_0,{\mathbf{V}}_{\mathrm{set}},\bm{s}[i],\mathrm{OFG}[i])$}\;
	}
\end{algorithm}
\begin{algorithm}[t]
	\caption{\texttt{updateStage()}}\label{alg:updateFG}
	\KwIn{\upshape $\pi^{\mathrm{in}}$, $s$, $e$}
	\KwOut{\upshape $\pi^{\mathrm{out}}$}
	\If{\upshape $\pi^{\mathrm{in}}==s$}
	{
		\upshape{$\pi^{\mathrm{out}}=e$;} \tcp{$\!\!$Lemma~\ref{lem:lem_2}}
	}
	\ElseIf{\upshape ${\pi^{\mathrm{in}}\in[\min(s,e),\max(s,e)]}$ and $s\neq e$}
	{
		\upshape{$\pi^{\mathrm{out}}=\pi^{\mathrm{in}}+\sgn(s-e)$;} \tcp{$\!\!$Lemma~\ref{lem:lem_3}}
	}
	\Else
	{
		\upshape{$\pi^{\mathrm{out}}=\pi^{\mathrm{in}}$;} \tcp{$\!\!$keeps constant}
	}
\end{algorithm}
\begin{algorithm}[t]
	\caption{\texttt{subRouting()}}\label{alg:SubRouting}
	\KwIn{$\mathsf{R}_0$, ${\mathbf{V}}_{\mathrm{set}}$ $s$, $e$}
	\KwOut{${\mathbf{V}}_{\pi}$}
	\If{$s<e$}
	{
		\For{\upshape $i=s$; $i<=e-1$; $i++$}
		{
			\upshape{$\mathsf{R}_0^\mathsf{T}=\mathsf{R}_0^\mathsf{T}\cdot{\mathbf{V}}_{\mathrm{set}}[i]$}\;
		}
	}
	\ElseIf {$s>e$}
	{
		\For{\upshape $i=s-1$; $i>=e$; $i--$}
		{
			\upshape{$\mathsf{R}_0^\mathsf{T}=\mathsf{R}_0^\mathsf{T}\cdot{\mathbf{V}}_{\mathrm{set}}[i]$}\;
		}
	}
\end{algorithm}

\subsection{A Hardware-Friendly Algorithm by Matrix Decomposition}\label{sec:MF_V}
\pf{Note that the above derivation in Section~\ref{sec:Formular_resp} helps to interpret how to generate $\mathbf{V}_{\pi}$ from an algorithm perspective.
From a hardware perspective, the generation process is useful as it enables the BPL decoder to gradually permute the input LLRs into a specified order that is equivalent to multiplying by $\mathbf{V}_{\pi}$ of~\eqref{eq:thm_proof3}.}
An intuitive example is shown in Fig.~\ref{fig:RoutingCase}:
\pf{To generate $\pi_1=[m_2\;m_0\;m_1]$ of length-$8$ polar codes, we get $\mathbf{V}_{\pi_1}=\mathbf{V}_{1,2}\cdot\mathbf{V}_{0,1}$ based on~\eqref{eq:thm_proof3}.
Then, we permute $\bm{u}$ and $\bm{x}$ through two routings corresponding to an application of $\mathbf{V}_{1,2}$ followed by $\mathbf{V}_{0,1}$, which is equivalent to directly passing through the routing of Fig.~\ref{fig:permutations}.}
Herein, the generation of $\mathbf{V}_{\pi}$ and the process of shuffling the input LLRs are summarized~in~Algorithm~\ref{alg:decomposition}.

\pf{Algorithm~\ref{alg:decomposition} has two main phases: \texttt{generate ${\mathbf{V}}_{\pi}$} and \texttt{execute ${\mathbf{V}}_{\pi}$}.
To initialize ${\mathbf{V}}_{\mathrm{set}}$, $n-1$ sub-shuffling matrices ${\mathbf{V}}_{i-1,i},i\in[1,n)$ defined in~\eqref{eq:Vi-1i} are loaded to it.
The vector $\bm{s}$ stores the sequences of sub-shuffling indices that generate $\pi_{s_{i}}^{i}$ as in~\eqref{eq:thm_proof3} and are found during the decomposition of the shuffling.
To fill $\bm{s}$, we run a $\mytextsf{for}$ loop in lines $3-8$ of Algorithm~\ref{alg:decomposition} to perform the right-to-left column-wise transformation as shown in~\eqref{eq:thm_proof1}.
Performing the $i$-th loop is equivalent to updating the PFG by $\mathbf{V}_{\pi_{s_i}^{i},i}$ (i.e., ${\bf{B}}_{\pi_{s_{i}}}\rightarrow{\bf{B}}_{\pi_{s_{i+1}}}$ in~\eqref{eq:thm_proof1}).
The functions of Lemma~\ref{lem:lem_2} and Lemma~\ref{lem:lem_3} are implemented by the \mytextsf{updateStage()} function as shown in Algorithm~\ref{alg:updateFG}.
Note that for length-$N$ polar codes, we can fill $\bm{s}$ completely within $n$ steps to generate ${\mathbf{V}}_{\pi}$.
}

\begin{figure}[t]
	\centering
	\includegraphics[width=0.9\linewidth]{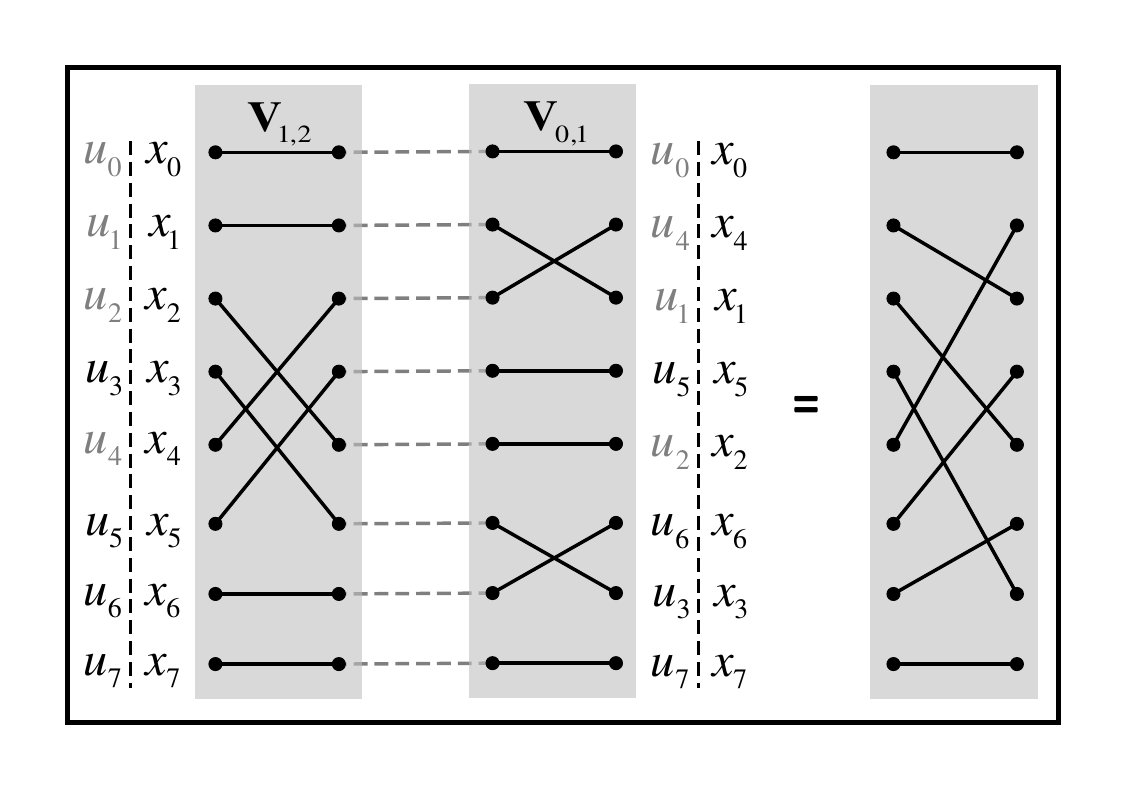}
	\caption{Shuffling the input LLRs based on a matrix decomposition for $\pi_1=[m_2\;m_0\;m_1]$.}\label{fig:RoutingCase}
\end{figure}

Subsequently, we run \mytextsf{subRouting()} as shown in Algorithm~\ref{alg:SubRouting} to permute the input LLRs $\mathsf{R}_0$ (or $\mathsf{L}_n$) based on the stored $\bm{s}$.
This function multiplies $\mathsf{R}_0$ by $\mathbf{V}_{\pi_{s_i}^{i},i}$ that can be decomposed into a product of $|\pi_{s_i}^{i}-i|$ sub-shuffling matrices.
Finally, we transmit the permuted $\mathsf{R}_0$ to the BP decoder based on the OFG.
\pf{The corresponding implementation is further explained in Section~\ref{subsec:ffgpg}.}

\begin{figure}[t]
	\centering
	\includegraphics[width = \linewidth]{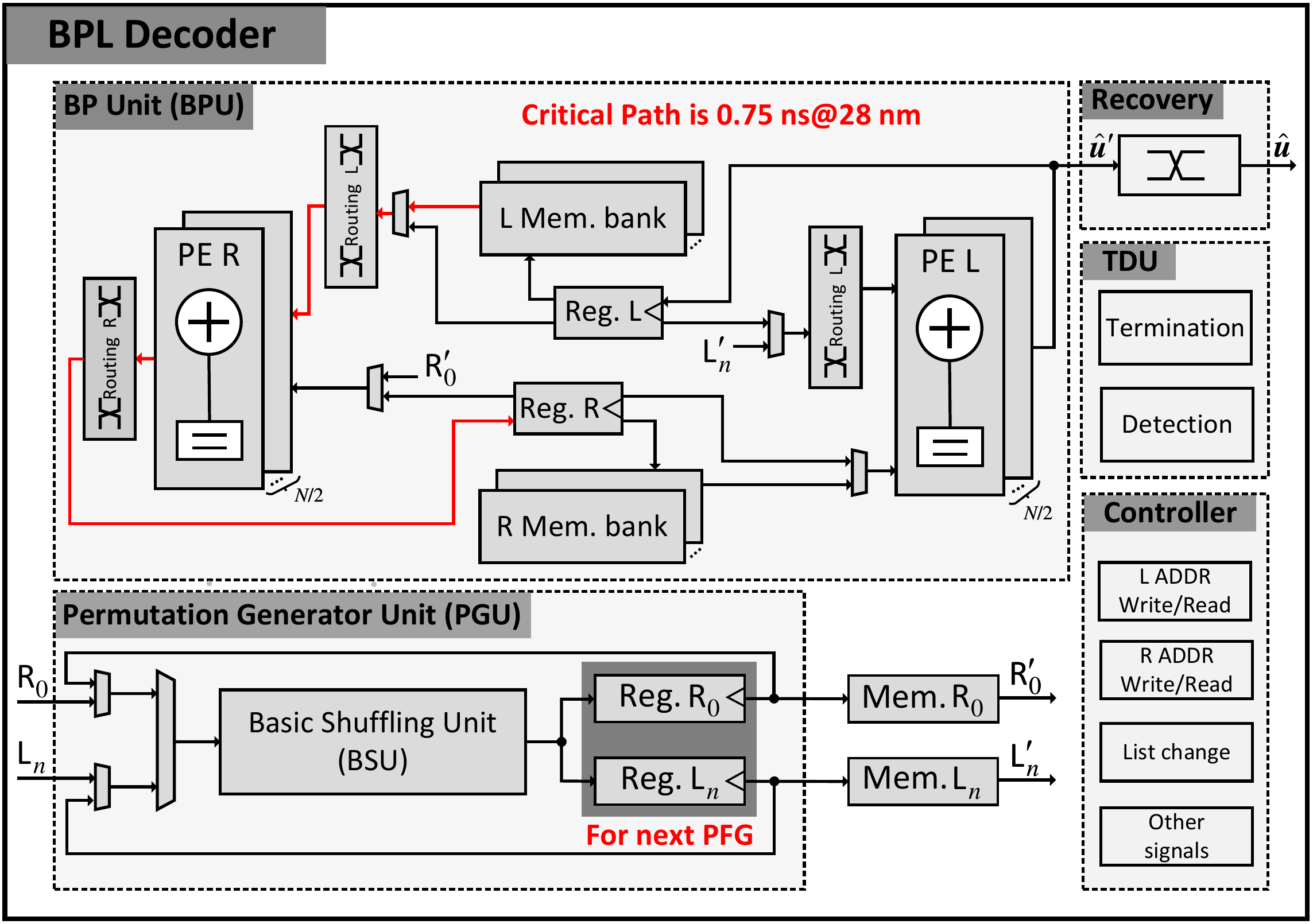}
	\caption{Hardware architecture of our BPL decoder, where solid black lines denote data signals and solid red lines denote the critical path.}
	\label{fig:BPL_topdesign}
\end{figure}

\section{Proposed BPL Decoder Architecture}\label{sec:bpl_flexible_fg_gen}
In this section, we present the architecture of our BPL decoder, which is the first hardware implementation of a BPL decoder for polar codes to the best of our knowledge.
Fig.~\ref{fig:BPL_topdesign} illustrates the overall hardware architecture, which comprises a double-column bidirectional-propagation BP unit (BPU)~\cite{chen2019}, a permutation generator unit (PGU) as proposed in the previous section to generate flexible permutations, a termination and detection unit (TDU), a recovery module, and a controller.

\subsection{Architecture Overview}\label{sec:bpl_decoder_overview}
For the underlying BPU, we employ the SOA double-column bidirectional-propagation architecture~\cite{chen2019}.
The dataflows of $\mathsf{R}$- and $\mathsf{L}$-messages are performed simultaneously, which means that $\mathsf{R}$-messages at the $j$-th stage and $\mathsf{L}$-messages at the $(n-j)$-th stage are calculated in the same clock cycle (CC).
We store the input LLRs $\mathsf{R}_{0}$ and $\mathsf{L}_{n}$ in the memory $\mathsf{R}_{0}$ and the memory $\mathsf{L}_{n}$, respectively.
The PGU shuffles the input LLRs according to Algorithm~\ref{alg:decomposition} to generate the shuffled input LLRs $\mathsf{R}'_{0}$ and $\mathsf{L}'_{n}$ for $\mathbb{L}$ PFGs.
Same as~\cite{chen2019}, the proposed decoder reduces the number of CCs per internal iteration from $n$ to $n-1$ by removing the calculation of $\mathsf{R}_{n}$ and $\mathsf{L}_{0}$ messages.
Besides, we use a sign-assisted (SA) termination strategy~\cite{Sun2016ISCAS,Ji2020TCAS1} to check the sign convergence of the internal results.
The SA strategy terminates decoding when hard decisions in the BPU are identical in three consecutive iterations, as shown in the termination module on the right side of Fig.~\ref{fig:BPL_topdesign}.
Finally, a detection module performs the CRC detection to judge whether output the current decoded $\bm{\hat{u}}$ or decode on a new PFG further, which is discussed in Section~\ref{sec:TD}.

\subsection{Permutation Generation Unit (PGU)}\label{subsec:ffgpg}
\subsubsection{Hardware Architecture}
Based on the shuffling matrix derived in Section~\ref{sec:Theorem_drivation}, we implement a low-complexity permutation network PGU for the BPL decoder, which comprises a basic shuffling unit (BSU), two registers for $\mathsf{R}_0$ and $\mathsf{L}_n$, and some MUXes.
For the BSU shown in Fig.~\ref{fig:BSU}, we instantiate $n-1$ fixed sub-routings ($\mathbf{V}_{i-1,i},i\in[1,n)$) to realize all the required basic shuffling.
All the stage orders of the proposed near-optimal PFG set $\mathcal{L}$ ($[\set_l^{0}\;\set_l^{1}\;\hdots\;\set_l^{n-1}],l\in[0,\mathbb{L})$ from the SG algorithm in Section~\ref{sec:mc_sec3}) is generated offline and stored in the PFG memory.
The inputs of the BSU are the $NQ$-bit initial input LLRs ($\mathsf{R}_{0}$ or $\mathsf{L}_{n}$, and each LLR is quantized as $Q$ bits) to be shuffled and the PFG index $l$ of the selected PFG.
First, the controller of the BSU uses $n$ CCs to obtain the set $\bm{s}$ of Algorithm~\ref{alg:decomposition} based on the output from the PFG memory, i.e., the stage orders of $\set_l$.
Subsequently, the BSU executes Algorithm~\ref{alg:SubRouting} step by step and controls the MUX to store the correct shuffled results into registers.
Note that, in order to make a trade-off between the hardware complexity and the latency of the permutation, we only perform one sub-routing per CC to shuffle the input LLRs.
For example, to realize $\mathbf{V}_{1,5}$, we decompose it as $\mathbf{V}_{1,2},\mathbf{V}_{2,3},\mathbf{V}_{3,4}$, and $\mathbf{V}_{4,5}$ to sequentially perform the desired permutations in $4$ CCs.
This decomposition process means that the latency for the permutation generation of any PFG is varying, as shown in~\eqref{eq:Sec5_latency_pi},
\begin{equation}\label{eq:Sec5_latency_pi}
	\mathfrak{L}_{\pi}=\sum_{i=0}^{n-1}|\pi^{i}_{s_{i}}-i| + n,
\end{equation}
where $\pi^{i}_{s_{i}}$ comes from~\eqref{eq:thm_proof3}.
The maximum latency of the BSU is $\mathfrak{L}_{\pi}=\frac{n\cdot(n-1)}{2}+n$ CCs when the PFG is $[n-1\;n-2\;\hdots\;1\;0]$.

\begin{figure}[t]
	\centering
	\includegraphics[width=0.925\linewidth]{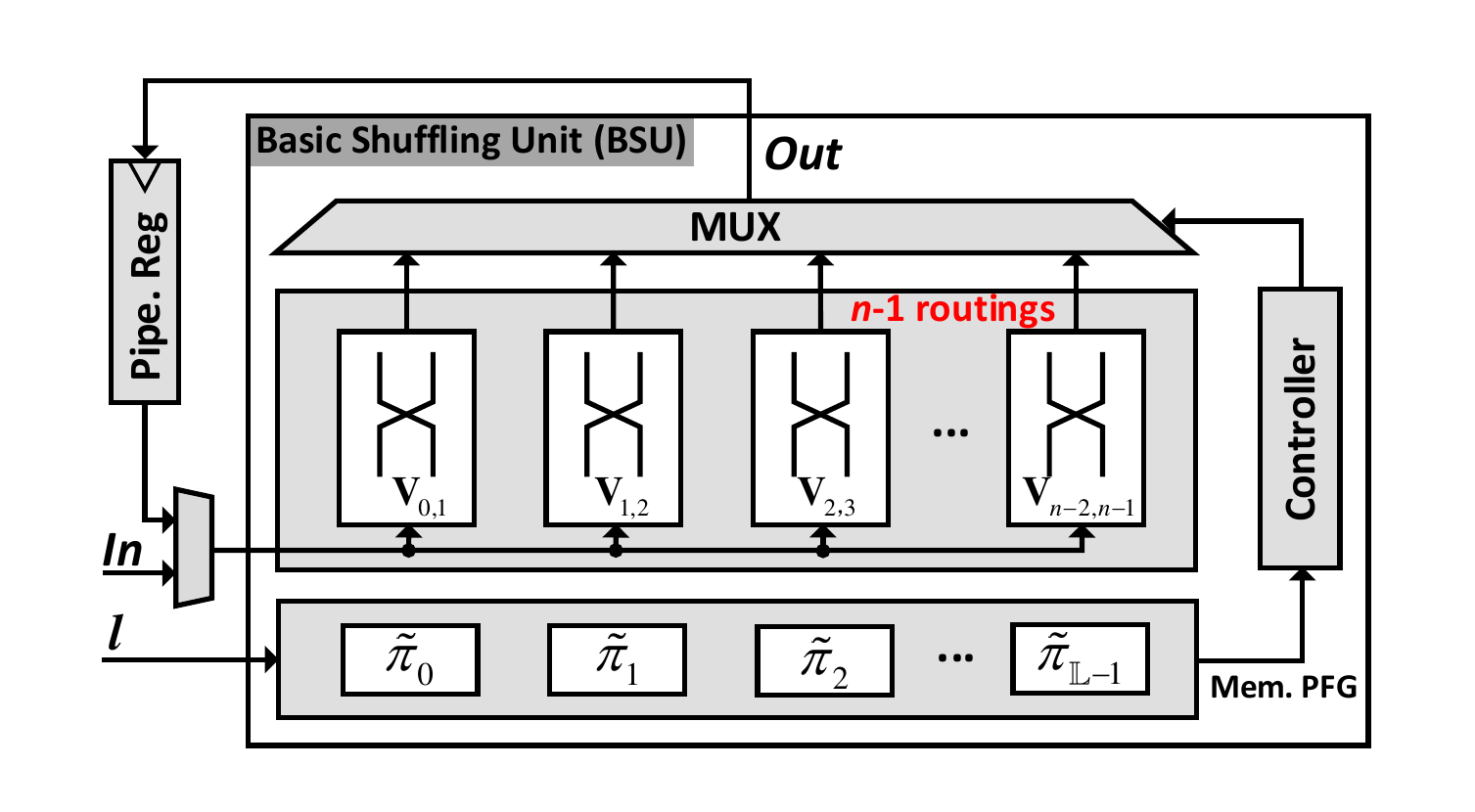}
	\caption{The core architecture of the permutation generation unit (PGU), based on the basic shuffling unit (BSU) and outside registers.}\label{fig:BSU}
\end{figure}

\subsubsection{Comparison With the Bene\v{s} Network}
\pf{To highlight the advantage and significance of our permutation network}, we reproduce \pf{a classical Bene\v{s} network~\cite{benevs1964optimal}} illustrated in Fig.~\ref{fig:benes} to make a fair comparison.
The number of inputs for a regular Bene\v{s} network is a power of two ($N=2^{n}$).
It has $2n-1$ stages, each with $N/2$ switches of size $2\times2$.
However, due to lack of an explicit method to generate control signals for PFGs in the Bene\v{s} network on-the-fly, for each $\set_{l}$ in $\mathcal{L}$, one would need to store $\frac{N\cdot(2n-1)}{2}$ bits in the control memory to configure the `BAR' or `CROSS' states of each $2\times2$ switch, as shown in dashed red lines in Fig.~\ref{fig:benes}.
It is obvious to see that the area of the control memory in the Bene\v{s} network linearly grows with the maximum list size.

\pf{Synthesis results show that in Table~\ref{tab:PGUvsBenes}, for length-$1024$ polar codes, the area overhead of our flexible permutation generator under different list sizes is only $0.076$ mm$^2$ using $28$~nm FD-SOI.}
This network \pf{can support and generate} an arbitrary number of PFG candidates without any area overhead.
Compared to the Bene\v{s} network,\footnote{\pf{The routing overhead of a Bene\v{s} network can be further reduced by a folded architecture, but this is beyond the scope of this paper. Moreover, the control memory occupies $75\%$ and $89\%$ of the area overhead of the Bene\v{s} network for $\mathbb{L}=8$ and $\mathbb{L}=32$, respectively.}} our work has \pf{an} $\{86.9\%\;96.3\%\}$ smaller area when \pf{$\mathbb{L}=8$ and $\mathbb{L}=32$}, respectively.
In terms of the permutation latency, the average $\mathfrak{L}_{\pi}$ of all $10!$ PFGs is $32.5$ CCs, and the maximum $\mathfrak{L}_{\pi}$ is $55$ CCs, which is higher than that of the Bene\v{s} network.
However, to alleviate this issue, we propose an optimized decoupled decoding schedule \pf{well-suited for} the serial architecture, which is discussed in detail in Section~\ref{sec:Sec5_para}.

\begin{figure}[t]
	\centering
	\includegraphics[width=0.98\linewidth]{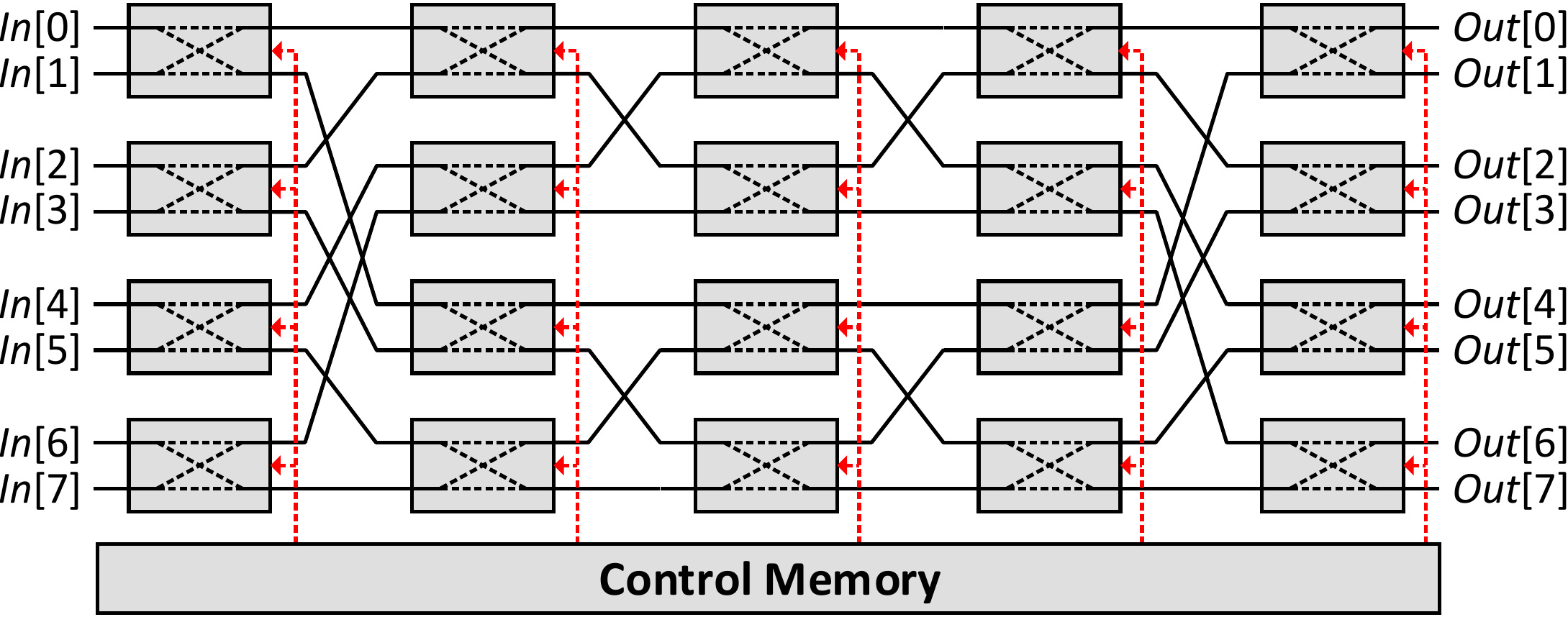}
	\caption{Architecture of a length-$8$ Bene\v{s} network~\cite{benevs1964optimal}.}\label{fig:benes}
\end{figure}

\begin{table}[t]
	\tabcolsep  4.2mm
	\renewcommand{\arraystretch}{1.1}
	\small
	\centering
	\caption{Synthesis results of different permutation networks for length-$1024$ polar codes in 28~nm FD-SOI with a 0.75~ns Target}
	\begin{center}
		{
			\begin{tabular}{lV{3}ccV{3}cc}
				\Xhline{1.2pt}
				& \multicolumn{2}{cV{3}}{This work}  & \multicolumn{2}{c}{Bene\v{s} work~\cite{benevs1964optimal}$^{\dag}$}
				\\ \Xhline{1.2pt}
				\textbf{List}                         & \multicolumn{1}{c|}{$8$} & $32$ & \multicolumn{1}{c|}{$8$}       & $32$
				\\ \hline
				\textbf{Area} {[}mm$^2${]}           & \multicolumn{2}{cV{3}}{$0.076$}    & \multicolumn{1}{c|}{$0.579$}   & $2.055$
				\\ \hline
				\textbf{Avg latency} {[}CC{]}       & \multicolumn{2}{cV{3}}{$32.5$}       & \multicolumn{2}{c}{\multirow{2}{*}{$1$}}
				\\ \cline{1-3}
				\textbf{Max latency} {[}CC{]} & \multicolumn{2}{cV{3}}{$55$}      & \multicolumn{2}{c}{}
				\\ \Xhline{1.2pt}
			\end{tabular}
			\label{tab:PGUvsBenes}
			\begin{tablenotes}
				\footnotesize
				\item[*] $^{\dag}$ \cite{benevs1964optimal} was re-implemented and synthesized \pf{using} $28$~nm FD-SOI.
			\end{tablenotes}
		}
	\end{center}
\end{table}

\subsection{Termination and Detection Module (TDU)}\label{sec:TD}
The TDU is integrated into the decoder and responsible for terminating the decoding once it has converged and for performing the CRC detection between BP decoding.
As described in Section~\ref{sec:bpl_decoder_overview}, we adopt the SA strategy \cite{Sun2016ISCAS,Ji2020TCAS1} to terminate the decoding if the hard decisions in the BPU are identical in three consecutive iterations.
Subsequently, if passing the termination module in Fig.~\ref{fig:BPL_topdesign} or achieving $I_{\max}$ (as the green SA shows in Fig.~\ref{fig:BPLTiming}), the BPL decoder requires 1 CC to calculate the final HD results $\bm{\hat{u}}'$.
Since the decoded codeword $\bm{\hat{u}}'$ is permuted, to perform the CRC detection, we transform $\bm{\hat{u}}'$ to $\bm{\hat{u}}$ with the natural order in the recovery module.
Note that this transformation is only the inverse process of the permutation generation in Section~\ref{subsec:ffgpg}, but the bit-width of this permutation network is only $N$-bit instead of $NQ$-bit.
The latency of recovering $\bm{\hat{u}}$ is $(\mathfrak{L}_{\pi}-n)$ CCs.
If the CRC detection succeeds, $\bm{\hat{u}}$ is output from our BPL decoder.

\subsection{Optimized Decoding Schedule}\label{sec:Sec5_para}
\begin{figure*}[t]
	\centering
	\includegraphics[width=\linewidth]{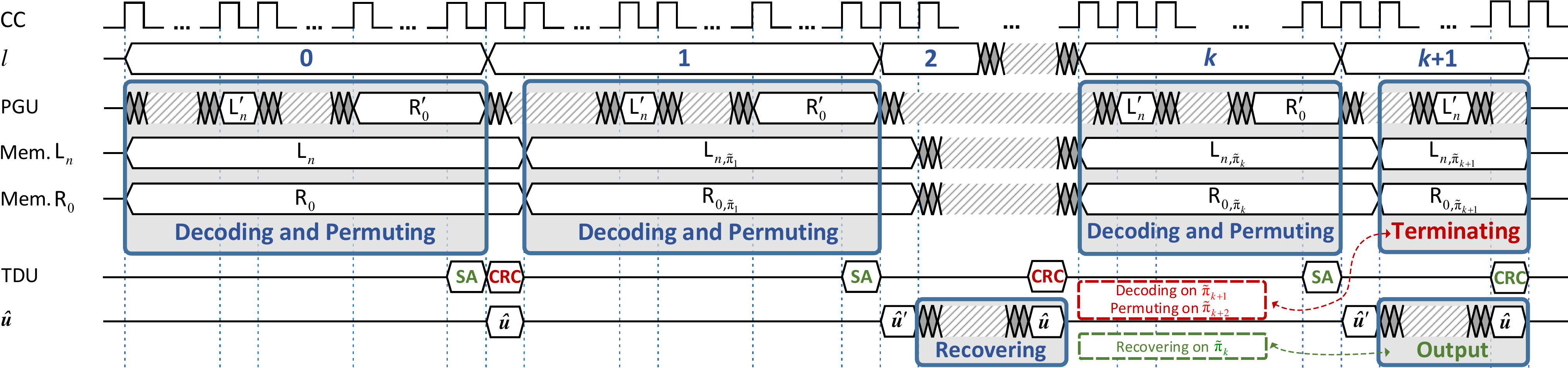}
	\caption{Timing schedule of the proposed BPL decoder, which decodes successfully on $\set_k$.}\label{fig:BPLTiming}
\end{figure*}

To improve the throughput of the BPL decoder, we need to avoid the influence of the latency from the proposed permutation network.
To this end, we propose an optimized decoding schedule which decouples the BPU, PGU, and the recovery modules with pipeline registers to allow them to work in parallel.
First, when the BPU works on $\set_l,l\in[0,\mathbb{L}-1)$, the PGU is activated to shuffle the input LLRs for the next PFG $\set_{l+1}$.
Since there are two kinds of the input LLRs in BP decoding ($\mathsf{R}_{0}$ and $\mathsf{L}_{n}$).
We use the module in Fig.~\ref{fig:BSU} to shuffle $\mathsf{R}_{0}$ and $\mathsf{L}_{n}$ one after another in the same hardware.
The detailed timing schedule of the proposed BPL decoder is illustrated in Fig.~\ref{fig:BPLTiming}.
To distinguish the shuffled input LLRs for different PFGs, we denote $\mathsf{R}_{0,\set_l}$ and $\mathsf{L}_{n,\set_l}$ as the permuted input signals $\mathsf{R}'_{0}$ and $\mathsf{L}'_{n}$ for $\set_l$.
When the BPL decoder performs BP decoding on $\set_l,l\in[0,\mathbb{L}-1)$, after the PGU has already generated $\mathsf{R}_{0,\set_{l+1}}$ and $\mathsf{L}_{n,\set_{l+1}}$, these two signals are temporarily stored into the register $\mathsf{R}_{0}$ and the register $\mathsf{L}_{n}$ that are marked with a dark grey background in Fig.~\ref{fig:BPL_topdesign}.
Once BP decoding on $\set_l$ passes the termination module, the memory $\mathsf{R}_{0}$ and the memory $\mathsf{L}_{n}$ are updated by the register $\mathsf{R}_{0}$ and the register $\mathsf{L}_{n}$ to output $\mathsf{R}_{0,\set_{l+1}}$ and $\mathsf{L}_{n,\set_{l+1}}$ as $\mathsf{R}'_{0}$ and $\mathsf{L}'_{n}$ for the next BP decoding on $\set_{l+1}$.

Moreover, as said in Section~\ref{sec:TD}, the recovery from $\bm{\hat{u}}'$ to $\bm{\hat{u}}$ also harms the throughput of the BPL decoder, since the inverse permutation operations come at the cost of $\mathfrak{L}_{\pi}-n$~CCs.
To deal with this issue, we further decouple the recovery module from the decoding schedule.
This decoupled decoding schedule allows the BPU, the PGU, and the recovery modules to work on $\{\set_{l+1}\;\set_l\;\set_{l-1}\},l\in[1,L-1)$, respectively, which significantly improves the throughput of the BPL decoder and enhances the hardware utilization.
For example, we assume that the BPL decoder has successfully decoded on $\set_k$ in Fig.~\ref{fig:BPLTiming}.
When passing the CRC detection on $\set_k$, the BPL decoder terminates the decoding on $\set_{k+1}$ and the permutation generation on $\set_{k+2}$.

\section{Implementation Results}\label{sec:implementation}
In this section, we present the synthesis results for our BPL implementation.
All synthesis results are based on $28$~nm FD-SOI technology in the typical-typical corner, and we use timing constraints that are not achievable to maximum the operating frequency for our design.
A comparison with the SOA polar decoders is also provided.

\subsection{Quantized Performance}\label{sec:Sec6_quantized}
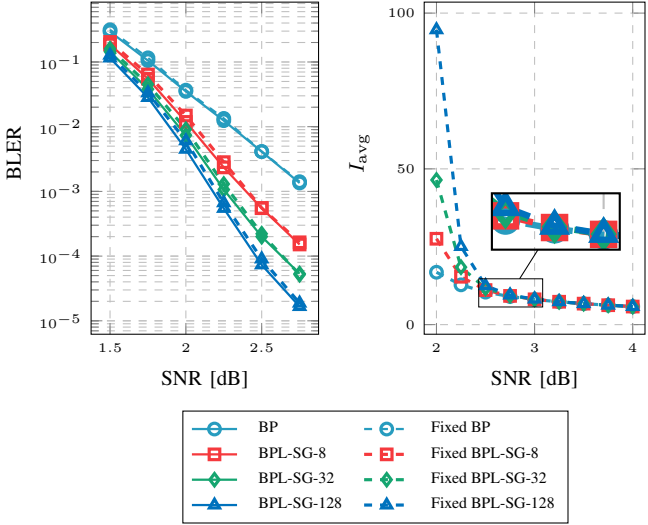
\begin{figure}[t]
	\centering
	\usetikzlibrary{calc,arrows, shapes, positioning, patterns, patterns.meta, snakes, pgfplots.groupplots,spy}

\usepgfplotslibrary{groupplots}
\pgfplotsset{compat=1.16}
\begin{tikzpicture}[spy using outlines]

	\definecolor{myskyblue}{RGB}{39,156,191}
	\definecolor{myred}{RGB}{245, 57, 61}
	\definecolor{mygreen}{RGB}{21, 165, 112}
	\definecolor{myblued}{RGB}{243,165,93} 
	\definecolor{mypurple}{RGB}{0,114,189} 
	\definecolor{myblack}{RGB}{126,47,142} 
	\pgfplotsset{
		label style = {font=\fontsize{9pt}{7.2}\selectfont}, 
		tick label style = {font=\fontsize{9pt}{7.2}\selectfont} 
	}

	\usetikzlibrary{
		matrix,
	}
	\begin{axis}[
		scale = 1,
		ymode=log,
		xlabel={SNR [\text{dB}]},
		xlabel style={yshift=0.0cm}, 
		ylabel={BLER},
		ylabel style={xshift=0.2cm},
		grid=both,
		ymajorgrids=true,
		xmajorgrids=true,
		grid style=dashed,
		xmin = 1.375,
		xmax = 2.875,
		xticklabel style = {font=\tiny},
		yticklabel style = {font=\tiny},
		xlabel style = {font = \footnotesize},
		ylabel style = {font = \footnotesize},
		xshift=-0.5\columnwidth,
		width=0.52\linewidth, height=6cm,
		mark size=2.2,
		legend style={
			anchor={center},
			cells={anchor=west},
			column sep= 2mm, 
			font=\fontsize{6pt}{8}\selectfont, 
		},
		legend columns=2, 
		legend style={at={(1.25,-0.57)}, anchor=south},
		]

	\addplot[
		color=myskyblue,
		mark=o,
		mark options={scale=1,solid,very thick},
		very thick,
		line width=0.3mm,
		mark size=2.2,
		]
		table {
		1.500000e+00 2.940000e-01
		1.750000e+00 1.050000e-01
		2 3.510000e-02
		2.250000e+00 1.240417e-02
		2.500000e+00 4.116463e-03
		2.750000e+00 1.349531e-03
		};
	\addlegendentry{\text{BP}}

	\addplot[
		color=myskyblue,
		mark=o,
		mark options={scale=1,solid,very thick},
		very thick,
		dashed,
		line width=0.3mm,
		mark size=2.2,
		]
		table {
		1.500000e+00 3.142000e-01
		1.750000e+00 1.159500e-01
		2 3.670000e-02
		2.250000e+00 1.342778e-02
		2.500000e+00 4.136567e-03
		2.750000e+00 1.412418e-03
		};
	\addlegendentry{\text{Fixed BP}}

	\addplot[
		color=myred,
		mark=square,
		mark options={scale=1,solid,very thick},
		very thick,
		line width=0.3mm,
		mark size=2,
	]
	table {
		1.500000e+00 1.865000e-01
		1.750000e+00 5.510000e-02
		2 1.185000e-02
		2.250000e+00 2.350000e-03
		2.500000e+00 5.500000e-04
		2.750000e+00 1.511905e-04
	};
	\addlegendentry{\text{BPL-SG-8}}

	\addplot[
		color=myred,
		mark=square,
		dashed,
		line width=0.3mm,
		mark size=2,
		mark options={scale=1,solid,very thick},
		very thick,
	]
	table {
		1.500000e+00 2.041000e-01
		1.750000e+00 6.405000e-02
		2 1.490000e-02
		2.250000e+00 2.833333e-03
		2.500000e+00 5.545455e-04
		2.750000e+00 1.635135e-04
	};
	\addlegendentry{\text{Fixed BPL-SG-8}}

	\addplot[
		color=mygreen,
		mark=diamond,
		mark options={scale=1,solid,very thick},
		very thick,
		line width=0.3mm,
		mark size=2.32,
	]
	table {
		1.500000e+00 1.440500e-01
		1.750000e+00 3.800000e-02
				2 7.350000e-03
		2.250000e+00 1.066667e-03
		2.500000e+00 2.000000e-04
		2.750000e+00 5.347826e-05
	};
	\addlegendentry{\text{BPL-SG-32}}

	\addplot[
		color=mygreen,
		mark=diamond,
		dashed,
		line width=0.3mm,
		mark size=2.3,
	    mark options={scale=1,solid,very thick},
	    very thick,
	]
	table {
		1.500000e+00 1.595500e-01
		1.750000e+00 4.515000e-02
		2 9.100000e-03
		2.250000e+00 1.300000e-03
		2.500000e+00 2.214286e-04
		2.750000e+00 5.169492e-05
	};
	\addlegendentry{\text{Fixed BPL-SG-32}}



	\addplot[
		color=mypurple,
		mark=triangle,
		mark options={scale=1,solid,very thick},
		very thick,
		line width=0.3mm,
		mark size=2.2,
	]
	table {
		1.500000e+00 1.152000e-01
		1.750000e+00 2.825000e-02
		2 4.475000e-03
		2.250000e+00 5.458333e-04
		2.500000e+00 7.317073e-05
		2.750000e+00 1.660784e-05
	};
	\addlegendentry{\text{BPL-SG-128}}

	\addplot[
		color=mypurple,
		mark=triangle,
		dashed,
		line width=0.3mm,
		mark size=2.2,
		mark options={scale=1,solid,very thick},
		very thick,
	]
	table {
		1.500000e+00 1.273500e-01
		1.750000e+00 3.330000e-02
		2 6.050000e-03
		2.250000e+00 6.666667e-04
		2.500000e+00 9.029851e-05
		2.750000e+00 1.875000e-05
		};
	\addlegendentry{\text{Fixed BPL-SG-128}}
	\end{axis}

	\begin{axis}[
 		scale = 1,
		xlabel={SNR [\text{dB}]},
		xlabel style={yshift=0.0cm}, 
		ylabel={$I_{\mathrm{avg}}$},
		ylabel style={xshift=0.2cm},
		grid=both,
		ymajorgrids=true,
		xmajorgrids=true,
		grid style=dashed,
		xmin = 1.875,
		xmax = 4.125,
		xticklabel style = {font=\tiny},
		yticklabel style = {font=\tiny},
		xlabel style = {font = \footnotesize},
		ylabel style = {font = \footnotesize},
		width=0.51\linewidth, height=6cm,
		mark size=3.2,
		]

	\addplot[
		color=myskyblue,
		mark=o,
		dashed,
		line width=0.3mm,
		mark size=2.2,
		mark options={scale=1,solid,very thick},
		very thick,
		]
	table {
		2 1.681969e+01
		2.250000e+00 1.277504e+01
		2.500000e+00 1.042848e+01
		2.750000e+00 9.024070e+00
		3 8.058145e+00
		3.250000e+00 7.328750e+00
		3.500000e+00 6.734750e+00
		3.750000e+00 6.236230e+00
		4 5.804815e+00
	};

	\addplot[
		color=myred,
		mark=square,
		dashed,
		line width=0.3mm,
		mark size=2,
		mark options={scale=1,solid,very thick},
		very thick,
	]
	table {
		2 2.756343e+01
		2.250000e+00 1.536767e+01
		2.500000e+00 1.109617e+01
		2.750000e+00 9.220130e+00
		3 8.122475e+00
		3.250000e+00 7.345125e+00
		3.500000e+00 6.739735e+00
		3.750000e+00 6.236620e+00
		4 5.804950e+00
	};

	\addplot[
		color=mygreen,
		mark=diamond,
		dashed,
		line width=0.3mm,
		mark size=2.3,
		mark options={scale=1,solid,very thick},
		very thick,
	]
	table {
		2 4.643039e+01
		2.250000e+00 1.844113e+01
		2.500000e+00 1.167778e+01
		2.750000e+00 9.328035e+00
		3 8.156660e+00
		3.250000e+00 7.346140e+00
		3.500000e+00 6.740035e+00
		3.750000e+00 6.236620e+00
		4 5.804950e+00
	};


	\addplot[
		color=mypurple,
		mark=triangle,
		dashed,
		line width=0.3mm,
		mark size=2.2,
		mark options={scale=1,solid,very thick},
		very thick,
	]
	table {
		2 9.463774e+01
		2.250000e+00 2.507674e+01
		2.500000e+00 1.248745e+01
		2.750000e+00 9.490565e+00
		3 8.193695e+00
		3.250000e+00 7.346140e+00
		3.500000e+00 6.740035e+00
		3.750000e+00 6.236620e+00
		4 5.804950e+00
	};

\end{axis}

\spy[rectangle, draw, width=1.7cm, height=0.75cm,magnification=2, connect spies] on (1.15,0.55) in node at (1.75,1.5);

\end{tikzpicture}%
	\caption{BLER comparison between floating-point and fixed-point and $I_{\mathrm{avg}}$ of the BPL decoder equipped with the proposed near-optimal set for $(1024,512)$ polar codes, $\mathbb{L}\in\{8,32,128\}$, and $I_{\max}=50$.}
	\label{fig:Quan_3}
\end{figure}

In Fig.~\ref{fig:Quan_3}, under the GenAlg construction, we present the BLER performance and $I_{\mathrm{avg}}$ of our BPL decoder using floating-point and fixed-point (2's complement).
Let $Q_{q_i.q_f}$ denote a fixed-point number with one sign-bit, $q_i-q_f-1$ integer bits, and $q_f$ fractional bits.
We adopt $Q_{7.2}$ for the LLRs in Fig.~\ref{fig:Quan_3}.
Numerical results show that $Q_{7.2}$ almost approaches the floating-point performance of OMS polar decoding with $[\beta_{\mathsf{R}}\;\beta_{\mathsf{L}}]=[0.25\;0]$ providing a well-balanced trade-off between the error-correcting performance and hardware complexity.
For the $I_{\avg}$ of our BPL decoder, the influence of the list size is mainly reflected in the low SNR regions, which achieves $47.8$ and $98.5$ iterations for $\mathbb{L}\in\{32\;128\}$ at SNR~$=2.0$~dB, respectively.
However, as SNR increases, the $I_{\avg}$ of the BPL decoder converges rapidly to that of the BP decoder, which achieves $I_{\avg}=5.81$ for $\mathbb{L}\in\{32\;128\}$.

\subsection{Latency Analysis}\label{sec:Sec6_latency}
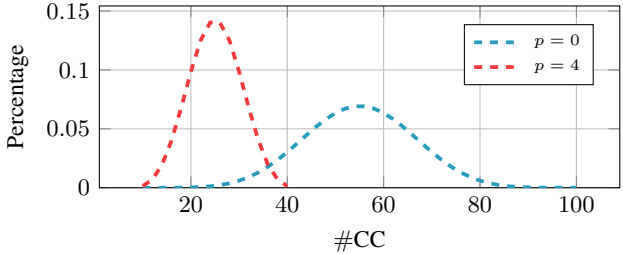
\begin{figure}[t]
	\centering
\pgfplotsset{compat=1.16}
\begin{tikzpicture}
\definecolor{mygrey}{RGB}{169,169,169}
\definecolor{myskyblue}{RGB}{39,156,191}
\definecolor{myred}{RGB}{245, 57, 61}
\definecolor{mygreen}{RGB}{21, 165, 112}
\definecolor{myblued}{RGB}{243,165,93} 
\definecolor{mypurple}{RGB}{0,114,189} 
\definecolor{myblack}{RGB}{126,47,142} 
\pgfplotsset{
	label style = {font=\fontsize{9pt}{7.2}\selectfont}, 
	tick label style = {font=\fontsize{9pt}{7.2}\selectfont} 
}
	\begin{axis} [
		scale = 1,
		grid=both,
    	yminorgrids=true,
		xminorgrids=true,
		height=4cm,
		width=8.5cm,
		ymin=0,
		xlabel = {$\#$CC},
		ylabel = {Percentage},
		legend style={
			anchor={center},
			cells={anchor=west},
			column sep= 1mm, 
			font=\fontsize{6pt}{8}\selectfont, 
		},
		legend columns=1, 
		legend style={at={(0.95,0.9)}, anchor=north east},
    yticklabel style={
        /pgf/number format/fixed,
        /pgf/number format/precision=5
    },
    scaled y ticks=false
		]
		\addplot[
			color=myskyblue,
			dashed,
			line width=0.5mm,
		]
		coordinates {
			(10,2.755732e-07)
			(12,2.480159e-06)
			(14,1.212522e-05)
			(16,4.271384e-05)
			(18,1.212522e-04)
			(20,2.943122e-04)
			(22,6.332672e-04)
			(24,1.237048e-03)
			(26,2.230765e-03)
			(28,3.758818e-03)
			(30,5.971671e-03)
			(32,9.006559e-03)
			(34,1.296379e-02)
			(36,1.788167e-02)
			(38,2.371418e-02)
			(40,3.031581e-02)
			(42,3.743744e-02)
			(44,4.473573e-02)
			(46,5.179646e-02)
			(48,5.817047e-02)
			(50,6.341793e-02)
			(52,6.715553e-02)
			(54,6.909970e-02)
			(56,6.909970e-02)
			(58,6.715553e-02)
			(60,6.341793e-02)
			(62,5.817047e-02)
			(64,5.179646e-02)
			(66,4.473573e-02)
			(68,3.743744e-02)
			(70,3.031581e-02)
			(72,2.371418e-02)
			(74,1.788167e-02)
			(76,1.296379e-02)
			(78,9.006559e-03)
			(80,5.971671e-03)
			(82,3.758818e-03)
			(84,2.230765e-03)
			(86,1.237048e-03)
			(88,6.332672e-04)
			(90,2.943122e-04)
			(92,1.212522e-04)
			(94,4.271384e-05)
			(96,1.212522e-05)
			(98,2.480159e-06)
			(100,2.755732e-07)
		};
		\addlegendentry{$p=0$}
		\addplot[
			color=myred,
			dashed,
			line width=0.5mm,
		]
		coordinates {
			(10,1.388889e-03)
			(12,6.944444e-03)
			(14,1.944444e-02)
			(16,4.027778e-02)
			(18,6.805556e-02)
			(20,9.861111e-02)
			(22,1.250000e-01)
			(24,1.402778e-01)
			(26,1.402778e-01)
			(28,1.250000e-01)
			(30,9.861111e-02)
			(32,6.805556e-02)
			(34,4.027778e-02)
			(36,1.944444e-02)
			(38,6.944444e-03)
			(40,1.388889e-03)
		};
		\addlegendentry{$p=4$}
	\end{axis}
\end{tikzpicture}
	\caption{Latency distribution for permuting $\mathsf{R}_{0}$ or $\mathsf{L}_{n}$ of all PFGs for length-$1024$ polar codes, based on the proposed permutation network.}
	\label{fig:latency_bar}
\end{figure}

Fig.~\ref{fig:latency_bar} shows the latency distribution for permuting $\mathsf{R}_{0}$ and $\mathsf{L}_{n}$ of all PFGs for length-$1024$ polar codes (i.e., $p=0$) based on the PGU.
For any PFG, the latency of the PGU is denoted by $\mathfrak{L}'_{\pi}$ in~\eqref{eq:Sec6_latency_BPL_pi}
\begin{equation}\label{eq:Sec6_latency_BPL_pi}
	\mathfrak{L}'_{\pi}=2\mathfrak{L}_{\pi}-n=2\sum_{i=0}^{n-1}|\pi^{i}_{s_{i}}-i| + n,
\end{equation}
where the multiplication with $2$ is due to the reuse of the BSU module in Fig.~\ref{fig:BPL_topdesign} to shuffle $\mathsf{R}_{0}$ and $\mathsf{L}_{n}$.
This latency distribution presents an approximately normal distribution trend: the minimum is $10$~CCs and the maximum is $100$~CCs.
Note that, for all $10!$ PFGs, $\mathfrak{L}'_{\pi}$ of $96\%$ PFGs is lower than $80$ CCs.
Subsequently, when combined with the aforementioned BPL-SG algorithm, if we set $p=4$ in~\eqref{eq:Sec3_k} (left stages fixed as $[m_{0}\;m_{1}\;m_{2}\;m_{3}]$) to efficiently decrease the search space of PFGs, the dynamic range of the $\mathfrak{L}'_{\pi}$ distribution rapidly narrows, i.e., the maximum $\mathfrak{L}'_{\pi}$ has a $60\%$ reduction from $100$~CCs to $40$~CCs, as shown in Fig.~\ref{fig:latency_bar}.

\begin{figure*}[b]
	\hrulefill
	\begin{equation}\label{eq:Sec6:laten}
		\begin{aligned}
			\mathfrak{L}_{\rm{BPL}}=\sum_{l=0}^{k}\max\left(\underbrace{(n-1)\cdot 	I_{\set_l},}_{\mathrm{decoding\;on\;\set_{\emph{l}}}}\underbrace{\mathfrak{L}'_{\set_{l+1}},}_{\mathrm{permutation\;on\;\set_{\emph{l}+1}}}\underbrace{\mathfrak{L}_{\set_{l-1}}-n}_{\mathrm{recovery\;on\;\set_{\emph{l}-1}}}\right)+\underbrace{k+1}_{\mathrm{calculate\;}\bm{\hat{u}}'}+\underbrace{\mathfrak{L}_{\set_k}-n}_{\mathrm{recovery\;on\;\set_{\emph{k}}}},k\in[0,\mathbb{L}).\\
		\end{aligned}
	\end{equation}
\end{figure*}

Consequently, the whole decoding latency of our BPL decoder is calculated by~\eqref{eq:Sec6:laten}, where $I_{\set_l}$ denotes a practical iteration number of $\set_l$ and $k$ is the index of the first PFG that delivers a successfully decoded codeword.
Note that, to satisfy the uniformity of~\eqref{eq:Sec6:laten}, we set $\mathfrak{L}'_{\set_{\mathbb{L}}}=\varnothing$, $\mathfrak{L}_{\set_{-1}}=\varnothing$, and $\mathfrak{L}_{\OFG}=0$.
The $\max$ operation represents how the proposed decoupled decoding schedule alleviates the influence of the permutation latency and of the recovery latency.
If we keep $I_{\max}$ relatively large, the decoding latency on $\set_l$ is always dominating in the $\max$ term.
Numerical results show that when $I_{\max}\geq15$,~\eqref{eq:Sec6:laten} can be simplified as
\begin{equation}\label{eq:Sec6_appr_laten}
	\mathfrak{L}_{\mathrm{BPL}}\approx\sum_{l=0}^{k}\left(\left(n-1\right)\cdot I_{\set_l}\right)+k+1+\mathfrak{L}_{\set_k}-n,
\end{equation}
where $k\in[0,\mathbb{L})$.
The average latency of our BPL decoder for $(1024,512)$ polar codes with $\mathbb{L}=32$ and $I_{\max}=50$ is only $53.25$ CCs at SNR~$=4$~dB.

\subsection{Comparisons With Previous Works}\label{sec:Sec6_comp}
\begin{table*}[t]
	\tabcolsep 0.28mm
	\renewcommand{\arraystretch}{1.1}
	\footnotesize
	\centering
	\caption{comparison with the soa polar decoders for $(1024,512)$ polar codes}
	\begin{center}
			{
					\begin{tabular}{l V{3} c | c | c | c V{3} c | c | c | c | c | c | c | c | c }
							\Xhline{1.2pt}
							\multirow{2}{*}{\textbf{Decoders}} & \multicolumn{4}{cV{3}}{\multirow{2}{*}{\textbf{This work}}} & [SSCL'22] & [TCAS-I'19] & [TVLSI'17] & [TCAS-II'20] & [TCOM'20] & [TSP'17] & [TCAS-I'20] & \pf{[TSP'20]} & [TSP'22]\\
							& \multicolumn{4}{cV{3}}{} & \cite{Su2022JSSCL} & \cite{chen2019} & \cite{Abbas17High} & \cite{Shen2020Tcas} & \cite{shen2020improved} & \cite{FastSSCL2017} & \cite{Eran2020} & \cite{Lee2020TSP} & \cite{ren2022sequence}\\
							\Xhline{1.2pt}
							\textbf{Algorithm} &  \multicolumn{4}{cV{3}}{\textbf{BPL}$^{\ddag}$}  &  \textbf{BP}   &  \textbf{BP}   &   \textbf{BP}  &  \textbf{EBPF}  &  \textbf{GBPF-MS}  &  \textbf{Fast-SSCL}  &   \textbf{Fast-SSCF} & \pf{\textbf{Fast-SCLF}} & \textbf{SR-List} \\
							\hline
							\textbf{Process} [nm] & \multicolumn{3}{c|}{$28$} & $\pf{65}$ & $40$  & $40$ & $65$  & $65$ & $40$ & $65$ & $65$ & \pf{$90$} & $28$\\
							\hline
							\textbf{Quantization} [bit] & \multicolumn{3}{c|}{$7$} & \multirow{4}{*}{$\pf{*}$} & $6$ & $5$  & $5$ & $7$  & $6$ & $6$ & $6$ & \pf{$6$} & $6$\\
							\cline{1-4} \cline{6-14}
							\textbf{List/Attempt} &  $8$  &  $32$ & $128$ & & $-$  & $-$ & $-$ & $20$  & $10$ & $4$ & $20$ & \pf{$4/128$} & $4$\\
							\cline{1-4} \cline{6-14}
							SNR$@$BLER~$=10^{-4}$ & $2.85$ & $2.65$ & $2.50$ & & \pf{$-$} & \pf{$3.9$} & \pf{$3.75$}  & \pf{$3.25$} & \pf{$2.74$} & $2.65$ & \pf{$3.00$}  & \pf{$2.50$} & $2.65$\\
							\cline{1-4} \cline{6-14}
							\textbf{Avg Iter./Attempt} &  \multicolumn{3}{c|}{$5.81^{\dag}$} & &  \pf{$4.4^{\dag}$}  & \pf{$7.36^{\dag}$} &  \pf{$6.34^{\dag}$}  & \pf{$4.34^{\dag}$} & \pf{$7.17$}  & $-$  & $1.01$  & \pf{$1.01$} & $-$\\
							\Xhline{1.2pt}
							\textbf{Area} [mm$^2$] & \multicolumn{3}{c|}{$0.87$} & $\pf{2.39}$ & $2.07$ & $0.704$ & $1.60$ & $3.11$ & $0.946$ & $1.822$ & $0.56$ & \pf{$2.83$} & $0.286$\\
							\hline
							\textbf{Frequency} [MHz] & \multicolumn{3}{c|}{$1333$} & $\pf{352}$ & $150$ & $500$ & $334$ & $319$ & $806$ & $840$ & $455$ & \pf{$615$} & $1255$\\
							\hline
							\textbf{Worst-Case T/P}$^{\star}$ [Gb/s] & $0.37$ & $0.09$ & $0.02$ & $\pf{*}$ & \pf{$0.16$} & \pf{$1.12$} & \pf{$1.36$} & \pf{$0.02$} & \pf{$0.06$} & $1.61$ & $0.076$ & \pf{$0.012$} & $3.62$\\
							\hline
							\textbf{Coded T/P} [Gb/s] & \multicolumn{3}{c|}{$25.63^{\dag}$} & $\pf{6.76^{\dag}}$ & \pf{$1.85^{\dag}$} & \pf{$7.61^{\dag}$} & \pf{$10.7^{\dag}$} & \pf{$3.72^{\dag}$} & \pf{$4.19$} & $1.61$ & $1.51$ & \pf{$1.52$} & $3.62$\\
							\hline
							\textbf{Area Eff.} [Gbps/mm$^2$] & \multicolumn{3}{c|}{$29.46^{\dag}$} & $\pf{2.83^{\dag}}$ & $0.894^{\dag}$ & $10.81^{\dag}$ & $6.687^{\dag}$ & $1.20^{\dag}$ & $4.43$ & $0.883$ & $2.71$ & $0.535$ & $12.67$\\
							\Xhline{1.2pt}
							\multicolumn{9}{l}{\pf{Normalized to $65$ nm, $1.0$ V$^{\S}$}}\\
							\Xhline{1.2pt}
							\textbf{Coded T/P} [Gbps] & \multicolumn{4}{cV{3}}{$\pf{6.76}$} & \pf{$1.14$} & \pf{$4.68$} & \pf{$10.7$} & \pf{$3.72$} & \pf{$2.58$} & \pf{$1.61$} & \pf{$1.51$} & \pf{$2.10$} & \pf{$1.56$}\\
							\hline
							\textbf{Area Eff.} [Gbps/mm$^2$] & \multicolumn{4}{cV{3}}{$\pf{2.83}$} & \pf{$0.208$} & \pf{$2.519$} & \pf{$6.687$} & \pf{$1.20$} & \pf{$1.032$} & \pf{$0.883$} & \pf{$2.71$} & \pf{$1.420$} & \pf{$1.012$}\\
							\Xhline{1.2pt}
						\end{tabular}\label{tab:tab_comp}}
			\begin{tablenotes}
					\footnotesize
					\item[*] $^{\dag}$ Average results reported at SNR $=4.0$ dB.
					\item[*] $^{\star}$ \pf{Worst-case results estimated at $I_{\max}=50$ for BP-based decoders.}
					\item[*] $^{\ddag}$ This work employs the double-column bidirectional-propagation architecture.
          \item[*] $^{*}$ \pf{Identical to the parameters for the $28$~nm results.}
					\item[*] $^{\S}$ Normalized to $65$ nm technology: area $\varpropto$ $s^2$ and frequency $\varpropto$ 1/$s$, \pf{where $s$ is the scaling factor to $65$ nm}.
				\end{tablenotes}
		\vspace{-0.1cm}
		\end{center}
\end{table*}

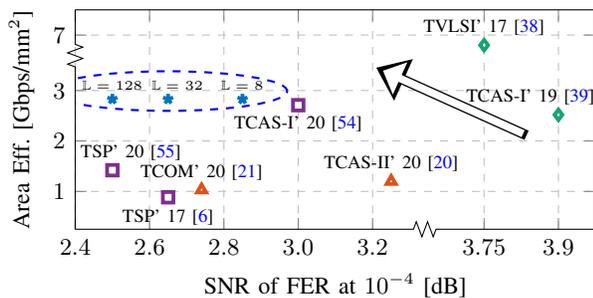
\begin{figure}[t]
	\centering
\pgfplotsset{compat=1.16}
\usetikzlibrary{arrows.meta, bending}
\begin{tikzpicture}
\usepgflibrary{decorations.pathmorphing}
\usepgflibrary[decorations.pathmorphing]
\usetikzlibrary{decorations.pathmorphing}
\usetikzlibrary[decorations.pathmorphing]

\usetikzlibrary{arrows}
\usetikzlibrary{shapes}

\definecolor{myblued}{RGB}{0,114,189}
\definecolor{myred}{RGB}{217,83,25}
\definecolor{myredd}{RGB}{248,74,173}
\definecolor{myyellow}{RGB}{237,137,32}
\definecolor{mypurple}{RGB}{126,47,142}
\definecolor{myblues}{RGB}{77,190,238}
\definecolor{mygreen}{RGB}{21, 165, 112}
\pgfplotsset{
	label style = {font=\fontsize{9pt}{7.2}\selectfont}, 
	tick label style = {font=\fontsize{9pt}{7.2}\selectfont} 
}
	\begin{axis} [
		scale = 1,
		grid=both,
		height=4.5cm,
		width=8.5cm,
		ymin=0.25,
		ymax=4.6,
		xmin=2.4,
		xmax=3.8,
		xlabel = {SNR of FER at $10^{-4}$ [dB]},
		ylabel = {Area Eff. [Gbps/mm$^2$]},
		xtick={2.4, 2.6, 2.8, 3.0, 3.2, 3.5, 3.7},
		xticklabels={$2.4$, $2.6$, $2.8$, $3.0$, $3.2$, $3.75$, $3.9$},
		ytick={1, 2, 3, 4.1},
		yticklabels={$1$, $2$, $3$, $7$},
		ymajorgrids=true,
		xmajorgrids=true,
		grid style=dashed,
		separate axis lines,
		x axis line style= { draw opacity=0 },
		y axis line style= { draw opacity=0 },
        thick
		]
		\addplot[
			color=myblued,
			very thick,
			mark = star,
			line width=0.5mm,
		]
		coordinates {
			(2.85,2.83)
		};
	
		\addplot[
			color=myblued,
			very thick,
			mark = star,
			line width=0.5mm,
		]
		coordinates {
			(2.65,2.83)
		};
	
		\addplot[
			color=myblued,
			very thick,
			mark = star,
			line width=0.5mm,
		]
		coordinates {
			(2.50,2.83)
		};

		\addplot[
			color=myred,
			very thick,
			mark = triangle,
			line width=0.5mm,
		]
		coordinates {
			(3.25,1.20)
		};
	
		\addplot[
			color=myred,
			very thick,
			mark = triangle,
			line width=0.5mm,
		]
		coordinates {
			(2.74,1.032)
		};
	
		\addplot[
			color=mypurple,
			very thick,
			mark = square,
			line width=0.5mm,
		]
		coordinates {
			(2.65, 0.883)
		};
	
		\addplot[
			color=mypurple,
			very thick,
			mark = square,
			line width=0.5mm,
		]
		coordinates {
			(2.50, 1.420)
		};
	
		\addplot[
			color=mypurple,
			very thick,
			mark = square,
			line width=0.5mm,
		]
		coordinates {
			(3.00, 2.71)
		};
	
		\addplot[
			color=mygreen,
			very thick,
			mark = diamond,
			line width=0.5mm,
		]
		coordinates {
			(3.7, 2.519)
		};
	
		\addplot[
			color=mygreen,
			very thick,
			mark = diamond,
			line width=0.5mm,
		]
		coordinates {
			(3.5, 3.9)
		};

		\node[black, above] at (axis cs:2.5, 2.83){\tiny{$\mathbb{L}=128$}};
		\node[black, above] at (axis cs:2.675, 2.83){\tiny{$\mathbb{L}=32$}};
		\node[black, above] at (axis cs:2.85, 2.83){\tiny{$\mathbb{L}=8$}};
		\node[draw=blue, dashed, ellipse, minimum width=90pt, minimum height=15pt, align=center] at (axis cs:2.65, 2.99){};
		
		\node[black, above] at (axis cs:3.25,1.20){\scriptsize{TCAS-II' 20~\cite{Shen2020Tcas}}};
		\node[black, above] at (axis cs:2.74,1.032){\scriptsize{TCOM' 20~\cite{shen2020improved}}};	
		
		\node[black, above] at (axis cs:2.65, 0.15){\scriptsize{TSP' 17~\cite{FastSSCL2017}}};
		\node[black, above] at (axis cs:2.55, 1.420){\scriptsize{TSP' 20~\cite{Lee2020TSP}}};	
		\node[black, above] at (axis cs:3.00, 2.0){\scriptsize{TCAS-I' 20~\cite{Eran2020}}};
	
		\node[black, above] at (axis cs:3.625, 2.519){\scriptsize{TCAS-I' 19~\cite{chen2019}}};
		\node[black, above] at (axis cs:3.5, 3.9){\scriptsize{TVLSI' 17~\cite{Abbas17High}}};
		
		\path[-] (rel axis cs:0,0)     coordinate(botstart)
		--(rel axis cs:0.65,0)coordinate(interruptbotA)
		(rel axis cs:0.70,0)  coordinate(interruptbotB)
		--(rel axis cs:1,0)   coordinate(botstop);

		\path[-] (rel axis cs:0,1)     coordinate(topstart)
		--(rel axis cs:1,1)   coordinate(topstop);
		
		\path[-] (rel axis cs:0,0)     coordinate(leftstart)
		--(rel axis cs:0,0.74)coordinate(interruptleftA)
		(rel axis cs:0,0.865)  coordinate(interruptleftB)
		--(rel axis cs:0,1)   coordinate(leftstop);
		
		\path[-] (rel axis cs:1,0)     coordinate(rightstart)
		--(rel axis cs:1,1)   coordinate(rightstop);
		
	\end{axis}
\draw(botstart)-- (interruptbotA) decorate[decoration={zigzag,segment length = 1.5mm, amplitude = 1mm}]{--(interruptbotB)} -- (botstop);
\draw(topstart)-- (topstop);
\draw(leftstart)-- (interruptleftA) decorate[decoration={zigzag,segment length = 1.5mm, amplitude = 1mm}]{--(interruptleftB)} -- (leftstop);
\draw(rightstart)-- (rightstop);

\path[draw=black,   line width=1.35mm, -{Triangle[length=4mm]}]    
(6,1.25) to    (4,2.15);
\path[draw=white, line width=1mm, -{Triangle[length=2.5mm]}, shorten >=1mm, shorten <=0.5mm]    
(6,1.25) to   (4,2.15);
\end{tikzpicture}
	\caption{\pf{SNR of FER at $10^{-4}$ vs. area efficiency (normalized to $65$~nm) for various SOA decoders for polar codes.}}
	\label{fig:performace_comp}
\end{figure}

In Table~\ref{tab:tab_comp}, we present the implementation results of the proposed BPL decoder using $28$~nm FD-SOI and compare them with the SOA architectures in~\cite{Abbas17High, chen2019, Su2022JSSCL, Shen2020Tcas, shen2020improved, FastSSCL2017, Eran2020}.
\pf{To ensure a fair comparison, we also implement our work based on $65$~nm CMOS technology since most published polar decoders use this process.}
For $\mathbb{L}=32$, our decoder with a near-optimal PFG set can achieve the error-correcting performance of BLER~$=10^{-4}$ at SNR~$=2.65$~dB, which is similar to that of Fast-SSCL-$4$ in~\cite{FastSSCL2017} and better than other BP and BPF works.
In terms of implementation results, our work has no advantage in terms of worst-case throughput, which is a common problem for a serial architecture.
However, our BPL decoder has an average throughput of $25.63$ Gbps,
\pf{which is $1.44\times$, $1.82\times$, $4.48\times$, and $4.33\times$ higher than the SOA BP~\cite{chen2019}, BPF~\cite{Shen2020Tcas}, SC flip (SCF)~\cite{Eran2020}, and SCL decoders~\cite{ren2022sequence}, respectively.}
\pf{Our area efficiency is $57.6\%$ lower than~\cite{Abbas17High}, but we see a significant $1.25$ dB SNR improvement when $\mathbb{L}=128$.}
Compared to other advanced BP decoders, such as BPF, the area efficiency of our work is \pf{$2.36\times$ and $2.74\times$} higher than that of~\cite{Shen2020Tcas} and~\cite{shen2020improved}, respectively.
\pf{Fig.~\ref{fig:performace_comp} plots the SNR that is required for a FER of $10^{-4}$ against area efficiency for various SOA polar decoders listed in Table~\ref{tab:tab_comp}, where our BPL decoder with $\mathbb{L}\in\{8\;32\;128\}$ achieves excellent performance, as shown by its placement in the top left corner of the graph.}
It is notable that, different from the SCL works~\cite{FastSSCL2017,ren2022sequence} that instantiate $\mathbb{L}$ independent SC decoders, our BPL decoder reuses a single BP decoder based on a serial decoding schedule and generates flexible permutations on-the-fly during the real-time decoding.
Therefore, even if the list size is increased, the area and operating frequency of our BPL decoder~are~not~affected.

\section{Conclusion}\label{sec:conclusion}
In this paper, we present an efficient BPL decoder implementation, which supports flexible permutation generation.
In terms of the algorithmic contributions, we propose a sequential generation algorithm to obtain a near-optimal PFG set.
Subsequently, we propose a hardware-friendly algorithm to generate flexible routings for permutations in hardware on-the-fly by a matrix decomposition.
On the architecture level, we present the BPL decoder with several optimizations to significantly reduce the hardware complexity and decoding latency, such as the flexible permutation generator and decoupled decoding schedule.
Synthesis results show that our BPL decoder can achieve a throughput of $25.63$ Gbps and an area efficiency of $29.46$ Gbps/mm$^2$ at SNR~$=4.0$~dB, which outperforms other existing SOA BP and BPF decoders.
Moreover, our BPL decoder efficiently implements the decoding on multiple PFGs and provides inspiration for \pf{the optimizations of long codes} and the implementation of generalized AE~decoding.

\bibliography{IEEEabrv,mybib}
\bibliographystyle{IEEEtran}
\end{document}